\newif\ifreport\reporttrue
\newtheorem{definition}{Definition}
\newtheorem{theorem}{Theorem}
\newtheorem{lemma}{Lemma}
\newtheorem{proposition}{Proposition}
\newtheorem{corollary}{Corollary}
\begin{document}

\setcopyright{acmlicensed}
\acmJournal{POMACS}
\acmYear{2023} \acmVolume{7} \acmNumber{3} \acmArticle{60} \acmMonth{12} \acmPrice{15.00}\acmDOI{10.1145/3626791}

\title[]{Sampling for Remote Estimation of the Wiener Process over an Unreliable Channel}
\author{Jiayu Pan}
\email{pan.743@osu.edu}
\affiliation{%
  \institution{The Ohio State University}
  \city{Columbus}
  \state{OH}
  \country{USA}
  \postcode{43210}
}
\author{Yin Sun}
\email{yzs0078@auburn.edu}
\affiliation{%
  \institution{Auburn University}
  \city{Auburn}
  \state{AL}
  \country{USA}
  \postcode{36849}
}
\author{Ness B. Shroff}
\email{shroff.11@osu.edu}
\affiliation{%
  \institution{The Ohio State University}
  \city{Columbus}
  \state{OH}
  \country{USA}
  \postcode{43210}
}

\begin{CCSXML}
<ccs2012>
   <concept>
       <concept_id>10003033.10003079</concept_id>
       <concept_desc>Networks~Network performance evaluation</concept_desc>
       <concept_significance>500</concept_significance>
       </concept>
   <concept>
       <concept_id>10003033.10003079.10003080</concept_id>
       <concept_desc>Networks~Network performance modeling</concept_desc>
       <concept_significance>300</concept_significance>
       </concept>
   <concept>
       <concept_id>10002950.10003712</concept_id>
       <concept_desc>Mathematics of computing~Information theory</concept_desc>
       <concept_significance>100</concept_significance>
       </concept>
 </ccs2012>
\end{CCSXML}

\ccsdesc[500]{Networks~Network performance evaluation}
\ccsdesc[300]{Networks~Network performance modeling}
\ccsdesc[100]{Mathematics of computing~Information theory}


\begin{abstract}
In this paper, we study a sampling problem where a source takes samples from a Wiener process and transmits them through a wireless channel to a remote estimator. Due to channel fading, interference, and potential collisions, the packet transmissions are unreliable and could take random time durations. Our objective is to devise an optimal causal sampling policy that minimizes the long-term average mean square estimation error. This optimal sampling problem is a recursive optimal stopping problem, which is generally quite difficult to solve. However, we prove that the optimal sampling strategy is, in fact, a simple threshold policy where a new sample is taken whenever the instantaneous estimation error exceeds a threshold. This threshold remains a constant value that does not vary over time. By exploring the structure properties of the recursive optimal stopping problem, a low-complexity iterative algorithm is developed to compute the optimal threshold. This work generalizes previous research by incorporating both transmission errors and random transmission times into remote estimation. Numerical simulations are provided to compare our optimal policy with the zero-wait and age-optimal policies.
\end{abstract}

\keywords{Remote estimation, unreliable channel, optimal multiple stopping times } 


\maketitle

\section{Introduction}\label{introduction}



Several applications in real-time control of systems involving sensor networks, such as autonomous driving, military networks, intelligent manufacturing, etc., involve sampling and remote estimation of information. For example, in military systems, status information about the instantaneous speed and position of the vehicles, channel conditions, and targets \textcolor{black}{changes over} time. In order to ensure that the system performs efficiently, reliably, and safely,  the controller(s) has to obtain accurate estimates of the current status of the system from nearby sensors. This involves judicious sampling of the information in order to minimize the estimation error. Designing an optimal sampling strategy is a hard problem, because some easy strategies, such as \textcolor{black}{continuous sampling, are} infeasible due to the limited energy resources and can be far from optimality due to the transmission delay, limited channel capacity, etc. 


To that end, in this paper, we aim to embark on a sampling problem of a wireless network, as is illustrated in Fig.~\ref{model}. The sampler takes the sample of a continuous-time source process and transmits the sample to a remote estimator. The continuous-time source process is modeled as the Wiener process $W_t$, which helps describe the dynamics of sensors measuring quantities like movement, providing insights into how these quantities change over time. The Wiener process, also commonly referred to as Brownian motion \cite{morters2010brownian}, is one of the best known Lévy process, that features stationary and independent increments. It finds widespread applications in various fields such as pure and applied mathematics, economics, quantitative finance, evolutionary biology, and physics.
The Wiener process $W_t$ has the following key properties: (i) $W_0=0$; (ii) $W_t$ is continuous; (iii) $W_t$ has independent increments; (iv) $W_t - W_s \sim \mathcal{N}(0,t-s)$ for $0\le s\le t$, where $\mathcal{N}$ denotes the normal distribution. 
The remote estimator, in turn, provides a minimum mean square estimation error (MMSE) estimate $\hat{W}_t$ based on the received samples. The core objective is to control the sequence of sampling times to minimize the estimation error $W_t - \hat{W}_t$, specifically, aiming at optimizing the long term average of MMSE. 

\begin{figure}[t]
\includegraphics[scale=.5]{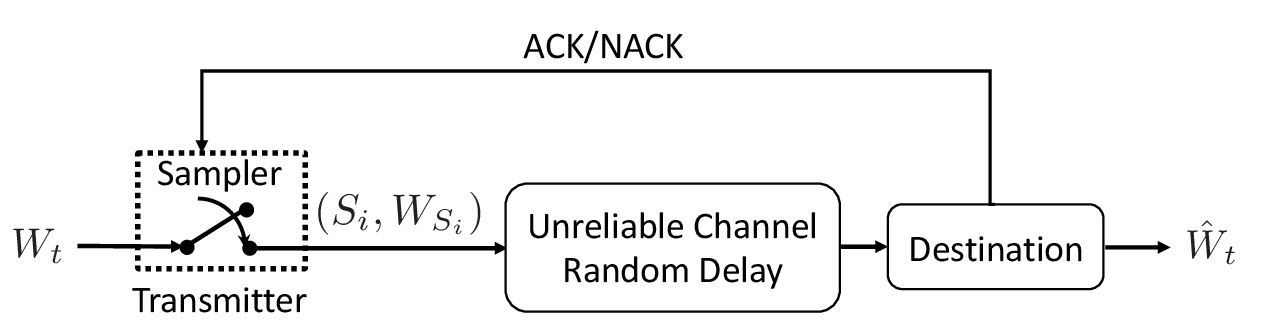}
\centering
\captionsetup{justification=justified}
\caption{System model.}
 \label{model}
\end{figure}

Organized according to the sampling strategies and optimization metrics, our review of related works encompasses three distinct perspectives. 

\subsection{Related Works}

\textbf{Signal-aware sampling with reliable transmissions.}
There have been several studies on sampling for remote estimation, e.g., in \cite{nar2014sampling,sun2020sampling,ornee2019sampling,tang2022sampling,tsai2021unifying}, where the sampling times depend on the source process (signal-aware).
A nice survey paper is included in \cite{jog2021channels}. In \cite{sun2020sampling}, the authors consider the Wiener process as the source process and provide an exact solution to minimize the estimation error. According to the optimal solution, the sampler should wait until the instantaneous estimation error exceeds a threshold, and the threshold is given explicitly. A similar result was developed in \cite{ornee2019sampling} by extending \cite{sun2020sampling} from the Wiener process to the Ornstein Uhlenbeck (OU) process. The optimal threshold retains its simplicity, remaining a root of a closed-form equation.   
Further exploration, as in \cite{tsai2021unifying}, delves into an asymmetric sensor-controller remote system. In this scenario, there are random transmission times \textcolor{black}{in} both directions. At the sensor, the sampling time is a stopping time based on the evolution of the Wiener process, and at the controller, the sampling time depends on the information sent from the sensor. The authors yield precise optimal solutions, noting the potential existence of multiple thresholds for the sensor's optimal stopping time.
Joint optimality designs on the sampling and the estimation, concerning the Wiener process or the autoregressive process, is investigated in \cite{chakravorty2020remote,guo2022optimal}. 

To summarize, except \cite{chakravorty2020remote}, these previous studies on sampling assume reliable transmissions. However, in a variety of wireless systems, channel errors may occur due to fading, and the transmission times of a packet could be random. While packet drops are considered in \cite{chakravorty2020remote}, a time-slotted system is considered, which assumes that the \textcolor{black}{total transmission time is the same as the transmission instance (one time slot)}. In contrast, in this paper, our model allows for both packet errors and random transmission times. \textcolor{black}{Moreover, we enable the selection of real-valued transmission instances.}



  
\textbf{Signal-agnostic sampling.} 
When the sampling times are independent of the Wiener process (signal-agnostic), the MMSE is equal to the age of information \cite{sun2020sampling}. More generally, the MMSE is a function of the age of information under a linear time invariant system \cite{champati2019performance,klugel2019aoi}. Thus, our study is closely related to numerous studies on age-based sampling, e.g., in \cite{arafa2020age,arafa2021timely,sun2019sampling,pan2022optimizing,pan2021minimizings,pan2022optimal,moltafet2022status,hui2022real}. 
Age of information, or simply age, is a metric to evaluate data freshness. Age at current time $t$ is defined as $\Delta(t) = t - U(t)$, where $U(t)$ is the generation time of the latest delivered sample. Age has gained much popularity in the recent decade and has contributed to various remote control systems such as sensor networks, UAV navigation, and semantic communication. A recent literature review on the age is provided in \cite{yates2021age}.

In \cite{arafa2021timely}, the paper studies sampling energy harvesting sources with a unit battery buffer under an erasure channel. In the case of a single source, it provides an optimal sampling policy without feedback. With perfect feedback, an optimal policy is offered among the policies that may wait only when the previous transmission is successful. In \cite{arafa2020age}, the paper solves explicit optimal solutions for an energy harvesting source with finite buffer sizes, where the arrived energy can fill up the whole buffer or fill up incrementally.     
In \cite{sun2019sampling}, the authors relate autocorrelation, remote estimation, and mutual information to the nonlinear age penalty functions, and provide an optimal sampling policy under sampling rate constraint. 
In \cite{pan2022optimal}, when the source process is a multidimensional Gaussian diffusion process, and the estimator is the Kalman Filter, the expected square estimation error is an increasing function of the age. For a general non-decreasing age penalty function, the optimal sampling policy has a threshold structure under unreliable channel conditions and random transmission delay. An extended sampling scenario where the sampler can transmit the sample before receiving the feedback is studied in \cite{moltafet2022status}.


 However, compared to the signal-aware sampling policies, signal-agnostic counterparts exhibit suboptimal performance in terms of minimizing the estimation error. Numerical results in \cite{sun2020sampling} validate that the optimal signal-aware sampling policy can achieve less than half of the long term average MMSE than that of the age-optimal sampling policy. This is intuitive, due to the criticality of the content of information within remote monitoring systems, such as the pedestrian intentions in vehicular networks or target locations in UAV navigations.



 \textbf{AoII-optimal scheduling}.
 Recently, researchers have studied signal-aware policies to optimize a new metric: the age of incorrect information (AoII) \cite{chen2023age,kam2020age,maatouk2020age}. AoII incorporates both the content of information (estimation error) and the freshness of information (data freshness).
AoII was first advanced in \cite{maatouk2020age}, serving as a cornerstone for subsequent research. In the context of a finite symmetric Markov source, \cite{maatouk2020age} provides the transmission strategy with a focus on minimizing the AoII, displaying low computational complexity. In \cite{kam2020age}, the authors employ dynamic programming to minimize the AoII under a binary Markovian source and exponential channel delay distribution. Meanwhile, the paper in \cite{chen2023age} extends \cite{kam2020age} to a general transmission time distribution, showing that it is optimal to always transmit whenever the channel is idle and the AoII is not zero. 

Although these studies focus on content-aware transmission strategies, they all focus on a finite state Markov source under a discrete-time system. These scenarios restrict transmission choices between transmit and idle at the beginning of each time slot. Instead, we consider an unbounded and continuous-time Markov process, enabling the selection of real-valued transmission instances.    
\subsection{Our Contributions}
In comparison to these three prevailing perspectives, in this paper, we consider a scenario of minimizing the estimation error of the Wiener process. Specifically, we (i) embrace a signal-aware sampling policy and (ii) accommodate an unreliable channel with a random transmission time. 
Our contributions expand on \cite{sun2020sampling} by considering an unreliable channel, and \cite{pan2022optimal} by allowing sampling time dependence on the content of the Wiener process. 
Our problem belongs to a semi-Markov decision problem and is difficult to solve. There have been solutions for some special cases. In the first case where the channel is reliable (e.g., \cite{sun2019sampling,ornee2019sampling,sun2020sampling,tsai2021unifying}), the original problems are reduced to a single sample problem, which can be further solved by convex optimizations or optimal stopping rules. However, these methods do not hold in our case because our new problem is decoupled to a \emph{recursive optimal stopping problems} with multiple samples\footnote{Also, our problem is significantly different from that with instantaneous transmission time, e.g., \cite{guo2022optimal}, because even if there is no sampling rate constraint, the zero-wait sampling policy is not optimal.}.
Similarly, our work is different from \cite{pan2022optimal}, because this problem is decoupled to a discounted MDP, and each action of the MDP is not a stopping time. 
Nonetheless, we are able to circumvent these challenges and solve the optimal sampling problem.
The main contributions of this paper are stated as follows:

\begin{itemize}


\item 
We provide an exact solution to our optimal sampling problem. The optimal sampling strategy has a simple structure: each sampling time is a stopping time that takes the sample when the instantaneous estimation error exceeds a threshold. The optimal threshold remains the same, independent of the Wiener process value and whether the last transmission failed or not. Moreover, the optimal threshold can be solved efficiently, e.g., by using a two layer bisection search algorithm. Our results hold for general distributions of the transmission delay and arbitrary probability of the i.i.d. transmission failure. 
To solve our \emph{recursive optimal stopping problems}, we developed new approaches. We provide an exact value function to the value iteration problem.
Specifically, we solve a sequence of optimal stopping problems, where the action value function implies taking an action at the first sample and taking the optimal stopping times at the remaining samples. The technical tools used to establish the results include (a) the strong Markov property and Martingale properties of the Wiener process, (b) Shiryaev's free boundary method for solving optimal stopping problems.  

\item When the sampling time does not depend on the Wiener process, the expected square estimation error is equal to the age \cite{sun2020sampling}, and our original problem is equivalent to an age minimization problem. We provide the exact solution as well. The sampler takes the sample when the age first exceeds a threshold. This result also improves \cite[Theorem~1]{pan2022optimal} by removing the assumption of the regenerative process. 

\item Numerical simulations are provided to validate our results. An interesting observation is that when the channel is highly unreliable, 
our optimal policy still performs much better than the age-optimal and zero-wait policies.  

\end{itemize}

\section{Model and Formulation}\label{estimation}

\begin{figure}[t]
\includegraphics[scale=.45]{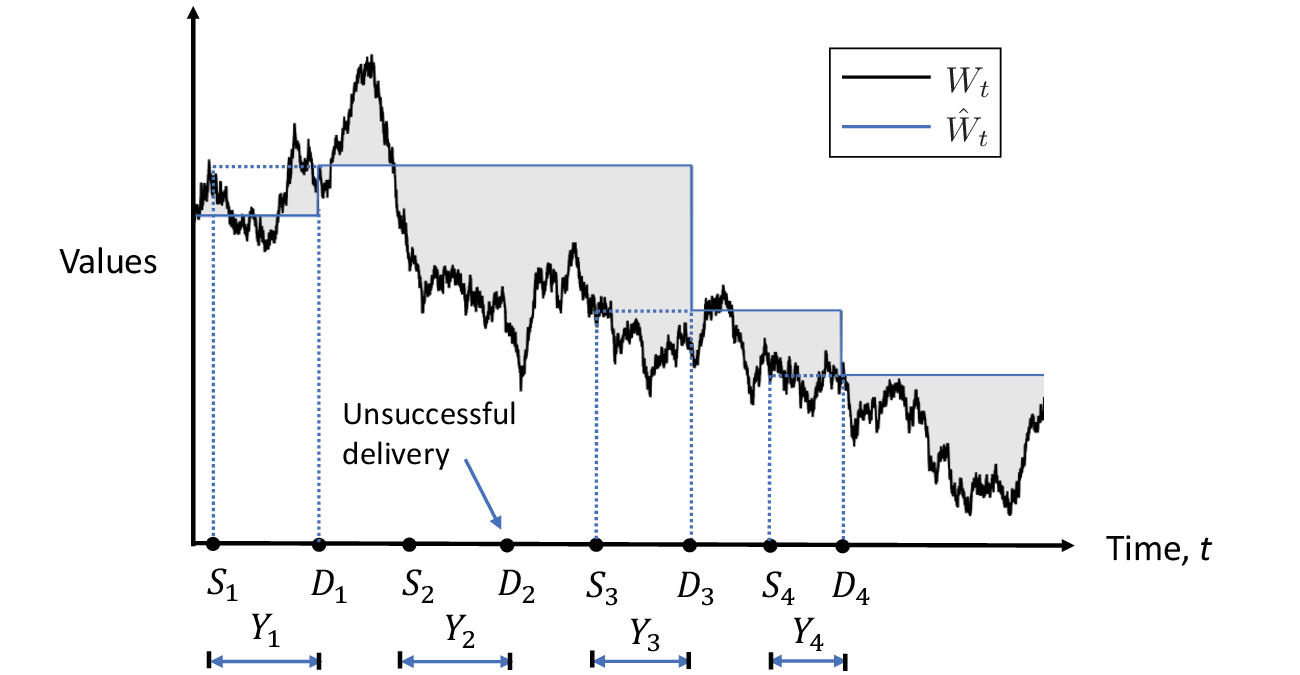}
\centering
\captionsetup{justification=justified}
\caption{A sample path of the Wiener process $W_t$ and the MMSE $\hat{W}_t$ over time $t$. At $D_1,D_3,D_4$, the sample is successfully delivered, so $\hat{W}_t$ is updated to be $W_{S_1},W_{S_3},W_{S_4}$, respectively. At $D_2$, the sample is not successfully delivered, so $\hat{W}_t$ remains unchanged. }
 \label{fig-mmse}
\end{figure}

\subsection{System Model and MMSE Estimator}\label{system-a}

We consider a continuous-time status update system as is depicted in Fig.~\ref{model}, where a sampler takes the sample from the Wiener process $W_t$ and transmits to a destination through an unreliable channel. The destination provides an estimate $\hat{W}_t$ based on the samples that have been successfully delivered. The extended setting from a reliable channel to an unreliable channel is one of the key features of our study.

We use $i \in \{1,2,\ldots\}$ to indicate the number of samples generated by the sampler. The $i$th sample is generated at time $S_i$ and is transmitted through the unreliable channel. The sample contains the sampling time $S_i$ and the sample value $W_{S_i}$. The unreliable channel has an i.i.d. transmission failure, and we denote $\alpha \in [0,1)$ as the probability of failure (i.e., the channel condition is \emph{OFF}). The channel also has an i.i.d. transmission time $Y_i$, and we have $\mathbb{E}[Y^2_{i}]<\infty$. The transmission time and the channel condition are mutually independent. In this paper, we also assume that the transmission time is lower bounded, i.e., there exists $\epsilon>0$ (which can be sufficiently small) such that $Y_{i} \ge \epsilon$. The $i$th sample is delivered to the destination at time $D_i$, where $D_i = S_i+Y_i$. At the delivery time $D_i$, the destination knows the outcome of the transmission of the $i$th sample. Only if the transmission was successful, the destination receives the sample message $(S_i, W_{S_i})$. 
In addition, at $D_{i}$, the destination then sends an acknowledgment back to the sampler, \textcolor{black}{informing} whether the transmission of the $i$th sample was successful or not. We assume that the transmission process of the acknowledgment is instantaneous and error free. 
Note that the sampler always generates a sample after it receives feedback, i.e., $S_{i+1} \ge D_{i}$. Otherwise, the generated sample will be queued for waiting to be transmitted, and the queued sample is staled compared to the fresh sample.

The estimator (destination) also provides a minimum mean square error (MMSE) estimator $\hat{W}_t$ based on the successfully received samples until time $t$.

We denote the random variable $\underline{i}$ as the index of the latest sample that is \emph{successfully} delivered to the destination by the time $D_i$. In the special case of a reliable channel, each sample is successfully delivered, so we have $\underline{i}=i$; otherwise, $\underline{i} \le i$.  The latest (and thus freshest) sample the destination has received during $t\in [D_i,D_{i+1})$ is $( S_{\underline{i}},W_{S_{\underline{i}}})$. 
Using the strong Markov property of the Wiener process \cite[Eq. (4.3.27)]{peskir2006optimal}, the MMSE estimator $\hat{W}_t$ is expressed as
\begin{align} 
\hat{W}_t = \mathbb{E}[W_t | S_{\underline{i}},W_{S_{\underline{i}}} ] = \mathbb{E}[W_t | W_{S_{\underline{i}}} ] = W_{S_{\underline{i}}}, t\in [D_i,D_{i+1}). \label{eq-msei}
\end{align}
A sample path of $W_t$, $\hat{W}_t$, and the estimation error $W_t - \hat{W}_t$ \textcolor{black}{are} depicted in Fig.~\ref{fig-mmse}. In this figure, the $2$nd sample is not successfully delivered. Thus, when $t \in [D_2, D_3)$, the estimator $\hat{W}_t$ is still $W_{S_1}$, not $W_{S_2}$. In other words, $i=2$, but $\underline{i}=1$. This is one of the key differences from the previous studies with the reliable channel case, e.g., \cite{sun2020sampling,tang2022sampling,ornee2019sampling,tsai2021unifying}.  

\subsection{Sampling Times and Problem Formulation}\label{subsection-sampling}

We will control the sequence of causal sampling times $S_{i}$'s to minimize the estimation error. We will consider two types of sampling time: (i) the sampling time depends on the Wiener process (signal-aware sampling) and (ii) the sampling time is independent of the Wiener process (signal-agnostic sampling).
\subsubsection{Signal-aware Sampling}
When the sampling time $S_{i}$ depends on the Wiener process,
$S_i$ is a \emph{stopping time}, i.e., $S_{i}$ satisfies:
\begin{align} 
  \{S_{i}<t  \}\in \mathcal{F}(t)^+,  \ \mathcal{F}(t)^+ \triangleq \cap_{s>t}  \sigma(W_r, r\in [0,s]). \label{eq-stopping2}
\end{align} 
Here, $\sigma(A_1,\ldots,A_n)$ is the $\sigma-$field generated by the random variables $A_1,\ldots,A_n$, and $\mathcal{F}(t)^+$ is a filtration, i.e., a non-decreasing and right-continuous family of $\sigma-$field available to the sampler at time $t$. Intuitively, the sampling time $S_i$ not only depends on the history information prior to $D_{i-1}$, but also depends on the evolution of the Wiener process starting from $D_{i-1}$.

Then, we define the sampling policies.
The policy space $\Pi_{\text{signal-aware}}$ is defined as the collection of causal policies $\pi = S_{1},S_2,\ldots$ such that: (i) $S_{i}$ satisfies the condition \eqref{eq-stopping2}, and $S_{i} \ge D_{i-1}$; (ii) For each $i$, the waiting time $S_i - D_{i-1}$ is bounded by a stopping time that is independent of the history information before $S_{i-1}$\footnote{\textcolor{black}{It is the upper bound stopping time that is independent of the history information before $S_{i-1}$, not all of the stopping times. For example, the upper bound stopping time is to stop where the estimation error $W_t - \hat{W}_t$ exceeds a sufficiently large value. Setting this stopping time as an upper bound is reasonable, because we want to minimize the estimation error. Then, this stopping time is independent of the history information before $S_{i-1}$.}}. In addition, this bounded stopping time $\tilde{\tau}$ satisfies $\mathbb{E} \left[ W^4_{\tilde{\tau}} \right]<\infty$\footnote{If the condition (ii) does not hold, then the term $\lim_{T\rightarrow \infty} \mathbb{E} [R_{N(T)}]/T$ may not be $0$, where $N(T)$ is the largest number $n$ such that $S_{n}<T$, and $R_n = \int_{D_{n-1}}^{D_{n}} (W_t - \hat{W}_t)^2 dt $. If $\lim_{T\rightarrow \infty} \mathbb{E} [R_{N(T)}]/T \ne 0$, $R_n$ will diverge to infinity, which is not our concern.}. 

\subsubsection{Signal-agnostic Sampling}

When the sampling time is independent of the Wiener process, we then define the collection of policies $\Pi_{\text{signal-agnostic}}$ as the collection of policies $\pi = S_{1},S_2,\ldots$ such that:  (i) $S_{i}$ satisfies $S_{i} \ge D_{i-1}$; (ii) For each $i$, $S_i - D_{i-1}$ is bounded by a finite $2$nd moment random variable that is independent of the history information before $S_{i-1}$. 

Note that for any finite $2$nd moment random variable $A$, we have 
$\mathbb{E} \left[ W^4_{A} \right]=3\mathbb{E} \left[ A^2 \right] <\infty$. Therefore, $\Pi_{\text{signal-agnostic}} \subset \Pi_{\text{signal-aware}}$.

\subsubsection{Problem Formulation}

Our objective in this paper is to optimize the long-term average mean square estimation error (MSE) for both signal-aware and signal-agnostic cases: 
\begin{align}
\text{mse}_{\text{opt}} = & \inf_{\pi \in \Pi} \limsup_{T\rightarrow \infty}  \frac{1}{T} \mathbb{E} \left[  \int_{0}^{T} (W_t - \hat{W}_t)^2 dt \right].  \label{avg-aware}  
\end{align}
We aim to find a sampling policy $\pi$ from the set $\Pi$ of all causal policies, in order to minimize the MSE. The value $\text{mse}_{\text{opt}}$ is also called the optimal objective value. Problem~\eqref{avg-aware} is typically hard to solve due to the following reasons. (i) Problem~\eqref{avg-aware} is an infinite horizon undiscounted semi-Markov decision problem with an uncountable state space. (ii) For the case of signal-aware sampling, each action (sampling time) is a stopping time.

\section{Main Results}\label{main}

\subsection{Optimal Signal-aware Sampling Policy}\label{section-aware}


We first break down the time-horizon problem~\eqref{avg-aware} into a series of optimal sampling subproblems. Each of these subproblems determines the optimal sampling times between $D_{\overline{j}}$ and $D_{\overline{j+1}}$, where$D_{\overline{j}}$ represents the time of the $j$th successful delivery.

\begin{lemma}\label{lemma-pre1}
Solving the problem~\eqref{avg-aware} is the same as solving a series of \textcolor{black}{equivalent} optimal sampling subproblems, where the $j$th subproblem is given by
\begin{align}
J(w,\beta) \triangleq \inf_{\pi \in \Pi }  \mathbb{E}\left[ \int_{D_{\overline{j}} }^{D_{\overline{j+1}}} (W_t - \hat{W}_t)^2 dt 
 - \beta ( D_{\overline{j+1}} - D_{\overline{j}} ) \Big| W_{D_{\overline{j}}} - \hat{W}_{D_{\overline{j}}} =w \right], \label{eq-jwbeta}
\end{align}
where $\beta=\text{mse}_{\text{opt}}$.
\end{lemma}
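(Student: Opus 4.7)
The plan is to exploit the regenerative structure induced by the successful delivery epochs and convert the long-term average ratio in \eqref{avg-aware} into a parametric (Dinkelbach-type) per-cycle problem. Let $L_j \triangleq D_{\overline{j+1}} - D_{\overline{j}}$ denote the length of the $j$th inter-success cycle and $R_j \triangleq \int_{D_{\overline{j}}}^{D_{\overline{j+1}}} (W_t - \hat{W}_t)^2 dt$ the accumulated squared error over that cycle. Because $\hat{W}_t$ is constant on $[D_{\overline{j}}, D_{\overline{j+1}})$ and the channel failures/transmission times are i.i.d., the strong Markov property of the Wiener process applied at the stopping time $D_{\overline{j}}$ implies that, conditionally on $w_j \triangleq W_{D_{\overline{j}}} - \hat{W}_{D_{\overline{j}}}$, the joint law of $(L_j, R_j)$ depends only on $w_j$ and on the sampling rule used inside the $j$th cycle.

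First, I would use the moment conditions built into $\Pi$ (in particular condition~(ii), which bounds each waiting time by a stopping time with finite fourth Wiener moment) to show $\mathbb{E}[L_j] < \infty$, $\mathbb{E}[R_j] < \infty$, and that the boundary term over the (possibly incomplete) last cycle contributes $o(T)$ to both numerator and denominator. This yields the renewal-reward identity
\begin{equation}
\limsup_{T \to \infty} \frac{1}{T}\, \mathbb{E}\!\left[ \int_{0}^{T} (W_t - \hat{W}_t)^2\, dt \right] \;=\; \limsup_{n \to \infty} \frac{\sum_{j=1}^{n} \mathbb{E}[R_j]}{\sum_{j=1}^{n} \mathbb{E}[L_j]},
\end{equation}
which is precisely what the footnotes in the definition of $\Pi$ are designed to guarantee.

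Next, I would invoke the Dinkelbach lemma for fractional programming. Setting $\beta^{\star} = \text{mse}_{\text{opt}}$, one shows that $\beta^{\star}$ is the unique real number for which $\inf_{\pi \in \Pi} \mathbb{E}[R_j - \beta^{\star} L_j] = 0$, and that a policy is optimal in \eqref{avg-aware} if and only if, in every cycle $j$, it attains this per-cycle infimum. Conditioning on the initial state of the cycle via the tower property gives
\begin{equation}
\inf_{\pi \in \Pi} \mathbb{E}\!\left[ R_j - \beta^{\star} L_j \right] \;=\; \mathbb{E}\!\left[ \inf_{\pi} \mathbb{E}\!\left[ R_j - \beta^{\star} L_j \,\big|\, w_j \right] \right] \;=\; \mathbb{E}\!\left[ J(w_j, \beta^{\star}) \right],
\end{equation}
where the inner conditional infimum is exactly the subproblem \eqref{eq-jwbeta}. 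Hence the original problem reduces to solving \eqref{eq-jwbeta} for every realization $w$ of the initial cycle state, with $\beta = \beta^{\star}$.

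The main technical obstacle I anticipate is twofold: (a) justifying the interchange $\inf \mathbb{E} = \mathbb{E} \inf$ and the per-cycle decomposition in a class of policies where each action is itself a stopping time, which requires a measurable selection argument together with the strong Markov property at $D_{\overline{j}}$; and (b) verifying that the concatenation of the per-cycle $\varepsilon$-optimal rules produces a policy that still lies in $\Pi_{\text{signal-aware}}$, i.e., that condition~(ii) on bounding stopping times with finite fourth Wiener moment is preserved under this gluing. Once these points are settled, a matching lower bound (any $\beta < \beta^{\star}$ forces a strictly positive per-cycle infimum) and upper bound (any $\beta > \beta^{\star}$ forces a strictly negative one) complete the equivalence.
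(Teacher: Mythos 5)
Your overall skeleton coincides with the paper's route: decompose time at the successful-delivery epochs, linearize the resulting per-cycle fractional problem by Dinkelbach's method (the paper's Lemma~\ref{dinklebach-lemma}), and condition on the initial error $w_j=W_{D_{\overline{j}}}-\hat{W}_{D_{\overline{j}}}$ via the strong Markov property (the paper's Lemma~\ref{lemma-singlepoch}). The problem is the middle step. Your ``renewal-reward identity''
$\limsup_{T\to\infty}\frac{1}{T}\mathbb{E}[\int_0^T(W_t-\hat W_t)^2dt]=\limsup_{n\to\infty}\frac{\sum_{j\le n}\mathbb{E}[R_j]}{\sum_{j\le n}\mathbb{E}[L_j]}$
is asserted for \emph{every} policy in $\Pi$, but under an arbitrary causal policy the pairs $(R_j,L_j)$ are neither independent nor identically distributed: the within-epoch sampling rule may depend on the entire past, so there is no renewal or regenerative structure to invoke. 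The footnoted moment condition in the definition of $\Pi$ only guarantees that the residual terms $R_{N(T)},R_{N(T)+1}$ are $o(T)$ (this is exactly the paper's Lemma~\ref{lemma-tail0}); it does not deliver the claimed equality of limsups, and passing from the time average to a ratio of sums of per-cycle expectations is not a generic identity. Your opening move --- ``exploit the regenerative structure induced by the successful delivery epochs'' --- assumes precisely what the paper is at pains \emph{not} to assume: Lemma~\ref{lemma-pre1} is advertised as removing the regenerative-process hypothesis of earlier work, and the paper's Proposition~\ref{prop-aware} instead proves, for every admissible policy, the one-sided bound $\mathbb{E}[\sum_{j\le N_T}R_j]\ge R_{\min}\,\mathbb{E}[\sum_{j\le N_T}L_j]\ge T R_{\min}$ by conditioning on the history $\mathcal{H}_{j-1}$ and observing that the per-epoch conditional infimum $R^*(\mathcal{H}_{j-1})$ equals a constant $R_{\min}$ attainable by a rule independent of $\mathcal{H}_{j-1}$ (because $w_j$ has the law of $W_Y$ and is independent of the past), and then establishes achievability, yielding \eqref{eq-singlepoch-mse}.

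Relatedly, the two points you flag as ``anticipated obstacles'' --- the interchange $\inf\mathbb{E}=\mathbb{E}\inf$ with a measurable selection, and verifying that gluing per-cycle $\varepsilon$-optimal rules keeps the policy in $\Pi_{\text{signal-aware}}$ --- are not side issues; they are the substance of the paper's Appendix~\ref{appprop-aware} exchange argument, and deferring them leaves the proof of the reduction incomplete. (Your claim that optimality forces attainment of the per-cycle infimum in \emph{every} cycle is also stronger than what is true or needed; optimality only pins down the value, and the paper only asserts existence of an optimal policy with i.i.d.\ inter-success intervals.) So while the Dinkelbach and strong-Markov portions of your plan are sound and match the paper, the reduction from \eqref{avg-aware} to the single-epoch problem is where the genuine work lies, and as written your argument has a gap exactly there.
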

Lemma~\ref{lemma-pre1} is a restatement of Lemma~\ref{lemma-singlepoch} in Section~\ref{section-proof}.
We note that, by choosing $\beta=\text{mse}_{\text{opt}}$, the sequence of linearized optimal stopping subproblems \eqref{eq-jwbeta} have the same solution as the original problem~\eqref{avg-aware}. \textcolor{black}{Note that these subproblems are independent and thus \emph{equivalent}. In other words, we only need to solve one subproblem~\eqref{eq-jwbeta} regardless of $j$, and $J(w,\beta)$ remains the same for any given $j$. This is because each $W_{D_{\overline{j}}} - \hat{W}_{D_{\overline{j}}}$ is independent of any history information before $D_{\overline{j}}$.}
Moreover, Lemma~\ref{lemma-pre1} improves similar results in e.g.,  \cite{sun2020sampling,ornee2019sampling,tsai2021unifying,pan2022optimal}, by removing the assumption that the $S_i$'s is a regenerative process.
Overall, to solve \eqref{avg-aware}, we can firstly solve \eqref{eq-jwbeta} with any given parameter $\beta>0$.

However, problem \eqref{eq-jwbeta} is still hard to solve. 
Let $M_j$ be the total number of transmission attempts between $D_{\overline{j}}$ and $D_{\overline{j+1}}$. Then, $\overline{j+1} = \overline{j}+M_j$. Problem \eqref{eq-jwbeta} needs to determine a sequence of sampling times $S_{\overline{j}+1},S_{\overline{j}+2},\ldots,S_{\overline{j}+M_j}$ until a successful packet delivery occurs at time $S_{\overline{j+1}}$. Hence, problem \eqref{eq-jwbeta} is a \emph{repeated optimal stopping problem with continuous-time control and a continuous state space}. This is the key technical challenge of our study. To the extend of our knowledge, this type of problems has not been addressed before. One limiting case of problem \eqref{eq-jwbeta} was studied in \cite[Eq. 47]{sun2020sampling}, where there exists no transmission errors and hence $M_j=1$.


We develop a value iteration algorithm that can find the optimal stopping times for solving problem~\eqref{eq-jwbeta}. To that end, we define a sequence of optimal stopping problems:
\begin{align}
\nonumber  J_n(w,\beta)  \triangleq   \inf_{\pi \in \Pi } & \  \mathbb{E} \bigg[ \int_{D_{\overline{j+1}-\min(M_j,n)} }^{D_{\overline{j+1}}} (W_t - \hat{W}_t)^2 dt 
 - \beta ( D_{\overline{j+1}} - D_{\overline{j+1}-\min(M_j,n)} ) \\ & \ \Big| W_{D_{\overline{j+1}-\min(M_j,n)}} - \hat{W}_{D_{\overline{j+1}-\min(M_j,n)}} =w \bigg], n=1,2,\ldots. \label{eq-jwbetan}
\end{align}
Hence, $J_n(w,\beta)$ determines the optimal solution for \textcolor{black}{at most} the last $n$ transmission attempts in problem~\ref{eq-jwbeta}. 
The principle of backward induction implies that $J_n$ satisfies the \emph{value iteration algorithm}: 
\begin{align}
\nonumber J_0(w,\beta) & \triangleq 0,\\
J_{n+1}(w,\beta) &  \triangleq   \inf_{\tau} g(w; \tau)+\alpha \mathbb{E} \left[  J_n(w+W_{\tau+Y},\beta) \right], \ n=0,1,2,\ldots, \label{eq-valueiterationn-main}
\end{align}
where $w+W_{\tau+Y}$ is the estimation error after a stopping time $\tau$ and the transmission time $Y$. And
the per-stage cost function $g(w;\tau)$ is defined as the square estimation error minus $\beta$ from the last delivery time to the next delivery time with a stopping time $\tau$:
\begin{align}
g(w; \tau) &= \mathbb{E} \left[   \int_{0}^{\tau+Y} (w+W_t)^2 dt -\beta (\tau+Y) \right]. \label{eq-costgtau-main}
\end{align}

The following theorem provides an exact solution to \eqref{eq-valueiterationn-main}, which is the key contribution in this paper: 
\begin{theorem}\label{thm1-v}
The sequence of optimal stopping times $\tau_n$'s to problem~\eqref{eq-valueiterationn-main} is given as follows:
\begin{align}
  \tau_n=  \inf_{t}  \left\{  t\ge 0:  
 | w+W_t |    \ge v_n(\beta)  \right\}, \label{eq-optimalpi}
\end{align} 
$v_n(\beta)$ is the unique positive root of the free boundary \textcolor{black}{differential equation}: 
\begin{align}
\frac{\partial}{\partial w} J_n(w,\beta) \Big|_{w = v_n(\beta)^+} = \frac{\partial}{\partial w} J_n(w,\beta) \Big|_{w = v_n(\beta)^-},\label{eq-fbm+-}
\end{align}
 $J_n(w,\beta)$ is updated as: 
\begin{align}
\nonumber & J_0(w,\beta)=0,\\
& J_{n}(w,\beta) = g(w,v_n(\beta),\beta) + \alpha \mathbb{E}_{W_Y} \left[ J_{n-1} (\max\{ |w|,v_{n}(\beta) \}+W_Y,\beta) \right], \ n=1,2,\ldots, \label{eq-rootjn2}
\end{align}
the function $g(w,v,\beta)$ is equal to
\begin{align}
g(w,v,\beta) =  \frac{1}{2} \mathbb{E} \left[ Y^2 \right]+\mathbb{E} \left[ Y \right]w^2-\mathbb{E} \left[ Y \right]\beta + \frac{1}{6}\max(v^4 - w^4,0) - (\beta - \mathbb{E}[Y])\max(v^2 - w^2,0). \label{eq-gwvbeta}
\end{align}

Moreover, the sequence $\{v_n(\beta)\}_n$ is decreasing and thus convergent.
\end{theorem}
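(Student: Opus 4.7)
The plan is to proceed by induction on $n$, reducing the Bellman recursion \eqref{eq-valueiterationn-main} to a standard optimal stopping problem on the Wiener process and then invoking Shiryaev's free boundary method. Setting $X_t = w + W_t$ and using the strong Markov property to peel the transmission delay off the sampling decision, I would absorb $Y$ and the failure-continuation into a single stopping payoff
\[
G_{n-1}(x) \triangleq \mathbb{E}\!\left[\int_0^{Y}(x+\tilde{W}_t)^2\,dt - \beta Y + \alpha J_{n-1}(x+\tilde{W}_Y,\beta)\right],
\]
with $\tilde{W}$ an independent copy of $W$, so that
\[
J_n(w,\beta) = \inf_\tau \mathbb{E}\!\left[\int_0^\tau(w+W_t)^2 dt - \beta\tau + G_{n-1}(w+W_\tau)\right].
\]
The inductive hypothesis bundles three properties of $J_{n-1}$: (a) the threshold form \eqref{eq-optimalpi} with smooth fit \eqref{eq-fbm+-}, (b) the closed-form recursion \eqref{eq-rootjn2}, and (c) evenness and continuity of $J_{n-1}(\cdot,\beta)$.

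For the base case $J_1$ (i.e., $G_0$), the key computation is to evaluate $g(w;\tau)$ in closed form. I would use the local martingale $X_t^4/6 - \int_0^t X_s^2 ds$ together with optional stopping to obtain $\mathbb{E}[\int_0^\tau X_t^2 dt] = \mathbb{E}[X_\tau^4]/6 - w^4/6$, combine with $\mathbb{E}[X_\tau^2] = w^2 + \mathbb{E}[\tau]$ from the martingale $X_t^2 - t$, and invoke strong Markov plus independence of $Y$ to compute $\mathbb{E}[\int_\tau^{\tau+Y} X_t^2 dt] = \mathbb{E}[X_\tau^2]\mathbb{E}[Y] + \mathbb{E}[Y^2]/2$. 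Integrability is guaranteed by the bounded-stopping-time condition on $\Pi$ together with $\mathbb{E}[Y^2]<\infty$. Because $W \stackrel{d}{=} -W$, the objective is even in $w$ and the optimal stopping region is symmetric of the form $\{|w|\ge v_1(\beta)\}$; the smooth fit condition \eqref{eq-fbm+-} pins $v_1$ down uniquely, with uniqueness of the positive root following from strict monotonicity of the matched one-sided derivatives in $v$. Plugging the threshold policy back in and using $\mathbb{E}[\tau_1] = \max(v_1^2-w^2,0)$ and $\mathbb{E}[X_{\tau_1}^4]=\max(|w|,v_1)^4$ (again by optional stopping) produces exactly \eqref{eq-gwvbeta}.

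The inductive step is structurally identical. Under the hypothesis, $G_{n-1}$ inherits continuity and evenness, so the reduced problem admits a symmetric threshold $v_n(\beta)$ characterized by \eqref{eq-fbm+-}. On the continuation region $|w|<v_n$ the value function satisfies the HJB ODE $\tfrac{1}{2}J_n''(w,\beta) + w^2 - \beta = 0$, while on the stopping region $J_n(w,\beta) = G_{n-1}(w)$; continuous and smooth fit tie the free constant and the boundary together. Substituting the optimal $\tau_n$ and using $|X_{\tau_n}| = \max(|w|,v_n)$ reproduces \eqref{eq-rootjn2}: the running-cost portion is exactly $g(w,v_n,\beta)$ in \eqref{eq-gwvbeta}, while the continuation term is $\alpha \mathbb{E}[J_{n-1}(\max(|w|,v_n)+W_Y,\beta)]$ by the strong Markov property.

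Finally, for the claim that $\{v_n(\beta)\}$ is decreasing and thus convergent, I would bolt onto the main induction an auxiliary monotonicity statement on the value functions themselves: namely, $J_n(w,\beta) \le J_{n-1}(w,\beta)$ pointwise, which encodes that one extra stage of value iteration only widens the set of feasible policies (the extra attempt can be rendered inert by choosing a trivially large waiting stopping time). Replacing $J_{n-1}$ by the pointwise-smaller $J_n$ in the construction of $G_n$ shifts $G_n$ downward relative to $G_{n-1}$, and a sign analysis of the smooth fit equation \eqref{eq-fbm+-} under this replacement then forces $v_{n+1} \le v_n$. The hard part will be making this comparison rigorous at the level of one-sided derivatives: differentiating $G_{n-1}$ under the expectation requires a dominated convergence argument using $\mathbb{E}[Y^2]<\infty$ and the even-convex structure of $J_{n-1}$ (itself established inductively), after which the sign of the resulting derivatives at the candidate thresholds delivers the comparison. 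Monotonicity together with $v_n(\beta)\ge 0$ makes convergence automatic.
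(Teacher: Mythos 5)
Your base case and the structure of the induction (peel off $Y$ by the strong Markov property, absorb the failure branch into a stopping payoff $G_{n-1}$, compute $g(w;\tau)$ by optional stopping of $X_t^2-t$ and $X_t^4/6-\int_0^t X_s^2\,ds$, then smooth fit) match the paper's route through Lemma~\ref{cost1stage} and the free-boundary equations \eqref{eq-fbm1simple}--\eqref{eq-fbm3simple}. But two essential pieces are missing or wrong. First, the free-boundary/smooth-fit computation only produces a \emph{candidate}: you assert that the optimal stopping region is a symmetric single-threshold set and that smooth fit "pins it down," but evenness of the payoff does not rule out stopping regions made of several intervals, and nothing in your plan verifies that the hitting time is optimal over \emph{all} admissible stopping times. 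The paper's proof spends most of its effort exactly here: establishing differentiability and the inequalities $G_n''(w)+2w^2-2\beta\ge 0$ on the stopping set and $G_n'''(w)+4w\ge 0$ (Lemma~\ref{derivative1}(b),(c), via the density estimates of Lemma~\ref{lemma-prerequisite} and the interchange arguments of Lemma~\ref{lemma-jw}), then proving $\tilde J_n\le\tilde G_n$ (Lemma~\ref{jsmallg}) and that $-\tilde J_n$ is excessive via It\^o's formula (Lemma~\ref{j-is-excessive}), so that Shiryaev's verification theorem applies. These same convexity facts are also what make the positive root of \eqref{eq-rootgn} unique, which you invoke but do not establish.

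Second, your argument for $v_{n+1}\le v_n$ rests on the claim that $J_n(w,\beta)\le J_{n-1}(w,\beta)$ pointwise because "the extra attempt can be rendered inert by a trivially large waiting time." This is false: the per-stage cost is not sign-definite and waiting is not free. Already at the first step, $J_1(w,\beta)=g(w,v_1,\beta)=\mathbb{E}[Y]w^2+\tfrac12\mathbb{E}[Y^2]-\mathbb{E}[Y]\beta>0=J_0(w,\beta)$ for $|w|$ large (and waiting when $w^2>\beta$ only adds positive expected cost), so the value iteration started from $J_0=0$ is not monotone in $n$, and the claimed downward shift of $G_n$ never materializes. Consequently the "sign analysis of the smooth-fit equation" has nothing to work with. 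The paper's monotonicity proof (Lemma~\ref{derivative1}(d), Appendix~D.3) is a different, derivative-level induction: it first bounds $v_n\le\sqrt{3\,\text{mse}_{\text{opt}}}$, then proves jointly that $G_{n+1}'(w)\ge G_n'(w)$ for $w\ge 0$ and $v_{n+1}\le v_n$, using the oddness of $\partial_x g$, the nonnegativity of the root function $f_n(w)=G_n'(w)+\tfrac23 w^3-2\beta w$ beyond $v_n$, and the density asymmetry $f_{w+W_{Y_1}+\cdots}(x)\ge f_{w+W_{Y_1}+\cdots}(-x)$ for $w,x\ge0$. You would need to replace your value-function comparison with an argument of this type (or some other correct comparison of the one-sided derivatives) for the decreasing-threshold claim to stand.
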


The proof of Theorem~\ref{thm1-v} is provided in Section~\ref{section-eq-solve}.
Theorem~\ref{thm1-v} implies that each optimal stopping time $\tau_n$ is a hitting time that will stop when the estimation error exceeds a threshold $v_n(\beta)$. The threshold $v_n(\beta)$ is chosen by the free boundary method \cite{peskir2006optimal}, where the optimal value function $J_n(w,\beta)$ should be continuously differentiable on $w\in \mathbb{R}$. Since $v_n(\beta)$ is decreasing and convergent, $\tau_n$ is also convergent.

\begin{figure}[t]
\includegraphics[scale=.35]{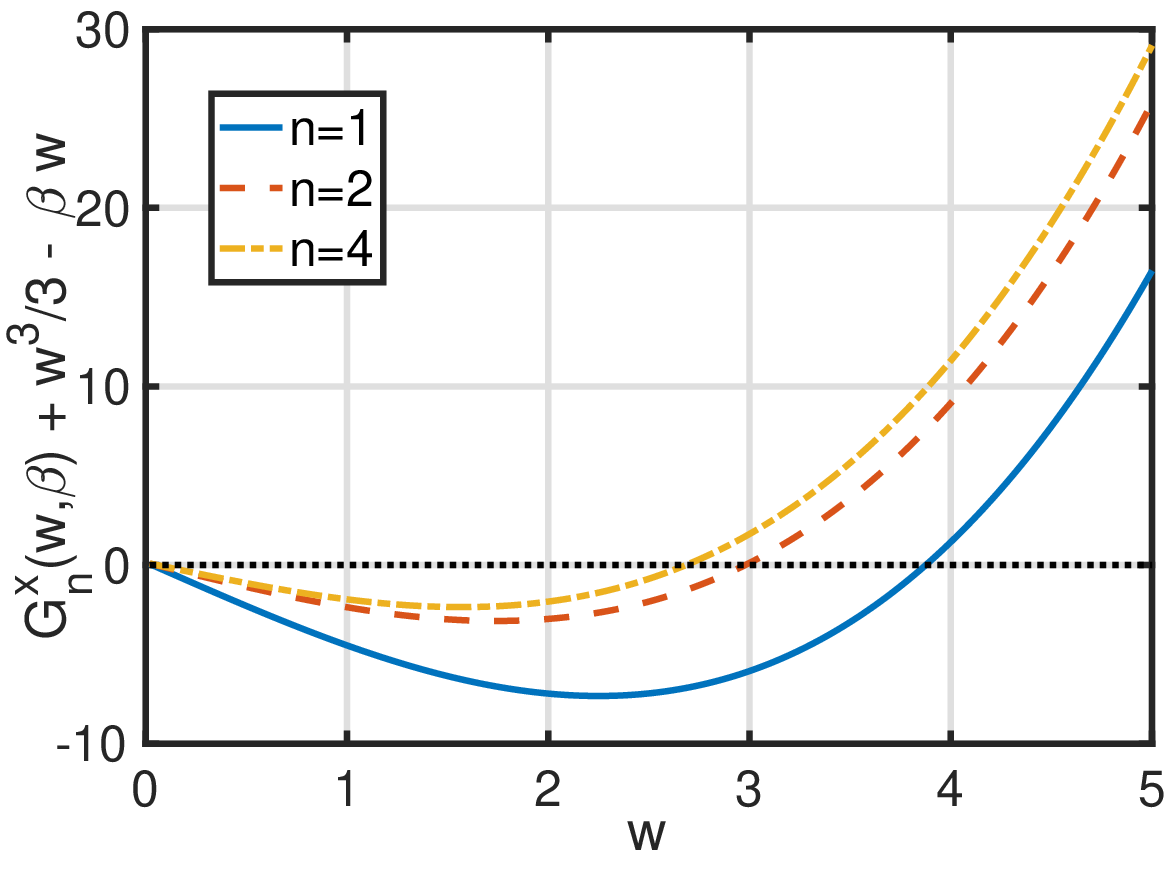}
\centering
\captionsetup{justification=justified}
\caption{\textcolor{black}{The evolution of the root function: $G^x_{n}(w,\beta) + \frac{1}{3} w^3 - \beta w$ over $w$, with $n=1,2,4$. In this example, we set $\beta=11.0,\alpha=0.3$, and a constant transmission delay $Y=6$. It is easy to see that $v_n(\beta)$, which is the positive root of $G^x_{n}(w,\beta) + \frac{1}{3} w^3 - \beta w$, is decreasing in $n$. }}
 \label{fig-gn}
\end{figure}

In addition, the optimal threshold $v_n(\beta)$ can be solved efficiently. In Theorem~\ref{thm-valueiteration} of Section~\ref{section-eq-solve}, we showed that the root of the free boundary method in \eqref{eq-fbm+-} is equivalent to: 
\begin{align}
 G^x_{n}(w,\beta) + \frac{1}{3} w^3 - \beta w =0. 
\label{eq-rootgn}
\end{align} 
Interestingly, $G^x_{n}(w,\beta)= \frac{1}{2}\frac{\partial}{\partial w} J_n(w,\beta) |_{w = v_n(\beta)^+}$, and $ -\frac{1}{3} w^3 + \beta w=\frac{1}{2}\frac{\partial}{\partial w} J_n(w,\beta) |_{w = v_n(\beta)^-}$.

$ G^x_{0}(w,\beta)=0$, and the function $ G^x_{n}(w,\beta)$ is updated as
\begin{align}
\nonumber & G^x_{n}(w,\beta) =  \mathbb{E} \left[ Y \right] w \\ & + \alpha \mathbb{E}_{W_Y} \left[ G_{n-1}^x(w+W_Y,\beta)  {\text{\large $\mathds{1}$}}_{|w+W_Y| \ge v_{n-1}(\beta) } + \left( \beta (w+W_Y) - \frac{1}{3} (w+W_Y)^3  \right)  {\text{\large $\mathds{1}$}}_{|w+W_Y| < v_{n-1}(\beta) } \right]. \label{eq-root-qn}
\end{align}
Because \eqref{eq-root-qn} contains only an expectation over $W_Y$ without derivatives, computing $G^x_{n}(w,\beta)$ is easy. We also showed that \textcolor{black}{$G^x_{n}(w,\beta) + \frac{1}{3} w^3 - \beta w$ is strongly convex for $w >0$}. Thus, we only need logarithm time complexity to solve $v_n(\beta)$ for each $n$ in \eqref{eq-rootgn}, such as bisection search or Newton's method. \textcolor{black}{Fig.~\ref{fig-gn} illustrates some intuitive properties of $v_n(\beta)$ and its root function, $G^x_{n}(w,\beta) + \frac{1}{3} w^3 - \beta w$.}

Further, $J_n$ converges linearly to $J$.
To illustrate, we first define a norm. Let us pick any value $\rho$ with $\alpha < \rho <1$, and denote a weight function $u(w) = \max(\bar{b},w^2)$, where $\bar{b}$ can take any positive value such that 
$
\mathbb{E} \left[ 1+\frac{2|W_Y|}{\sqrt{\bar{b}}}    +  \frac{W_Y^2}{\bar{b}} \right] \le \frac{\rho}{\alpha}.
$
The weight function $u(w)$ is not related to $\beta$. The sup-norm $\| \cdot \|$ of a function $f(w)$ is defined as 
$
\| u \| = \sup_{w \in \mathbb{R}} \left | \frac{f(w)}{u(w)} \right |. 
$
We have the following result:
\begin{lemma}\label{lemma-contraction-main}
 $\|J_n(\cdot,\beta) - J(\cdot,\beta) \| \le \rho \| J_{n-1}(\cdot,\beta) - J(\cdot,\beta)\|$.
\end{lemma}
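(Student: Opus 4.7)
The plan is to treat the update~\eqref{eq-valueiterationn-main} as the application of a Bellman operator
\[
(Tf)(w) \;:=\; \inf_{\tau}\Bigl\{ g(w;\tau) + \alpha\, \mathbb{E}[f(w+W_{\tau+Y})]\Bigr\},
\]
so that $J_n = TJ_{n-1}$ by construction and the limit $J$ satisfies $J = TJ$ (the latter following from the monotone convergence of $J_n$ en route to Theorem~\ref{thm1-v}), and then to show that $T$ is a contraction of modulus $\rho$ in the weighted sup-norm on the pair $(J_{n-1},J)$. Using the optimal hitting times $\tau_n^{\ast}$ for $J_n$ and $\tau^{\ast}$ for $J$ given by Theorem~\ref{thm1-v}, the standard suboptimality trick---plug the minimizer of one equation into the other---yields the pointwise bound
\[
|J_n(w) - J(w)| \;\le\; \alpha\, \|J_{n-1} - J\|\, \max\!\Bigl\{\mathbb{E}[u(w+W_{\tau_n^{\ast}+Y})],\;\mathbb{E}[u(w+W_{\tau^{\ast}+Y})]\Bigr\}.
\]
Dividing by $u(w)$ and taking a supremum over $w$ reduces the lemma to establishing the uniform one-step weight inequality $\alpha\, \mathbb{E}[u(w+W_{\tau+Y})] \le \rho\, u(w)$ for $\tau\in\{\tau_n^{\ast},\tau^{\ast}\}$.

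For this key bound, I would invoke the strong Markov property of the Wiener process: $W_{\tau+Y} - W_{\tau}$ is distributed as $W_Y$ and is independent of $\mathcal{F}_{\tau}$. Because $\tau$ is a hitting time with threshold $v\in\{v_n(\beta),v^{\ast}\}$, we have $|w+W_{\tau}| = \max(|w|,v) =: M$ deterministically, and symmetry of the centered Gaussian increment then gives
\[
\mathbb{E}[u(w+W_{\tau+Y})] \;=\; \mathbb{E}\bigl[\max(\bar b,(M+W_Y)^2)\bigr] \;\le\; \mathbb{E}\bigl[(M'+|W_Y|)^2\bigr],
\]
with $M' := \max(\sqrt{\bar b},M)$, by a short case analysis on whether $M \ge \sqrt{\bar b}$ or not. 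Factoring out $M'^2$ and using $M' \ge \sqrt{\bar b}$ together with the defining inequality on $\bar b$ yields $\mathbb{E}[u(w+W_{\tau+Y})] \le (\rho/\alpha)\,M'^2$.

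The remaining step is to verify $M'^2 \le u(w)$, i.e., $\max(\bar b,\max(|w|,v)^2)\le \max(\bar b,w^2)$, which is equivalent to $v\le \max(\sqrt{\bar b},|w|)$ and therefore holds uniformly in $w$ as soon as $\sqrt{\bar b} \ge v$. Since Theorem~\ref{thm1-v} guarantees that $\{v_n(\beta)\}$ is monotonically decreasing, both $v_n(\beta)$ and $v^{\ast}=\lim_n v_n(\beta)$ are dominated by $v_1(\beta)$. I would therefore strengthen the choice of $\bar b$ so that $\sqrt{\bar b}\ge v_1(\beta)$ in addition to the stated condition---both are just lower bounds on $\bar b$, so any sufficiently large value satisfies both---and thereby conclude $\|J_n-J\|\le \rho\,\|J_{n-1}-J\|$. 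The main obstacle is precisely this compatibility of $\bar b$ with the thresholds $v_n(\beta)$: without enlarging $\bar b$ beyond the stated requirement, the ratio $M'^2/u(w)$ could be unbounded when $v$ dominates both $|w|$ and $\sqrt{\bar b}$. The monotonicity established in Theorem~\ref{thm1-v} is precisely what lets us absorb this into the choice of $\bar b$ and close the contraction argument.
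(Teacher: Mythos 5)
Your overall strategy is the paper's own: the same weighted one-step inequality $\alpha\,\mathbb{E}[u(w+W_{\tau+Y})]\le\rho\,u(w)$ for hitting times whose continuation set lies inside $\{w:w^2<\bar b\}$ (this is exactly Lemma~\ref{lemma-contraction}(a), and your computation via $M'=\max(\sqrt{\bar b},\max(|w|,v))$ is a correct rederivation of it), combined with the standard plug-the-minimizer-of-one-problem-into-the-other manipulation. Your explicit remark that $\bar b$ must in addition dominate $v_1(\beta)^2$ so that all thresholds satisfy $v_n(\beta)\le\sqrt{\bar b}$ is a point the paper leaves implicit when it asserts that the iterates' hitting times ``belong to the assumption of Lemma~\ref{lemma-contraction}(a)'', so that part of your proposal is sound and even slightly more careful than the text.

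The genuine gap is in what you presuppose about $J$. Your bound $J_n(w)-J(w)\le\alpha\|J_{n-1}-J\|\,\mathbb{E}[u(w+W_{\tau^{\ast}+Y})]$ requires that $J=TJ$ \emph{and} that the infimum in $TJ$ is attained by the hitting time $\tau^{\ast}$ with threshold $v^{\ast}=\lim_n v_n(\beta)$; an $\epsilon$-optimal stopping time for $TJ$ would not do, since it need not have a continuation set bounded by $\bar b$ and the weight inequality would then be unavailable. Neither fact is available at this point: Theorem~\ref{thm1-v} only gives monotonicity of the thresholds $v_n(\beta)$, not of the iterates $J_n$ (the per-stage cost $g$ changes sign, so $\{J_n\}$ is not monotone and ``monotone convergence of $J_n$'' does not establish $J=TJ$), and the identification of $J$ as the fixed point of $T$ whose minimizer is the limiting hitting time is precisely what the paper deduces \emph{from} this lemma (Lemma~\ref{lemma-contraction}(b),(c) and then Theorem~\ref{thm1}). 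As written, your argument is therefore circular. The paper sidesteps this by never touching the minimizer of $TJ$: it proves $\|T^{n+1}0-T^{n}0\|\le\rho\|T^{n}0-T^{n-1}0\|$ using only the explicitly known hitting times of the iterates, concludes that $\{T^n 0\}$ is Cauchy in the weighted norm, shows its limit is the unique fixed point of $T$, and only then reads off the stated rate $\|J_n-J\|\le\rho\|J_{n-1}-J\|$. Restructuring your proof along those lines (keeping your weight bound and your enlarged choice of $\bar b$) closes the gap.
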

Lemma~\ref{lemma-contraction-main} is restated in Lemma~\ref{lemma-contraction} at Section~\ref{section-linearconv}.
Since $\tau_n$ is also convergent, each of the optimal stopping (waiting) times in \eqref{eq-jwbeta} should also be a hitting time with the threshold $v(\beta) = \lim_{n\rightarrow \infty} v_n(\beta)$. We finally conclude the following result:
\begin{theorem}\label{thm1}
An optimal sampling solution $S_{i}$'s to the series of problem~\eqref{eq-jwbeta} is:
\begin{align}
  S_{i+1} =  \inf_{t}  \left\{  t\ge D_{i}:  
 | W_t - \hat{W}_t |    \ge v(\beta)  \right\}, \ i=0,1,2,\ldots, \label{eq-optimalpi}
\end{align}
where $v(\beta)$ is the limit of the sequence $v_n(\beta)$'s, \textcolor{black}{and $v_n(\beta)$ can be computed by solving \eqref{eq-fbm+-}, or more efficiently, by solving \eqref{eq-rootgn} and \eqref{eq-root-qn}}.
\end{theorem}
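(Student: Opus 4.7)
My plan is to derive Theorem~\ref{thm1} as the infinite-horizon limit of the finite-horizon value iteration whose optimal stopping rules are already characterized by Theorem~\ref{thm1-v}. The starting point is Lemma~\ref{lemma-pre1}, which reduces the long-run-average problem~\eqref{avg-aware} to solving the per-regeneration subproblem~\eqref{eq-jwbeta} with $\beta=\text{mse}_{\text{opt}}$. Because each failed transmission resets the dynamics by the strong Markov property of the Wiener process, I would identify $J(w,\beta)$ as the unique fixed point of the Bellman operator $(T\phi)(w)\triangleq \inf_{\tau}\{g(w;\tau)+\alpha\,\mathbb{E}[\phi(w+W_{\tau+Y})]\}$ and recognize the iteration~\eqref{eq-valueiterationn-main} as its $n$-th iterate $J_n=T^n 0$.

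Next, I would invoke Lemma~\ref{lemma-contraction-main} to conclude that $J_n\to J$ in the weighted sup-norm $\|\cdot\|$ at geometric rate $\rho<1$. Theorem~\ref{thm1-v} supplies, at each finite horizon $n$, the optimal stopping time $\tau_n$ as a hitting time with threshold $v_n(\beta)$, and shows that $\{v_n(\beta)\}$ is nonnegative and decreasing; hence $v(\beta)\triangleq \lim_n v_n(\beta)$ exists. The core step is to pass to the limit in the one-step identity~\eqref{eq-rootjn2} to obtain the stationary Bellman-type relation
\[
J(w,\beta) = g(w,v(\beta),\beta) + \alpha\,\mathbb{E}_{W_Y}\!\left[J(\max\{|w|,v(\beta)\}+W_Y,\beta)\right],
\]
which exhibits the hitting time $\tau^\ast \triangleq \inf\{t\ge 0:|w+W_t|\ge v(\beta)\}$ as a minimizer of the Bellman operator and thereby as an optimal stationary policy for~\eqref{eq-jwbeta}. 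To justify the limit, I would use the envelope $|J_n(w,\beta)-J(w,\beta)|\le \rho^n \|J(\cdot,\beta)\|\,u(w)$ provided by Lemma~\ref{lemma-contraction-main} together with $\mathbb{E}[u(w+W_Y)]<\infty$ to invoke dominated convergence inside the expectation, while the monotone convergence $v_n(\beta)\downarrow v(\beta)$ and the closed form of $g(w,v,\beta)$ in~\eqref{eq-gwvbeta} (which is continuous in $v$) handle the per-stage term.

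Once optimality of $\tau^\ast$ for the subproblem is established, a short verification argument, iterating the displayed identity across the random number $M_j$ of failed attempts, would match the expected cost of $\tau^\ast$ to $J(w,\beta)$ and thus confirm attainment of the infimum. Finally, translating the common per-epoch threshold rule back to absolute time, using $\hat{W}_t=W_{S_{\underline{i}}}$ on $[D_i,D_{i+1})$, yields~\eqref{eq-optimalpi}, while the equivalence of the subproblems across $j$ noted after Lemma~\ref{lemma-pre1} ensures that a single threshold $v(\beta)$ governs every sampling epoch $i=0,1,2,\ldots$.

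The main obstacle I anticipate is the rigorous passage to the limit in the \emph{threshold} rather than merely in the value function. I must rule out a pathological $v(\beta)=0$ limit (which would reduce to zero-wait sampling, already argued to be suboptimal), confirm that $\tau^\ast$ is almost surely finite and lies in $\Pi_{\text{signal-aware}}$ (in particular $\mathbb{E}[W_{\tau^\ast}^4]<\infty$, which follows from the bound $|W_{\tau^\ast}|\le v(\beta)$ and standard Wiener hitting-time moment estimates), and verify that the greedy-in-$J$ policy obtained from the limit Bellman identity indeed coincides with the pointwise limit of the greedy-in-$J_n$ policies $\tau_n$. These steps should follow from the monotone structure in Theorem~\ref{thm1-v} and the strong convexity of $G^x_n(w,\beta)+\tfrac{1}{3}w^3-\beta w$ in $w$ noted after~\eqref{eq-root-qn}.
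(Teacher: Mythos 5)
Your proposal is correct and follows essentially the same route as the paper: reduce via Lemma~\ref{lemma-pre1} and the Dinkelbach linearization to the repeated optimal-stopping problem, take the per-iteration hitting-time thresholds $v_n(\beta)$ from Theorem~\ref{thm1-v}, invoke the weighted-sup-norm contraction of Lemma~\ref{lemma-contraction-main} to get $J_n \to J$ with $J$ the unique fixed point of the Bellman operator, and conclude that the limiting hitting time with threshold $v(\beta)=\lim_n v_n(\beta)$ is optimal for each epoch. Your explicit limit-passing in the one-step identity and the closing verification step merely spell out in more detail what the paper asserts tersely at the end of Section~\ref{section-linearconv}, so the decomposition, key lemmas, and logic coincide with the paper's own proof.
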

The proof of Theorem~\ref{thm1} is provided in Section~\ref{section-linearconv}.

Theorem~\ref{thm1} illustrates an important property of an optimal sampling policy for a given parameter $\beta$.
Note that $|W_t - \hat{W}_t|$ is the estimation error at the current time $t$.
Theorem~\ref{thm1} implies that the optimal sampling policy given in \eqref{eq-optimalpi} has a simple structure. The optimal policy is a \emph{threshold type}: the sampler may wait until the instantaneous estimation error $|W_t - \hat{W}_t|$ exceeds the threshold $v(\beta)$. Specifically, if the estimation error at the initial time $D_{i}$ exceeds $v(\beta)$, then it is optimal to immediately transmit the sample. The optimal threshold $v(\beta)$ is independent of the evolution of the Wiener process.  

After solving \eqref{eq-jwbeta} with a given $\beta$, we will finally determine the optimal objective value $\beta=\text{mse}_{\text{opt}} $. Note that in \eqref{eq-jwbeta}, $W_{D_{\overline{j}}} - \hat{W}_{D_{\overline{j}}}$ has the same distribution as $W_Y$, where $Y$ has the same distribution as the i.i.d. transmission delay $Y_i$'s. Then, we have the following result: 
\begin{theorem}\label{thm1-root}
$\beta = \text{mse}_{\text{opt}} $ is the root of 
\begin{align}
\mathbb{E} \left[J(W_Y,\beta)\right]=0, \label{eq-betaroot2}
\end{align}
where $\text{mse}_{\text{opt}}$ is the optimal objective value of \eqref{avg-aware}.
\end{theorem}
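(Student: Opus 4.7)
The plan is to couple Lemma~\ref{lemma-pre1} with the renewal-type decomposition of problem~\eqref{avg-aware} over successful delivery epochs, and then argue that at $\beta=\text{mse}_{\text{opt}}$ the expected per-epoch cost must equal zero.

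First, I would identify the distribution of the initial state of each subproblem \eqref{eq-jwbeta}. At a successful delivery time $D_{\overline{j}}$, the MMSE estimator satisfies $\hat{W}_{D_{\overline{j}}}=W_{S_{\overline{j}}}$ by \eqref{eq-msei}, so the initial estimation error is $W_{D_{\overline{j}}}-W_{S_{\overline{j}}}$. By the strong Markov property of the Wiener process and the fact that $D_{\overline{j}}-S_{\overline{j}}=Y_{\overline{j}}$ is independent of $\mathcal{F}(S_{\overline{j}})^+$, this increment has the same distribution as $W_Y$, where $Y$ is the generic transmission time. Hence averaging the conditional value $J(w,\beta)$ over $w\sim W_Y$ yields the unconditional per-epoch optimal value of \eqref{eq-jwbeta}.

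Next, I would invoke Lemma~\ref{lemma-pre1} together with a standard semi-Markov / renewal-reward argument. Let $R_j=\int_{D_{\overline{j}}}^{D_{\overline{j+1}}}(W_t-\hat W_t)^2\,dt$ and $T_j=D_{\overline{j+1}}-D_{\overline{j}}$. Lemma~\ref{lemma-pre1} states that the average-cost problem \eqref{avg-aware} is equivalent to the family of linearized subproblems \eqref{eq-jwbeta} at $\beta=\text{mse}_{\text{opt}}$, and that these subproblems are i.i.d.\ across $j$. Therefore, under an optimal policy, $\mathbb{E}[R_j-\beta T_j]=\mathbb{E}[J(W_Y,\beta)]$, and since $\text{mse}_{\text{opt}}=\mathbb{E}[R_j]/\mathbb{E}[T_j]$ (with $\mathbb{E}[T_j]\ge \mathbb{E}[Y]\ge \epsilon>0$), we obtain $\mathbb{E}[J(W_Y,\text{mse}_{\text{opt}})]=0$. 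The identification $\beta=\text{mse}_{\text{opt}}$ as a root of \eqref{eq-betaroot2} follows.

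Finally, I would establish uniqueness of the root, which I expect to be the main technical step. For each fixed policy $\pi$, the mapping $\beta\mapsto \mathbb{E}_\pi[R_j-\beta T_j]$ is affine and strictly decreasing with slope $-\mathbb{E}_\pi[T_j]\le -\epsilon$. Taking the infimum over $\pi$ preserves strict monotonicity: for $\beta_1<\beta_2$ and any $\pi$ that is nearly optimal at $\beta_1$,
\begin{equation*}
\mathbb{E}_\pi[R_j-\beta_2 T_j]-\mathbb{E}_\pi[R_j-\beta_1 T_j]=-(\beta_2-\beta_1)\mathbb{E}_\pi[T_j]\le -(\beta_2-\beta_1)\epsilon,
\end{equation*}
so $\mathbb{E}[J(W_Y,\beta_2)]\le \mathbb{E}[J(W_Y,\beta_1)]-(\beta_2-\beta_1)\epsilon$, and hence $\beta\mapsto \mathbb{E}[J(W_Y,\beta)]$ is strictly decreasing. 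Combined with continuity of this map in $\beta$ (which follows from the same pointwise-infimum-of-affine-functions structure, i.e., concavity plus finiteness), this guarantees that \eqref{eq-betaroot2} has a unique root, which must coincide with $\text{mse}_{\text{opt}}$. The hardest bookkeeping will be justifying that the infimum in \eqref{eq-jwbeta} is attained or at least well-approximated so that the renewal-reward identity $\text{mse}_{\text{opt}}=\mathbb{E}[R_j]/\mathbb{E}[T_j]$ holds cleanly; the uniform lower bound $Y\ge \epsilon$ and the moment hypotheses on $\tilde\tau$ in the definition of $\Pi$ are what make this go through.
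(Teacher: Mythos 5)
Your proposal is correct and follows essentially the same route as the paper: the paper proves Theorem~\ref{thm1-root} by combining the single-epoch reduction (Proposition~\ref{prop-aware}, restated through Lemma~\ref{lemma-pre1}/Lemma~\ref{lemma-singlepoch}) with Dinkelbach's fractional-programming lemma (Lemma~\ref{dinklebach-lemma}), noting that the initial error $W_{D_{\overline{j}}}-\hat{W}_{D_{\overline{j}}}$ is distributed as $W_Y$ so that $h(\beta)=\mathbb{E}[J(W_Y,\beta)]$. Your explicit monotonicity/uniqueness argument via $\mathbb{E}_\pi[T_j]\ge\epsilon$ is just a self-contained re-derivation of Lemma~\ref{dinklebach-lemma}(i), and the attainment issue you flag is exactly what Proposition~\ref{prop-aware} supplies in the paper.
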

Theorem~\ref{thm1-root} is shown in Lemma~\ref{dinklebach-lemma} \textcolor{black}{at} Section~\ref{section-proof}.
Combining Theorem~\ref{thm1} and Theorem~\ref{thm1-root}, we finally provide the optimal solution to \eqref{avg-aware}.

Moreover, we showed that we can also use a low complexity algorithm, such as bisection search, to compute the root of $\beta$.
So in conclusion, we can efficiently solve $v( \text{mse}_{\text{opt}})$ and $\text{mse}_{\text{opt}}$ with low complexity, which is provided in Algorithm~\ref{algr1}: 
\begin{itemize}
\item
Line~\ref{alg-linebeta1}---\ref{alg-linebeta2} in Algorithm~\ref{algr1} is an \emph{inner layer} update to efficiently compute the optimal threshold $v(\beta)$ and the function $J(w,\beta)$ for a given $\beta$ \textcolor{black}{(corresponding to Theorem~\ref{thm1-v} and Theorem~\ref{thm1}). In Line~\ref{alg-linebetam}, due to Lemma~\ref{lemma-contraction-main}, we only need a logarithm number of iterations. In Line~\ref{alg-vnbeta}, since the root function in \eqref{eq-rootgn} is strongly convex, we only need a simple Newton's method to obtain $v_n(\beta)$.} 
\item
Line~\ref{alg-repeat},\ref{alg-repeat1},\ref{alg-linebeta3} serves as an \emph{outer} layer that uses a simple bisection method to determine the root of $\beta$ \textcolor{black}{(corresponding to Theorem~\ref{thm1-root})}.
\end{itemize}



 \begin{algorithm}[t]
\caption{Bisection method for solving the optimal threshold $v(\text{mse}_{\text{opt}})$ and $\text{mse}_{\text{opt}}$ }\label{algr1}
\textbf{Given} $k_1$ small, $k_2$ large, $k_1<k_2$, and tolerance $\epsilon_1,\epsilon_2$ small. 

\textbf{repeat}\label{alg-repeat} 

\qquad $\beta=\frac{1}{2}(k_1+k_2)$, \label{alg-repeat1}

\qquad Set $J_0(w,\beta) = G^x_0(w,\beta) =0$ \label{alg-linebeta1}

\qquad Set iteration number $m= \lceil - \log_{\rho} \frac{ \| J_1(\cdot,\beta) \|}{\epsilon_1} \rceil $   \label{alg-linebetam}

\qquad \textbf{for $n=1:m$} 

\qquad \qquad Update $ G^x_{n}(w,\beta)$ in \eqref{eq-root-qn} 

\qquad \qquad Solve $v_{n}(\beta)$ in \eqref{eq-rootgn} \label{alg-vnbeta}


\qquad \qquad Update $ J_{n}(w,\beta)$ in \eqref{eq-rootjn2}

\qquad \textbf{end for} \label{alg-linebeta2}

\qquad \textbf{if} $ \mathbb{E}^{w} \left[  J_{m}(W_Y,\beta)  \right] <0$: $k_2=\beta$. \textbf{else} $k_1=\beta$ \label{alg-linebeta3}

\textbf{until} $k_2-k_1<\epsilon_2$

\textbf{return} $v_{m}(\beta), \beta$

\end{algorithm}

In the special case where $\alpha=0$, it is easy to observe that $v_n(\beta) = v_1(\beta)$, and $J_n(w,\beta)=J_1(w,\beta)=g(w,v_1,\beta)$ for all $n=1,2,\ldots$. As a result, the optimal threshold $v(\beta)=v_1(\beta)$, and the optimal value function $J(w,\beta)=g(w,v_1,\beta)$. By \eqref{eq-rootgn}, $v_1(\beta)=\sqrt{3(\beta - \mathbb{E}[Y])}$. Therefore, Theorem~\ref{thm1} and~\ref{thm1-root} reduces to the following corollary:
\begin{corollary}\label{cor-1}
Suppose that $\alpha=0$, then an optimal solution $S_{i}$'s to problem~\eqref{avg-aware} satisfies:
\begin{align}
  S_{i+1} =  \inf_{t}  \left\{  t\ge D_{i}:  
 | W_t - \hat{W}_t |    \ge \sqrt{3(\beta - \mathbb{E}[Y])}  \right\},
\end{align}
where $\beta$ is the root of 
\begin{align}
\mathbb{E} \left[  g(W_Y, \sqrt{3(\beta - \mathbb{E}[Y])} , \beta )  \right]=0.
\end{align}
Moreover, $\beta = \text{mse}_{\text{opt}}$ is the optimal objective value of \eqref{avg-aware}.
\end{corollary}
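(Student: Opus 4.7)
The plan is to derive Corollary~\ref{cor-1} as a direct specialization of Theorems~\ref{thm1-v}, \ref{thm1}, and~\ref{thm1-root} to the reliable-channel case $\alpha=0$. First I would observe that when $\alpha=0$, the contraction term in the value iteration \eqref{eq-valueiterationn-main} vanishes, so for every $n\ge 0$,
\begin{equation*}
J_{n+1}(w,\beta) = \inf_{\tau} g(w;\tau),
\end{equation*}
which is independent of $n$. Hence $J_n(w,\beta)=J_1(w,\beta)$ and $v_n(\beta)=v_1(\beta)$ for every $n\ge 1$, and taking the limit gives $v(\beta)=v_1(\beta)$ and $J(w,\beta)=J_1(w,\beta)=g(w,v_1(\beta),\beta)$ by \eqref{eq-rootjn2} (the $\alpha\,\mathbb{E}[\cdot]$ term drops out).

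Second, I would compute $v_1(\beta)$ in closed form. Evaluating the recursion \eqref{eq-root-qn} at $n=1$ with $G_0^x\equiv 0$ and $\alpha=0$ gives simply $G_1^x(w,\beta)=\mathbb{E}[Y]\,w$. Substituting into the root equation \eqref{eq-rootgn} yields
\begin{equation*}
\mathbb{E}[Y]\,w + \tfrac{1}{3}w^3 - \beta w = 0,
\end{equation*}
whose unique positive root is $v_1(\beta)=\sqrt{3(\beta-\mathbb{E}[Y])}$, provided $\beta>\mathbb{E}[Y]$. The stated threshold form of $S_{i+1}$ then follows immediately from Theorem~\ref{thm1} upon replacing $v(\beta)$ by this closed-form expression.

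Third, for the optimality equation I would invoke Theorem~\ref{thm1-root}, which asserts $\beta=\text{mse}_{\text{opt}}$ is the root of $\mathbb{E}[J(W_Y,\beta)]=0$. Using $J(w,\beta)=g(w,v_1(\beta),\beta)$ from Step~1 and $v_1(\beta)=\sqrt{3(\beta-\mathbb{E}[Y])}$ from Step~2, this reduces directly to
\begin{equation*}
\mathbb{E}\bigl[g\bigl(W_Y,\sqrt{3(\beta-\mathbb{E}[Y])},\beta\bigr)\bigr]=0,
\end{equation*}
which is exactly the stated root equation.

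The derivation is essentially an algebraic specialization, so there is no serious technical obstacle; the only point requiring a brief verification is the sign condition $\beta>\mathbb{E}[Y]$ needed for $v_1(\beta)$ to be a real positive root. I would handle this by arguing at the optimum: if $\beta\le\mathbb{E}[Y]$, the candidate threshold would be zero (zero-wait sampling), and plugging this back into $\mathbb{E}[g(W_Y,0,\beta)]=\tfrac{1}{2}\mathbb{E}[Y^2]+\mathbb{E}[Y]\mathbb{E}[W_Y^2]-\mathbb{E}[Y]\beta$ forces $\beta=\tfrac{1}{2}\mathbb{E}[Y^2]/\mathbb{E}[Y]+\mathbb{E}[W_Y^2]>\mathbb{E}[Y]$ by Jensen's inequality applied to $Y$ (since $\mathbb{E}[W_Y^2]=\mathbb{E}[Y]$ and $\mathbb{E}[Y^2]\ge(\mathbb{E}[Y])^2$), a contradiction. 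Hence the positive root regime is the operative one, and the corollary follows.
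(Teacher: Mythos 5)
Your proposal is correct and follows essentially the same route as the paper: the paper also obtains Corollary~\ref{cor-1} by observing that for $\alpha=0$ the value iteration collapses to $J_n(w,\beta)=J_1(w,\beta)=g(w,v_1,\beta)$ with $v_n(\beta)=v_1(\beta)=\sqrt{3(\beta-\mathbb{E}[Y])}$ from \eqref{eq-rootgn}, and then specializes Theorems~\ref{thm1} and~\ref{thm1-root}. Your additional check that the root equation forces $\beta>\mathbb{E}[Y]$ (so the threshold is real and positive) is a small verification the paper leaves implicit, but it does not change the argument.
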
 
The optimal policy provided in Corollary~\ref{cor-1} is the same as that of \cite[Theorem~1]{sun2020sampling}. In addition, we have also improved \cite[Theorem~1]{sun2020sampling} by removing the assumption of the regenerative process.
The optimal sampling policy provided in Corollary~\ref{cor-1} is a threshold type on the instantaneous estimation error, and the optimal threshold is given in closed-form. 

There are several variations of Corollary~\ref{cor-1} with a reliable channel case $\alpha=0$.
In \cite{ornee2019sampling}, the paper changes the source process to be the Ornstein-Uhlenbeck process and shows that the optimal threshold is a root of the closed-form equation. The model where the source can reset the Wiener process is described in \cite{tsai2021unifying}. 
Theorem~\ref{thm1} and Theorem~\ref{thm1-root} are different from these studies by generalizing to an i.i.d. unreliable channel scenario ($\alpha \ge 0$). 
Note that the last transmission may be successful or failed for each sample. However, in Theorem~\ref{thm1} and Theorem~\ref{thm1-root}, each sampling time follows the same threshold type with the same threshold $v(\beta)$, regardless of whether the last transmission failed or not. 

The expression \eqref{eq-optimalpi} in Theorem~\ref{thm1} implies that our optimal policy relies on the value of the Wiener process at the sampling time of the successfully delivered sample, $S_{\underline{i}}$, but may not on $S_i$. 
This is also a key difference from the case of a reliable channel ($\alpha=0$), e.g., \cite{sun2020sampling,ornee2019sampling,tsai2021unifying} and Corollary~\ref{cor-1}.

\subsection{Optimal Signal-agnostic Sampling Policy with Sampling Rate Constraint}\label{sec-agnostic}

Finally, we turn to the signal-agnostic case and provide the exact solution to Problem~\eqref{avg-aware}.  
Using \cite{sun2020sampling}, for any signal-agnostic policy, we have 
\begin{align}
\mathbb{E} \left[ (W_t - \hat{W}_t )^2 \right] = \Delta_t = t - S_{\underline{i}}, \ t\in [D_{i},D_{i+1}).\label{eq-age-def}
\end{align}
In other words, when the sampling time does not depend on the Wiener process, the expected square estimation error MMSE is equal to the \emph{age of information}. So our MSE-optimal sampling problem (Problem~\eqref{avg-aware}) is equivalent to the age-optimal sampling problem. Problem~\eqref{avg-aware} is equivalent to 
\begin{align}
 \text{age}_{\text{opt}}  = & \inf_{\pi \in \Pi} \limsup_{T\rightarrow \infty}  \frac{1}{T} \mathbb{E} \left[  \int_{0}^{T} \Delta_t dt \right]. \label{eq-avg-agnostic}
\end{align}
Age of information $\Delta_t$, or simply the age, is a metric for evaluating the data freshness. As is mentioned in \eqref{eq-age-def}, the age $\Delta_t$ is defined as the time elapsed since the freshest delivered sample is generated \cite{sun2019age}. If a fresh sample is successfully delivered to the estimator, the age decreases to the system time of the sample. Otherwise, the age increases linearly in time. A sample path of the age $\Delta_t$ is depicted in Fig~\ref{fig-age}.

\begin{figure}[t]
\includegraphics[scale=.35]{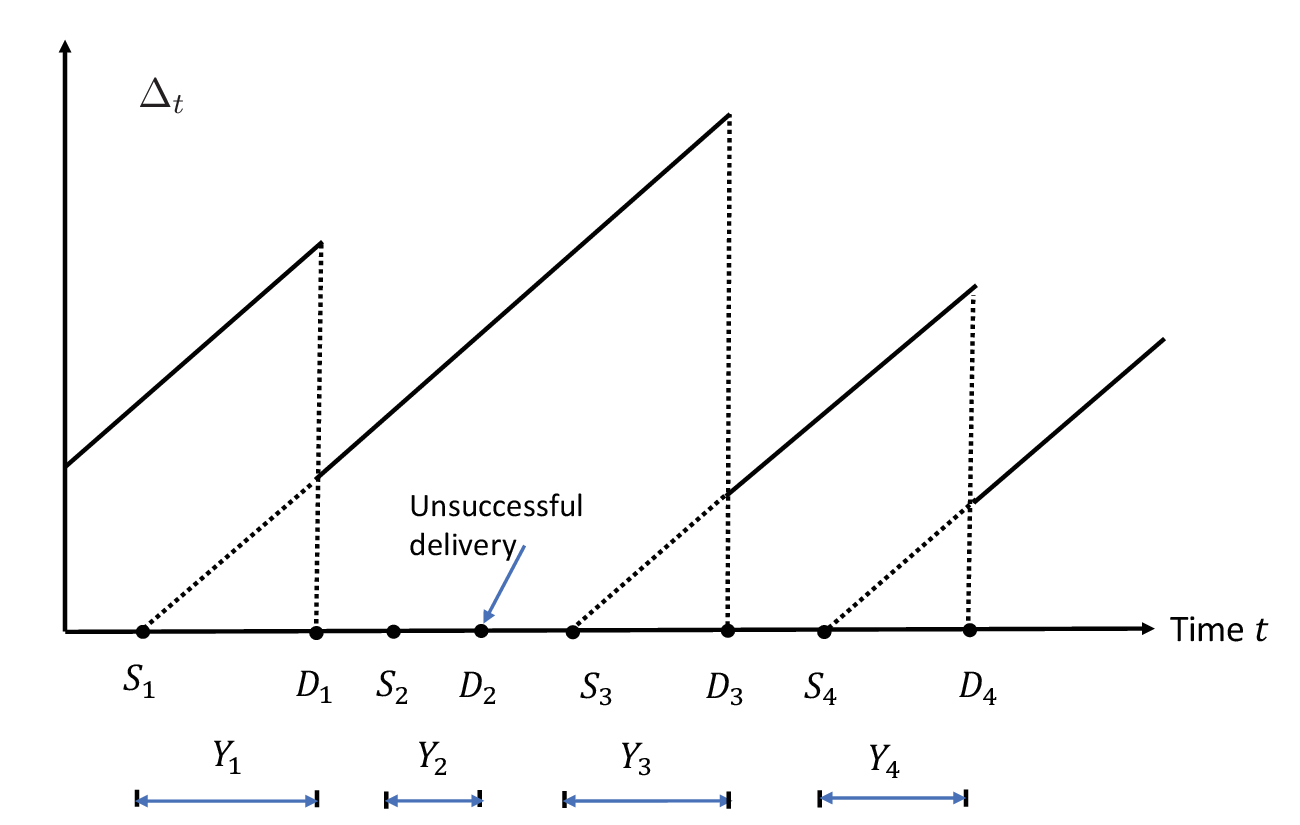}
\centering
\captionsetup{justification=justified}
\caption{Evolution of the age $\Delta_t$ over time $t$. }
 \label{fig-age}
\end{figure}

We then have the following result: 
\begin{theorem}\label{thm2} 
An optimal solution $S_i$'s to the problem~\eqref{eq-avg-agnostic} is provided as:
\begin{align}
 & S_{i+1} =  \inf_{t}  \left\{  t \ge D_i:  
 \Delta_t    \ge \beta - \frac{\mathbb{E} \left[ Y \right]}{1-\alpha}  \right\}.  
\label{thm2-beta2}
\end{align}
 $\beta$ is the root of 
\begin{align}
\mathbb{E} \left[ \int^{\max(\beta -  \frac{\mathbb{E} \left[ Y \right]}{1-\alpha} , Y) + Y'}_{Y} t dt \right] - \beta \mathbb{E} \left[\max(\beta  - Y,  \frac{\mathbb{E} \left[ Y \right]}{1-\alpha}) \right]=0, \label{eq-ageroot}
\end{align} 
where $Y' = \sum_{k=1}^{M}Y_{j,k}$, $Y$ and $Y_{j,1},Y_{j,2},\ldots$ are i.i.d. and have the same distribution as the transmission delay $Y_i$'s. 
\end{theorem}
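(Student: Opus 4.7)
The plan is to mirror the per-epoch decomposition of Lemma~\ref{lemma-pre1} and Theorem~\ref{thm1-root}, exploiting the simplification that in the signal-agnostic setting the age evolves deterministically between transmissions. Since $\Pi_{\text{signal-agnostic}} \subset \Pi_{\text{signal-aware}}$, Lemma~\ref{lemma-pre1} applies verbatim and problem~\eqref{eq-avg-agnostic} reduces to a family of i.i.d.\ per-epoch subproblems with cost
\begin{align*}
J(y,\beta) \triangleq \inf_{\pi \in \Pi_{\text{signal-agnostic}}} \mathbb{E}\left[\int_{D_{\overline{j}}}^{D_{\overline{j+1}}} \Delta_t\, dt - \beta(D_{\overline{j+1}}-D_{\overline{j}}) \,\Big|\, \Delta_{D_{\overline{j}}} = y\right],
\end{align*}
where the initial age $y = \Delta_{D_{\overline{j}}} = Y_{\overline{j}}$ is distributed as $Y$, and the optimum $\beta = \text{age}_{\text{opt}}$ is the root of $\mathbb{E}_Y[J(Y,\beta)]=0$ by the same argument as in Theorem~\ref{thm1-root}. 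Because no successful delivery occurs inside an epoch, the age grows linearly from $y$ to $y+L$ where $L = D_{\overline{j+1}} - D_{\overline{j}}$, and the per-epoch integrand collapses to $\int \Delta_t\, dt = yL + L^2/2$.

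Next I would set up the embedded Bellman recursion with state equal to the current age $\Delta$ at each attempt and action equal to the next waiting time $z\ge 0$:
\begin{align*}
V(\Delta) = \inf_{z \ge 0}\left\{ (z+\mathbb{E}[Y])\Delta + \tfrac{1}{2}\mathbb{E}[(z+Y)^2] - \beta(z+\mathbb{E}[Y]) + \alpha\, \mathbb{E}_{Y}[V(\Delta + z + Y)] \right\}.
\end{align*}
I expect the ansatz $V(\Delta) = \big(\mathbb{E}[Y]/(1-\alpha)\big)\,\Delta + b$ (valid for $\Delta \ge \theta$) to satisfy this recursion, and the first-order stationarity $\partial/\partial z = 0$ to yield $z^*(\Delta) = \max\{\theta - \Delta, 0\}$ with $\theta = \beta - \mathbb{E}[Y]/(1-\alpha)$, matching~\eqref{thm2-beta2}. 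Since $Y \ge \epsilon > 0$, once the age is lifted to $\theta$ every failed attempt leaves the age strictly above $\theta$, so only the first waiting time can be positive and the entire epoch reduces to $L = \max\{\theta - y, 0\} + Y'$, with $Y' = \sum_{k=1}^{M} Y_k$, $M\sim\mathrm{Geom}(1-\alpha)$, and $Y_k$ i.i.d.\ copies of $Y$.

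Finally I would substitute this $L$ into $\mathbb{E}_Y[J(Y,\beta)] = 0$. Using $Y+L = \max\{Y,\theta\}+Y'$ and the identity $\max\{\theta - Y,0\} + \mathbb{E}[Y]/(1-\alpha) = \max\{\beta - Y,\, \mathbb{E}[Y]/(1-\alpha)\}$, the root condition becomes
\begin{align*}
\mathbb{E}\left[\int_Y^{\max\{\theta,Y\}+Y'} t\, dt\right] - \beta\, \mathbb{E}\left[\max\{\beta - Y,\,\mathbb{E}[Y]/(1-\alpha)\}\right] = 0,
\end{align*}
which is precisely~\eqref{eq-ageroot}. The main obstacle is rigorously showing that the stationary threshold policy is globally optimal within $\Pi_{\text{signal-agnostic}}$, not merely the best among threshold policies. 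I would handle this via a verification argument: check that the candidate $V$ is convex and $C^{1}$ in $\Delta$, verify that the Bellman infimum is attained at $z^*(\Delta)$, and then invoke a martingale/stopping argument analogous to that used in Theorem~\ref{thm1-v} to conclude that the proposed policy attains $V$ and is therefore optimal.
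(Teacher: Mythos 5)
Your proposal reaches the correct threshold and root equation, and its skeleton matches the paper's: reduce to a single epoch between successful deliveries, linearize via Dinkelbach so that $\beta=\text{age}_{\text{opt}}$ solves $\mathbb{E}[J(Y,\beta)]=0$, and solve the within-epoch discounted problem by a threshold policy on the age. The difference is in what is actually proved versus cited. The paper's own proof is deliberately thin: it establishes Proposition~\ref{prop-agnostic} (the epoch reduction for the signal-agnostic class, obtained as a special case of the proof of Proposition~\ref{prop-aware}, which is where the removal of the regenerative-process assumption lives) and then invokes \cite[Section~V.B]{pan2022optimal} for the entire Bellman/threshold analysis. You instead re-derive that cited part: the embedded Bellman recursion in the age, the affine ansatz $V(\Delta)=\tfrac{\mathbb{E}[Y]}{1-\alpha}\Delta+b$ on $\Delta\ge\theta$, the first-order condition giving $z^*(\Delta)=\max\{\theta-\Delta,0\}$ with $\theta=\beta-\mathbb{E}[Y]/(1-\alpha)$, the observation that $Y\ge\epsilon$ forces zero waiting after any failure so $L=\max\{\theta-y,0\}+Y'$, and the algebra turning $\mathbb{E}[J(Y,\beta)]=0$ into \eqref{eq-ageroot}. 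All of these computations check out.

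Two caveats. First, you cannot apply Lemma~\ref{lemma-pre1} ``verbatim'': its infimum is over the signal-aware class, so it decomposes the wrong problem; what you need is the same decomposition argument rerun with the infimum restricted to $\Pi_{\text{signal-agnostic}}$, which is exactly what Proposition~\ref{prop-agnostic} states (its proof is the specialization of Appendix~\ref{appprop-aware}). The substance is identical, but the restriction has to be made explicit. Second, the deferred verification step is the real content of the cited prior work, and your proposed tool for it is mismatched: in the signal-agnostic case the actions are nonnegative waiting times, not stopping times, so the excessive-function/martingale machinery of Theorem~\ref{thm1-v} (Shiryaev's free-boundary setup) is not the right instrument. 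What is needed is a discounted-MDP verification: show the piecewise candidate $V$ (affine above $\theta$, with $V'(\Delta)=\beta-\Delta$ below, which is $C^1$ at $\theta$) satisfies the Bellman equation for all $\Delta$, and then argue uniqueness/optimality over all history-dependent signal-agnostic policies via a weighted sup-norm contraction in the spirit of Lemma~\ref{lemma-contraction}, since $V$ is unbounded (affine in the age). With those two adjustments your argument closes and is a legitimate self-contained alternative to citing \cite{pan2022optimal}.
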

Theorem~\ref{thm2} provides the same sampling policy as that of \cite[Theorem~1]{pan2022optimal}. But we slightly improve \cite[Theorem~1]{pan2022optimal} by removing its assumption of the regenerative process. The proof of this improvement is provided in Appendix~\ref{appprop-thm2}.

Different from Theorem~\ref{thm1}, the optimal sampling policy is a threshold policy on the age, or equivalently, the MMSE, instead of the instantaneous estimation error. Note that the age keeps increasing over time if there is no successful delivery. 
As a result, if the previous transmission failed, the age is always larger than the optimal threshold $  \beta - \frac{\mathbb{E} \left[ Y \right]}{1-\alpha}$.
Therefore, 
Theorem~\ref{thm2} tells that if the previous transmission is successful, the sampler may wait for some time until the current age exceeds the threshold $  \beta - \frac{\mathbb{E} \left[ Y \right]}{1-\alpha}$. If the previous transmission failed, the sampler chooses zero-wait. This is another key difference from the optimal signal-aware sampling policy in Theorem~\ref{thm1}. In Theorem~\ref{thm1}, due to the randomness of the Wiener process, each sampler may need to wait, regardless of the outcome of the previous transmission. In addition, since there is only one waiting time between two successful deliveries, the optimal objective value $\beta$ is the root of the closed form expression \eqref{eq-ageroot}. But the root function of $\beta$ for the signal-aware case in \eqref{eq-betaroot2} is not closed-form. Instead, as is illustrated in Theorem~\ref{thm1} and Algorithm~\ref{algr1}, we need to construct a sequence of functions $J_n$'s to approach the root function.  

\begin{figure}[t]
\includegraphics[scale=.30]{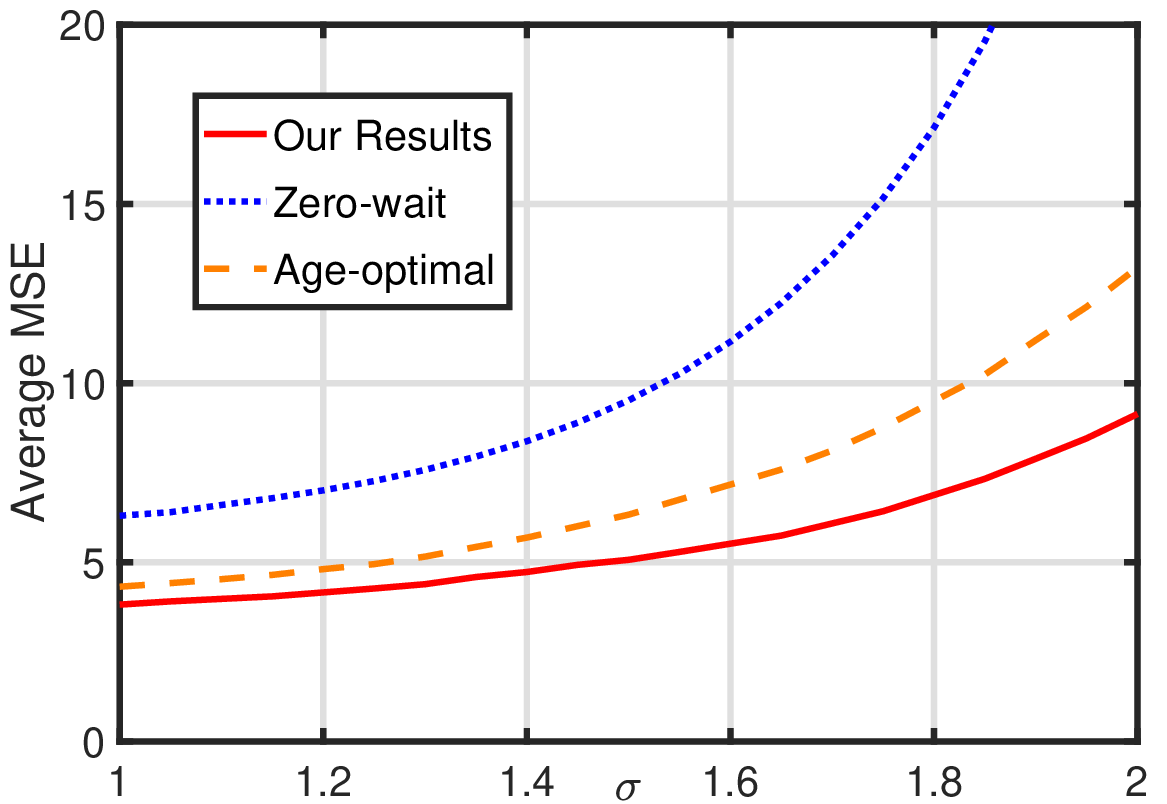}
\centering
\captionsetup{justification=justified}
\caption{Average MSE versus $\sigma$, where the channel delay is lognormal distributed with the parameter $\sigma$. As $\sigma$ increases, the channel delay distribution is more heavy-tailed. The probability of i.i.d. transmission failure $\alpha=0.65$.}
 \label{fig-sigma}
\end{figure}
\begin{figure}[t]
\includegraphics[scale=.30]{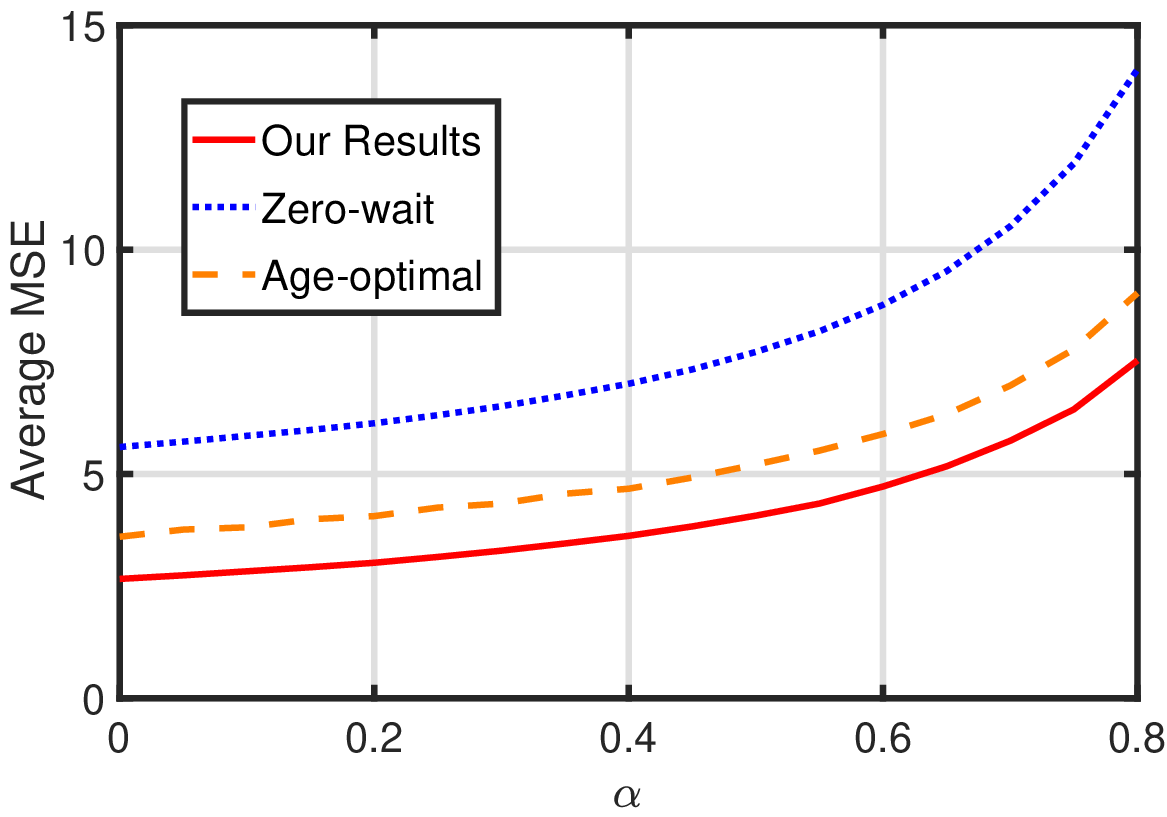}
\centering
\captionsetup{justification=justified}
\caption{Average MSE versus the probability of i.i.d. transmission failure $\alpha$, where the channel delay is lognormal distributed with the parameter $\sigma = 1.5$.}
 \label{fig-alpha-logn}
\end{figure}


\section{Simulation}\label{numerical}

In this section, we will compute the long term average MMSE (average MSE) of the following three sampling policies: 

$1$. Our Results: our optimal sampling policy, which is the solution to problem~\eqref{avg-aware}, provided in Theorem~\ref{thm1-v}---\ref{thm1-root}. The average MSE is then computed in Algorithm~\ref{algr1}. It waits until the estimation error exceeds a threshold.

$2$. Zero-wait: The source transmits a sample once it receives the feedback, i.e., $S_{i+1}=D_i$. This simple policy can achieve the maximum throughput and the minimum delay. However, even in the case of a reliable channel, it may not optimize the age of information \cite{yates2015lazy} or optimize the estimation error \cite{sun2020sampling}. In our study with an unreliable channel, Theorem~\ref{thm2} implies that the zero-wait policy does not optimize the age. Moreover, Theorem~\ref{thm1-v}---\ref{thm1-root} imply that the zero-wait policy does not optimize the estimation error.       

$3$. Age-optimal: This policy is provided in Theorem~\ref{thm2}, restated in \cite[Theorem~1]{pan2022optimal}, and the average MSE is computed by \cite[Algorithm~1]{pan2022optimal}. Age-optimal policy achieves the optimal average age. It waits until the age (i.e., MMSE $\mathbb{E}[(W_t - \hat{W}_t)^2]$) exceeds a threshold.  

We will follow the same network system as is illustrated in Section~\ref{estimation} and Fig.~\ref{model}.
We consider two scenarios about the delay distribution of the unreliable channel: heavy-tailed distribution (e.g., lognormal distribution) and short-tailed distribution (e.g., constant). 


\begin{figure}[t]
\includegraphics[scale=.32]{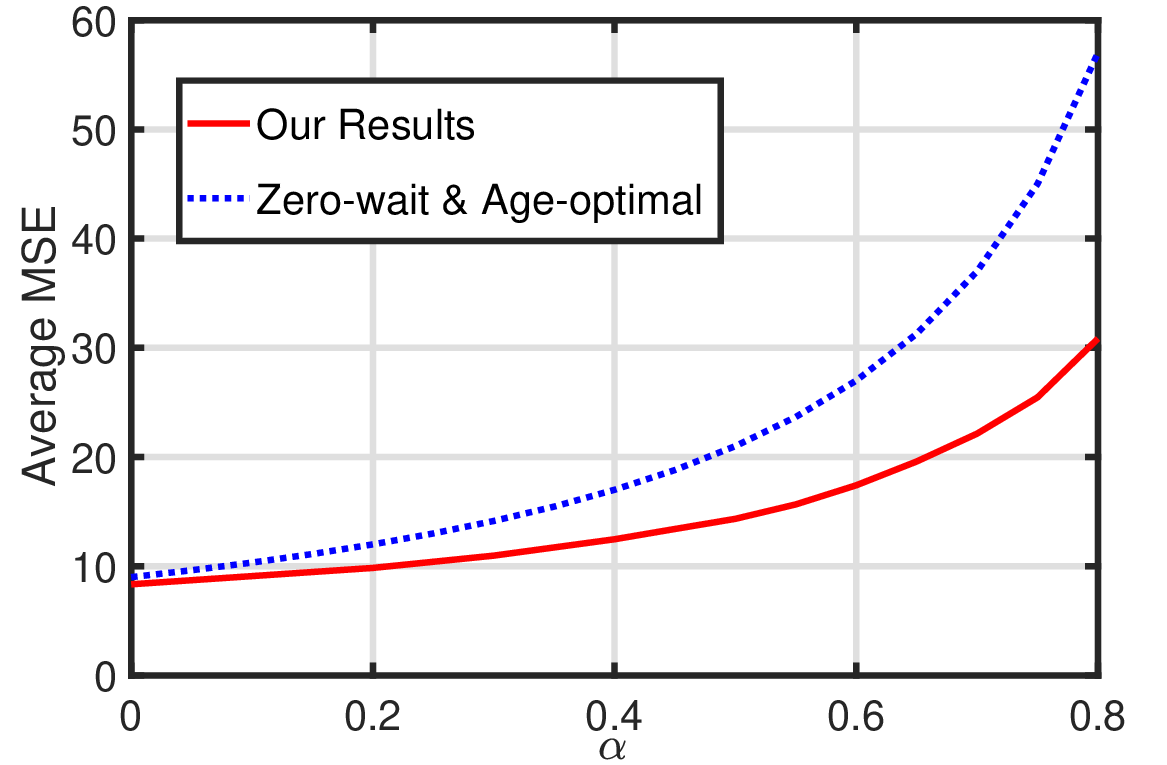}
\centering
\captionsetup{justification=justified}
\caption{Average MSE versus $\alpha$, where the channel delay is a constant with the delay $Y=6$.}
 \label{fig-alpha-const}
\end{figure}

In the first scenario, we assume that the channel delay follows a lognormal distribution. The lognormal random variable with scale parameter $\sigma$ is expressed as $e^{\sigma A}/ \mathbb{E} [e^{\sigma A}]$, where $A$ is the standard normal random variable. Fig.~\ref{fig-sigma} illustrates the relationship between the average MSE of the four sampling policies with parameter $\sigma$ of lognormal channel delay, given a discount factor $\alpha$ (probability of failure of the channel). The numerical results validate that our proposed policy always achieves the lowest average MSE. Note that as $\sigma$ increases, the lognormal distribution of the channel becomes more heavy-tailed. We observe that the zero-wait policy is far from optimality, and the age-optimal policy also grows much quicker than our optimal policy. Therefore, our optimal policy substantially outperforms the age-optimal and zero-wait policies when the channel delay becomes heavy-tailed.    
Fig.~\ref{fig-alpha-logn} plots the evolutions of the average MSE with the change of $\alpha$ given that the parameter $\sigma = 1.5$. From our observation, the zero-wait policy is always far from our optimal policy. 

In the second scenario, we assume that the channel delay is a constant. Fig.~\ref{fig-alpha-const} depicts the evolution of the average MSE of different policies with the change of $\alpha$. Note that the age-optimal policy is equivalent to the zero-wait policy when the delay is a constant, as is shown in \cite[Corollary~3]{pan2022optimal}. We observe that when the channel connectivity is more reliable ($\alpha$ very small), then the zero-wait policy is only slightly inferior to the optimal policy. However, as $\alpha$ increases, the zero-wait policy becomes far from optimality. \textcolor{black}{The intuitive reason is as follows:} since the Wiener process oscillates, with a nontrivial probability, our optimal policy waits at each sample, \textcolor{black}{no matter} whether the last transmission failed or not. Compared to the zero-wait policy, such a quite different sampling strategy leads to much improvement for the average MSE. This is the newly observed phenomenon that has not been found in the previous studies, e.g., \cite{sun2020sampling,sun2019sampling,pan2022optimal,ornee2019sampling}.

In summary, our optimal policy can perform much better than the zero-wait and the age-optimal policy when either (i) the transmission time is heavy-tailed, or (ii) the transmission time is light-tailed, and the channel is highly unreliable.


\section{Proof of Main Results}\label{section-proof}


In this section, we provide the proof for \textcolor{black}{efficiently} solving the optimal signal-aware policy for \eqref{avg-aware}. In Section~\ref{proof-preliminary}, we first show that there exists an optimal policy such that the inter-sampling time of the successfully delivered packet is i.i.d. Thus, the long term average MMSE in \eqref{avg-aware} is equal to the average MMSE only between the two successful delivery times. In Section~\ref{section-refor}, after linearizing, the reduced problem is equivalent to optimizing a discrete time discounted problem with multiple stopping times \eqref{eq-singlepoch-form}. This new problem a strict generalization to a discrete time discounted MDP, \textcolor{black}{where} each action is extended to be a stopping time. To solve \eqref{eq-singlepoch-form}, in Section~\ref{section-eq-solve}, we first speculate that the optimal policy and its optimal value function satisfy the Bellman equation. Then, we use a value iteration algorithm to approach the optimal value function, where each iteration is an optimal stopping problem. Interestingly, we analytically solve the optimal stopping time for each iteration, which is a key technical contribution in this paper. Finally, in Section~\ref{section-linearconv}, we use the contraction mapping property to show that the optimal value function of the value iteration algorithm convergences linearly to that of the Bellman equation. Thus, we exactly solve \eqref{eq-singlepoch-form}. This ends the proof.

\subsection{Reducing to a Single-epoch Problem}\label{proof-preliminary}

\subsubsection{Replacing the subscript $i$ by $(j,k)$}
The proof relies on the number of successfully delivered samples and the number of samples attempted for a successful delivery. These messages cannot be easily described in $\{ S_i,Y_i,D_i\}$'s by using only one subscript $i$. Therefore, for notational simplicity, throughout Section~\ref{section-proof}, we will replace $S_i,Y_i,D_i$ by $S_{j,k},Y_{j,k},D_{j,k}$, respectively. Here, 
we denote the $j$th \emph{epoch} to be the time \textcolor{black}{interval} between the $(j-1)$th and the $j$th successful deliveries. Let $M_j$ represent the total number of transmissions attempted during the $j$th epoch. Then, $M_j$ has a geometric distribution with parameter $1-\alpha$. Note that if the channel is reliable, then $M_j=1$. 
In addition, $k \in \{1,2,\ldots, M_j \}$ represents the index of transmission for the $j$th epoch, where the case $k=1$ implies that the last transmission \textcolor{black}{was} successful.  
Note that the mapping from $i$ to $(j,k)$ is one-to-one. For example, in Fig~\ref{fig-mmse}, $S_1 = S_{1,1}$ with $M_1=1$, $S_2 = S_{2,1},S_{3} = S_{2,2}$ with $M_2 = 2$, and $S_4 = S_{3,1}$ with $M_3=1$. 

By \eqref{eq-msei}, the MMSE estimator $\hat{W}_t$ is expressed as 
\begin{align} 
\hat{W}_t  = W_{S_{j-1,M_{j-1}}}, t\in [D_{j-1,M_{j-1}},D_{j,M_{j}}). 
\end{align}


\subsubsection{Reducing to a Single-epoch Problem}

We aim to show that solving the original problem \eqref{avg-aware} can be reduced to solving the optimal sampling times $S_{j,1},S_{j,2},\ldots$ within an epoch $j$ over a subset of the policy space $\Pi_{\text{signal-aware}}$. 
We denote such the subset $\Pi_j$ as a collection of sampling times $S_{j,1},S_{j,2},\ldots$ within epoch $j$ such that each inter-sampling time $\{S_{j,k}-S_{j-1,M_{j-1}}, k=1,2,\ldots\}$ is independent of the history information before $S_{j-1,M_{j-1}}$. 
The following result shows that our average cost problem \eqref{avg-aware} reduces to a single epoch problem (with arbitrary index $j$) that contains possibly multiple samples from one successful delivery time until the next successful delivery time.

%
\begin{proposition}\label{prop-aware}
There exists an optimal policy for the problem \eqref{avg-aware} such that $\{ S_{j,M_{j}} - S_{j-1,M_{j-1}} \}_j$ are i.i.d. Moreover, problem~\eqref{avg-aware} is equivalent to
\begin{align}
\text{mse}_{\text{opt}} = & \inf_{(S_{j,1},S_{j,2},\ldots) \in \Pi_j} \frac{ \mathbb{E} \left[  \int_{D_{j-1,M_{j-1}}}^{D_{j,M_j}} (W_t - W_{S_{j-1,M_{j-1}}})^2 dt \right]}{ \mathbb{E} \left[   D_{j,M_j} - D_{j-1,M_{j-1}} \right] }. \label{eq-singlepoch-mse}
\end{align} 
\end{proposition}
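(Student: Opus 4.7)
My plan is to combine the strong Markov property of the Wiener process at successful delivery times with a renewal-reward argument and a Dinkelbach-style linearization to reduce the long-term average MSE to the per-epoch ratio form.

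First, I would establish a renewal structure at successful delivery times. The time $D_{j,M_j}$ is a stopping time of $\mathcal{F}(\cdot)^+$ because $S_{j,M_j}$ is a sampling stopping time and $Y_{j,M_j}$ is i.i.d.\ and independent of everything prior. By the strong Markov property, the process $(W_{t+D_{j,M_j}}-W_{D_{j,M_j}})_{t\ge 0}$ is a fresh Wiener process independent of $\mathcal{F}(D_{j,M_j})^+$. Combined with $\hat{W}_{D_{j,M_j}}=W_{S_{j,M_j}}$ from~\eqref{eq-msei}, the estimation error at the start of the next epoch is $W_{D_{j,M_j}}-W_{S_{j,M_j}}$, whose distribution matches that of $W_Y$ independently of any history before that epoch. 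Thus the system \emph{regenerates in distribution} at every successful delivery.

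Second, let $L_j=D_{j,M_j}-D_{j-1,M_{j-1}}$ and $R_j=\int_{D_{j-1,M_{j-1}}}^{D_{j,M_j}}(W_t-\hat{W}_t)^2\,dt$. For any policy whose epoch-$j$ decisions lie in $\Pi_j$, Step~1 makes $(L_j,R_j)_j$ i.i.d., and the $\mathbb{E}[W_{\tilde{\tau}}^4]<\infty$ condition built into $\Pi_{\text{signal-aware}}$ combined with $\mathbb{E}[Y^2]<\infty$ yields $\mathbb{E}[L_1]<\infty$ and $\mathbb{E}[R_1]<\infty$. The renewal-reward theorem then gives
\[
\limsup_{T\to\infty}\frac{1}{T}\,\mathbb{E}\!\left[\int_0^T(W_t-\hat{W}_t)^2\,dt\right]=\frac{\mathbb{E}[R_1]}{\mathbb{E}[L_1]},
\]
so $\text{mse}_{\text{opt}}\le$ the RHS of~\eqref{eq-singlepoch-mse}, and using a (near-)minimizing per-epoch rule identically in every epoch delivers the claimed i.i.d.\ structure of $\{S_{j,M_j}-S_{j-1,M_{j-1}}\}_j$ (via $L_j$ and the i.i.d.\ final delays).

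Third, for the reverse inequality I would apply a Dinkelbach/linearization step. With $\beta=\text{mse}_{\text{opt}}$, set $J^{\star}(\beta)\triangleq\inf_{\Pi_j}\mathbb{E}[R_j-\beta L_j]$. For any policy in $\Pi_{\text{signal-aware}}$, the Step-1 regeneration yields $\mathbb{E}[R_j-\beta L_j\mid \mathcal{F}(D_{j-1,M_{j-1}})^+]\ge J^{\star}(\beta)$, since conditional on the history, any admissible rule in epoch $j$ reduces to a rule on the fresh post-$D_{j-1,M_{j-1}}$ Wiener/delay/channel randomness, i.e.\ exactly the single-epoch problem. Summing over $j\le N$, dividing by $N\,\mathbb{E}[L_1]$, and taking $N\to\infty$ via Wald's identity converts the time-average cost into $\beta+J^{\star}(\beta)/\mathbb{E}[L_1]$, so consistency with $\text{mse}_{\text{opt}}=\beta$ forces $J^{\star}(\beta)\le 0$, i.e.\ the infimum in~\eqref{eq-singlepoch-mse} is $\le\text{mse}_{\text{opt}}$.

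The main obstacle is this reverse inequality, where one must handle history-dependent policies in $\Pi_{\text{signal-aware}}$ and carefully control boundary terms in converting an epoch-level sum into a continuous-time $\limsup$. The key lever is Step~1's \emph{distributional, history-independent} regeneration, which makes the conditional per-epoch problem isomorphic to~\eqref{eq-singlepoch-mse}; the fourth-moment condition in $\Pi_{\text{signal-aware}}$ then supplies the uniform integrability needed for both the renewal-reward theorem and the limit exchange.
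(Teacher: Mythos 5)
Your overall skeleton (achievability by repeating a history-independent per-epoch rule, plus a converse via a conditional per-epoch lower bound with a Dinkelbach-style linearization) matches the spirit of the paper's Appendix~\ref{appprop-aware}, but your converse step has a genuine gap exactly where the paper does its real work. For an \emph{arbitrary} policy in $\Pi_{\text{signal-aware}}$ the epoch lengths $L_j$ and rewards $R_j$ are neither i.i.d.\ nor identically distributed (the sampling rules may depend on the whole history), so ``dividing by $N\,\mathbb{E}[L_1]$'' and invoking Wald's identity or the renewal-reward theorem is not valid as stated; the quantity $\mathbb{E}[L_1]$ has no distinguished role for such policies. The paper's stated purpose is precisely to \emph{remove} the regenerative/i.i.d.\ assumption for general policies, and your Step~3 implicitly reintroduces it. The correct route (which the paper takes in the chain of inequalities \eqref{eq-lowbound}) works directly with the time horizon $T$: it conditions on the history $\mathcal{H}_{j-1}$ up to $S_{j-1,M_{j-1}}$, uses that the truncation indicator $\mathds{1}_{j\le N(T)+1}$ is $\mathcal{H}_{j-1}$-measurable to interchange it with the conditional expectation, bounds each conditional epoch by the history-independent optimum $R_{\min}$, and then uses $\mathbb{E}\bigl[\sum_j l^e_j\,\mathds{1}_{j\le N(T)+1}\bigr]\ge T$ to recover a bound per unit time without ever normalizing by $\mathbb{E}[L_1]$. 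Your conditioning on $\mathcal{F}(D_{j-1,M_{j-1}})^+$ also reveals the initial error of the epoch, so the quantity you actually lower-bound is $J(w,\beta)$ at a realized $w$, and you still need the measurability-of-truncation argument to sum over a random, policy-dependent number of epochs.

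The second missing ingredient is the boundary control you wave at as ``uniform integrability'': converting $\limsup_{T\to\infty}\frac1T\mathbb{E}\bigl[\int_0^T(W_t-\hat W_t)^2dt\bigr]$ into statements about sums of whole epochs requires showing that the partial epochs straddling $T$ contribute nothing asymptotically, i.e.\ $\mathbb{E}[R_{N(T)}+R_{N(T)+1}]/T\to 0$. This is the paper's Lemma~\ref{lemma-tail0}, and it is not automatic: it uses the assumption that each waiting time is bounded by a stopping time independent of the earlier history together with $\mathbb{E}[W^4_{\tilde\tau}]<\infty$ and $\mathbb{E}[Y^2]<\infty$, the martingale identity $\mathbb{E}[\int_0^\tau W_t^2dt]=\tfrac16\mathbb{E}[W_\tau^4]$, and a renewal-function estimate in the style of Wu et al.\ to show the tail term vanishes. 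Your achievability direction (Step~2) is essentially right, modulo noting that i.i.d.\ epochs require using the \emph{same} $\Pi_j$-rule in every epoch, not merely membership in $\Pi_j$; but without the two ingredients above the converse inequality, and hence the claimed equivalence with \eqref{eq-singlepoch-mse}, is not established.
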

 \begin{proof}
See Appendix~\ref{appprop-aware}.
\end{proof}
Proposition~\ref{prop-aware} implies that to solve the \textcolor{black}{long term} average MMSE problem \eqref{avg-aware}, we can solve a problem with only a single epoch. Each sampling decision in this epoch is independent of the history information prior to the final sampling time of the previous epoch. Proposition~\ref{prop-aware} is motivated by \cite{sun2019sampling,sun2020sampling} under a reliable channel. In these studies, the original problem is reduced to an average MMSE problem between two delivery times (a single sample problem). One of the key reasons is that at each delivery time, the estimation error is updated and is independent of the history information before the last sampling time. But in our unreliable case, at a failed delivery time, the estimation error is not updated and is still correlated to that history information. Thus, our single epoch problem cannot be further reduced to a single sample problem. In addition, we also improve \cite{sun2019sampling,sun2020sampling} by removing the assumption of the regenerative process. A similar result to Proposition~\ref{prop-aware} is presented in \cite{arafa2021timely} with an unreliable channel and signal-agnostic sampling, without the assumption of the regenerative process. We also generalize \cite{arafa2021timely} since our sampling time depends on the Wiener process.    

Although we have reformulated the long term average MMSE problem~\eqref{avg-aware} into an average MMSE problem within a single epoch \eqref{eq-singlepoch-mse}, problem \eqref{eq-singlepoch-mse} is still hard to solve. This is because it contains a fraction and thus is a repeated semi-MDP. 


\subsection{Reformulating as a Multiple Stopping Times Problem: an Extension to a Discounted MDP}\label{section-refor}

In this section, we will linearize problem~\eqref{eq-singlepoch-mse} and reformulate it as a discounted cost and repeated Markov decision process (MDP), where each action is a stopping time.


Let us define a minimization problem with a parameter $\beta \in \mathbb{R}$: 
\begin{align}
h(\beta) = \inf_{\pi\in \Pi_j} \mathbb{E} \left[   \int_{D_{j-1,M_{j-1}}}^{D_{j,M_j}} (W_t - W_{S_{j-1,M_{j-1}}})^2 dt - \beta (D_{j,M_j} - D_{j-1,M_{j-1}})  \right]. \label{eq-ha}
\end{align} Here, $\pi = (S_{j,1},S_{j,2},\ldots)$.
By Dinkelbach's method \cite{dinkelbach1967nonlinear}, we have 
\begin{lemma}\label{dinklebach-lemma}

(i) $h(\beta) \lesseqqgtr 0$ if and only if $\text{mse}_{\text{opt}} \lesseqqgtr \beta $.

(ii) When $\beta = \text{mse}_{\text{opt}}$, the solution to \eqref{eq-singlepoch-mse} and \eqref{eq-ha} are equivalent.
\end{lemma}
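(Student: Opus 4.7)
The plan is to invoke the standard Dinkelbach's linearization trick. Let me introduce compact notation: for any policy $\pi = (S_{j,1},S_{j,2},\ldots) \in \Pi_j$, write
\[
N(\pi) \triangleq \mathbb{E}\!\left[\int_{D_{j-1,M_{j-1}}}^{D_{j,M_j}} (W_t - W_{S_{j-1,M_{j-1}}})^2 \, dt\right], \qquad L(\pi) \triangleq \mathbb{E}\!\left[D_{j,M_j} - D_{j-1,M_{j-1}}\right].
\]
With this, Proposition~\ref{prop-aware} becomes $\text{mse}_{\text{opt}} = \inf_{\pi \in \Pi_j} N(\pi)/L(\pi)$, while \eqref{eq-ha} reads $h(\beta) = \inf_{\pi \in \Pi_j}\{N(\pi) - \beta L(\pi)\}$. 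First I would record the basic fact that $L(\pi) \geq \mathbb{E}[Y_{j,1}] \geq \epsilon > 0$ for every feasible $\pi$, since the epoch contains at least one transmission and the transmission time is lower-bounded by $\epsilon$ (Section~\ref{system-a}). This ensures we may divide through by $L(\pi)$ without worry.

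For part (i), I would argue the three cases of $\lesseqqgtr$ together. Suppose $h(\beta) \geq 0$, i.e.\ $N(\pi) - \beta L(\pi) \geq 0$ for every $\pi$. Dividing by $L(\pi) > 0$ gives $N(\pi)/L(\pi) \geq \beta$ for every $\pi$, hence $\text{mse}_{\text{opt}} \geq \beta$. Conversely, if $\text{mse}_{\text{opt}} \geq \beta$, then $N(\pi)/L(\pi) \geq \beta$ for every $\pi$, which multiplied by $L(\pi)$ gives $N(\pi) - \beta L(\pi) \geq 0$, and taking infimum yields $h(\beta) \geq 0$. The strict versions ($>$ and $<$) follow by the same manipulation, while the $=$ case is obtained either by combining the two inequality cases or by the same argument with equality preserved in both directions.

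For part (ii), once part (i) is established at $\beta = \text{mse}_{\text{opt}}$ we know $h(\text{mse}_{\text{opt}}) = 0$. Then for any $\pi$,
\[
N(\pi) - \text{mse}_{\text{opt}} L(\pi) \;=\; L(\pi)\left(\frac{N(\pi)}{L(\pi)} - \text{mse}_{\text{opt}}\right),
\]
and because $L(\pi) > 0$, the left-hand side vanishes (attains the infimum $h(\text{mse}_{\text{opt}}) = 0$) if and only if $N(\pi)/L(\pi) = \text{mse}_{\text{opt}}$, i.e.\ if and only if $\pi$ achieves the infimum in \eqref{eq-singlepoch-mse}. This shows the two optimization problems share the same set of optimizers, which is the required equivalence.

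I do not expect any serious technical obstacle here: the argument is the classical Dinkelbach equivalence between a fractional program and its parametric linearization. The only points requiring care are (a) verifying $L(\pi) > 0$ uniformly so that the division is legitimate, which follows from $Y_{j,k} \geq \epsilon$, and (b) handling the infimum-versus-minimum distinction, which is harmless because the equivalence in part (ii) is stated at the level of optimality, not attainability — if no optimizer exists for one, then none exists for the other, and the correspondence holds along optimizing sequences by continuity of the map $\pi \mapsto L(\pi)(\cdot - \text{mse}_{\text{opt}})$.
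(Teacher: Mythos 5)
The paper itself gives no written proof of this lemma (it simply cites Dinkelbach's method), so the comparison is between your filled-in argument and the standard one it appeals to. The easy directions of your argument are fine: $h(\beta)\ge 0\iff \text{mse}_{\text{opt}}\ge\beta$ (and hence $h(\beta)<0\iff\text{mse}_{\text{opt}}<\beta$) follows exactly as you say, and $\text{mse}_{\text{opt}}>\beta\Rightarrow h(\beta)>0$ follows from multiplying by $L(\pi)\ge\mathbb{E}[Y]\ge\epsilon$, which you correctly isolate. The gap is in the remaining direction, which you dismiss with ``the strict versions follow by the same manipulation'': from $h(\beta)>0$ you can only divide $N(\pi)-\beta L(\pi)\ge h(\beta)$ by $L(\pi)$ to get $N(\pi)/L(\pi)\ge \beta+h(\beta)/L(\pi)$, and since $L(\pi)$ is \emph{not} uniformly bounded above over $\Pi_j$ (waiting times can be arbitrarily long), the infimum of the right-hand side can still equal $\beta$. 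Equivalently, the claim $\text{mse}_{\text{opt}}=\beta\Rightarrow h(\beta)=0$, i.e.\ $h(\text{mse}_{\text{opt}})\le 0$, does not follow from these manipulations alone: in the abstract setting it is simply false. Take $N(t)=\beta^{*}t+1$, $L(t)=t$, $t\ge 1$; then $\inf_t N(t)/L(t)=\beta^{*}$ is not attained, yet $\inf_t\{N(t)-\beta^{*}L(t)\}=1>0$. Since part (ii) of your proof, and the paper's use of the lemma in Theorem~\ref{thm1-root} (that $\text{mse}_{\text{opt}}$ \emph{is} a root of $h(\beta)=0$), both rest on $h(\text{mse}_{\text{opt}})=0$, this is not a vacuous case.

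To close the gap you need a problem-specific ingredient ruling out ratio-minimizing sequences with $L(\pi_n)\to\infty$. One route is coercivity: within an epoch the error is an uncontrolled Wiener excursion, and identities such as $\mathbb{E}[\int_0^\tau W_t^2\,dt]=\tfrac16\mathbb{E}[W_\tau^4]$ together with Jensen/Wald give $N(\pi)\gtrsim L(\pi)^2$, so any policy with ratio at most $\text{mse}_{\text{opt}}+1$ has $L(\pi)$ bounded by a fixed constant $C$; then along a minimizing sequence $N(\pi_n)-\text{mse}_{\text{opt}}L(\pi_n)\le C\,(N(\pi_n)/L(\pi_n)-\text{mse}_{\text{opt}})\to 0$, giving $h(\text{mse}_{\text{opt}})\le 0$. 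Alternatively, one can restructure as the paper effectively does later: the linearized problem is shown to be attained by a threshold policy for every $\beta$, so at the root $\beta^{*}$ of $h$ the attaining policy $\pi^{*}$ satisfies $N(\pi^{*})=\beta^{*}L(\pi^{*})$, hence $\text{mse}_{\text{opt}}\le\beta^{*}$, while $h(\beta^{*})=0$ gives $N(\pi)/L(\pi)\ge\beta^{*}$ for all $\pi$, hence $\text{mse}_{\text{opt}}=\beta^{*}$ and the two solution sets coincide. Either way, some such argument must be supplied; the bare Dinkelbach manipulation you rely on does not deliver the equality case over this unbounded policy class.
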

Therefore, to solve \eqref{eq-singlepoch-mse}, we will solve $h(\text{mse}_{\text{opt}})=0$. 

We denote $Z_{j,k}$ as the waiting time for the $k$th sample in epoch $j$. Then,
\begin{align}
D_{j,M_j} - D_{j-1,M_{j-1}} & = \sum_{k=1}^{M_j} Z_{j,k}+Y_{j,k}. \label{eq-length}
\end{align}
Then, combined with \eqref{eq-length} and the strong Markov property of the Wiener process, given that $W_{D_{j-1,M_{j-1}}}-W_{S_{j-1,M_{j-1}}}=w$, $w\in \mathbb{R}$, we have 
\begin{align}
 \int_{D_{j-1,M_{j-1}}}^{D_{j,M_j}} (W_t - W_{S_{j-1,M_{j-1}}})^2 dt & = \int_{0}^{ \sum_{k=1}^{M_j} Z_{j,k}+Y_{j,k}} (W_t+w)^2 dt. \label{eq-int-length}
\end{align}
As a result, \eqref{eq-length} and \eqref{eq-int-length} give: 
\begin{lemma}\label{lemma-singlepoch}
An optimal solution to \eqref{eq-singlepoch-mse} given that $W_{D_{j-1,M_{j-1}}}-W_{S_{j-1,M_{j-1}}}=w$, $w\in \mathbb{R}$ satisfies 
\begin{align}
 J(w) & \triangleq \inf_{\pi\in \Pi_j} J_\pi(w),  \label{eq-singlepoch-form}
 \\
 J_\pi(w) & \triangleq \mathbb{E} \left[    \int_{0}^{ \sum_{k=1}^{M_j} Z_{j,k}+Y_{j,k}} (W_t+w)^2 dt - \text{mse}_{\text{opt}} ( \sum_{k=1}^{M_j} Z_{j,k}+Y_{j,k})  \right]. \label{eq-singlepoch}
\end{align} 
\end{lemma}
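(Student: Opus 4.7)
The plan is to treat Lemma~\ref{lemma-singlepoch} as a clean corollary of the three results already assembled: Proposition~\ref{prop-aware} (which gives the single-epoch reformulation \eqref{eq-singlepoch-mse}), Lemma~\ref{dinklebach-lemma} (Dinkelbach linearization), and the two algebraic identities \eqref{eq-length}--\eqref{eq-int-length} that encode the strong Markov property. Essentially the lemma is obtained by stringing these together with a single substitution.

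First I would apply Lemma~\ref{dinklebach-lemma}(ii) to \eqref{eq-singlepoch-mse} at $\beta = \text{mse}_{\text{opt}}$: a minimizer of the fractional objective in \eqref{eq-singlepoch-mse} is the same as a minimizer of the linearized objective $h(\beta)$ in \eqref{eq-ha}. This reduces the task to analyzing the expectation inside \eqref{eq-ha} with $\beta = \text{mse}_{\text{opt}}$.

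Next I would invoke the strong Markov property at the (random but stopping) time $D_{j-1,M_{j-1}}$. Conditional on $W_{D_{j-1,M_{j-1}}} - W_{S_{j-1,M_{j-1}}} = w$, the shifted process $\tilde{W}_t := W_{D_{j-1,M_{j-1}} + t} - W_{D_{j-1,M_{j-1}}}$ is a standard Wiener process independent of $\mathcal{F}(D_{j-1,M_{j-1}})$, so for every $t \in [D_{j-1,M_{j-1}}, D_{j,M_j}]$ one has $W_t - W_{S_{j-1,M_{j-1}}} = \tilde{W}_{t - D_{j-1,M_{j-1}}} + w$. Combining this with the epoch-length decomposition \eqref{eq-length}, where $Z_{j,k} = S_{j,k} - D_{j,k-1}$ (with the convention $D_{j,0} = D_{j-1,M_{j-1}}$), yields precisely the representation \eqref{eq-int-length}. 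Substituting both representations into the linearized objective in \eqref{eq-ha} at $\beta = \text{mse}_{\text{opt}}$ and renaming $\tilde{W}$ back to $W$ gives the objective $J_\pi(w)$ of \eqref{eq-singlepoch}; taking the infimum over $\Pi_j$ yields $J(w)$ as in \eqref{eq-singlepoch-form}, completing the proof.

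The only subtle point, which I would make explicit, is to verify that the infimum after conditioning is taken over the \emph{same} class $\Pi_j$ rather than a larger or smaller class. By the definition of $\Pi_j$, every inter-sampling time $S_{j,k} - S_{j-1,M_{j-1}}$ depends only on information generated after $S_{j-1,M_{j-1}}$; under the strong Markov decomposition this information is equivalently encoded in $\tilde{W}$ and the fresh delays $\{Y_{j,k}\}_k$, so the set of admissible waiting-time sequences $\{Z_{j,k}\}_k$ relative to the shifted Wiener process is exactly $\Pi_j$. This is the one technical check in the argument; once it is in place the rest of the derivation is a direct substitution, and no genuinely new optimization is performed here.
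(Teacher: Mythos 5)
Your proposal is correct and follows essentially the same route as the paper: Dinkelbach linearization (Lemma~\ref{dinklebach-lemma}) of the single-epoch fractional problem \eqref{eq-singlepoch-mse}, then the epoch-length decomposition \eqref{eq-length} together with the strong Markov shift at $D_{j-1,M_{j-1}}$ giving \eqref{eq-int-length}, followed by direct substitution. Your added remark verifying that the admissible class after the shift is exactly $\Pi_j$ is a sensible explicit check that the paper leaves implicit, but it does not change the argument.
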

Here, $J(w)$ is the total cost of the optimal policy, which is also called the \emph{optimal value function}. And $J_\pi(w)$ is the total cost of a policy, which is also called the \emph{action value function} with a policy $\pi$. 

For any policy $\pi$, the action value function $J_\pi(w)$ in \eqref{eq-singlepoch} is further written as
\begin{align}
J_\pi(w) & =   \mathbb{E} \left[  \sum_{k=1}^{M_j} g( \tilde{W}_{k}; Z_{j,k}) |  \tilde{W}_{1} = w \right], \label{eq-totalsingle2}
 \end{align}
where the state values $\tilde{W}_k, \ k=1,2\ldots$ satisfy 
\begin{align}
 &  \tilde{W}_{k+1}  =  \tilde{W}_{k}+W_{Z_{j,k}+Y_{j,k}}, k=1,2,\ldots, \label{eq-evolution2}
\end{align}
 $g(w; \tau)$, also called a \emph{per stage cost function}, is the expected integration of square estimation error minus $ \text{mse}_{\text{opt}}$ from the last delivery time to the next delivery time,\footnote{For comparison, $J_\pi(w)$ is the expected integration of square estimation error minus $ \text{mse}_{\text{opt}}$ from the last delivery time to the next \emph{succsssful} delivery time.} where the initial estimation error is $w$, and the sampler's waiting time is $\tau$. $g(w; \tau)$ is defined as:
\begin{align}
g(w; \tau) &= \mathbb{E} \left[   \int_{0}^{\tau+Y} (w+W_t)^2 dt - \text{mse}_{\text{opt}} (\tau+Y) \right], \label{eq-costgtau}
\end{align}
where $Y$ has the same distribution as the channel delay. The equation \eqref{eq-totalsingle2} holds because of the strong Markov property of the Wiener process. 

Note that $J_\pi(w)$ represents the expected cost of \textcolor{black}{square estimation error} minus a constant $\text{mse}_{\text{opt}}$ within an epoch. In an epoch, if the transmission is successful with probability $1-\alpha$, then the system will stop. Thus, the system state will enter a ``stopping'' set with $0$ cost; If the transmission fails with probability $\alpha$, the system state will enter the next transmission with a per-stage cost $g$. Therefore, 
\begin{align}
J_\pi(w) = \sum_{k=1}^{\infty} \alpha^{k-1} \mathbb{E} \left[  g(\tilde{W}_{k}; Z_{j,k}) | \tilde{W}_{1} = w \right], \label{eq-totalsingle} 
\end{align} which is proven in \cite[Appendix~F]{pan2022optimal}. The $k$th stage state $\tilde{W}_{k}$ implies that all the previous $k-1$ transmissions failed, and the coefficient $\alpha^{k-1}$ is the probability of $k-1$ consecutive failures.

Equations \eqref{eq-singlepoch-form}---\eqref{eq-totalsingle} imply that
problem \eqref{eq-singlepoch-form} belongs to a discounted cost problem with multiple stopping times, or in other words, a \emph{repeated MDP}, because there are multiple waiting times $Z_{j,1},Z_{j,2},\ldots$, and each waiting time is a stopping time. Suppose that each waiting time is not a stopping time, i.e., the waiting time policy \textcolor{black}{chooses} a real value that is independent of the Wiener process. Then, problem \eqref{eq-singlepoch-form} is reduced to a discrete time discounted cost MDP \cite{bertsekas1995dynamic1}. This is because: (i) the state at each stage $k$ is the estimation error at the $k-1$th delivery time, $\tilde{W}_{k}$ (when $k=1$, $\tilde{W}_{1}=w$ \eqref{eq-totalsingle}). (ii) The action at each stage $k$ is the waiting time for the $k$th sample, $Z_{j,k}$. (iii) The state transition is provided in \eqref{eq-evolution2}. (iv) The cost function is defined in \eqref{eq-costgtau}.

%
Note that the waiting times $Z_{j,1},Z_{j,2},\ldots$ are correlated. Thus, despite that we have linearized the problem~\eqref{eq-singlepoch-mse} into a multiple stopping time problem \eqref{eq-singlepoch-form}, problem~\eqref{eq-singlepoch-form} still faces the curse of dimensionality.

\subsection{Analytical Solution to the Value Iteration \eqref{eq-valueiterationn} for the Multiple Stopping Times Problem \eqref{eq-singlepoch-form}}\label{section-eq-solve}

In the special case where each waiting time $Z_{j,1},Z_{j,2},\ldots$ is not a stopping time, the optimal policy and the optimal value function to the discounted MDP satisfies the Bellman equation \cite[Chapter~9]{bertsekas2004stochastic}. The advantage of the Bellman equation is that it turns the MDP with correlated waiting times into an optimization problem over a single waiting time and thus helps reduce the complexity of the MDP. Suppose that we can propose a waiting time decision $z(w),w\in \mathbb{R}$ and the action value function of the stationary policy $z,z,\ldots$ that is the unique solution to the Bellman equation. Then, the policy $z,z,\ldots$ is an optimal policy. 

Similar to the previous MDP case, we believe that the optimal policy and the optimal value function of our repeated MDP \eqref{eq-singlepoch-form} still satisfies the Bellman equation\footnote{This statement is technically true if we can show that our action space is a Borel space (We call $B$ as a Borel space if there exists a complete separable metric space $R$ and a Borel subset $\tilde{B}\in \mathcal{B}_R$ such that $B$ is homeomorphic to $\tilde{B}$) \cite[Chapter~9]{bertsekas2004stochastic}. Examples of a Borel space are $\mathbb{R}$ and any real-valued intervals. For showing that our action space is a Borel space, we leave to our future studies.    }. Because except that each waiting time is extended to be a stopping time, our repeated MDP \eqref{eq-singlepoch-form} has the same components as that of a discounted MDP. The Bellman equation for our repeated MDP \eqref{eq-singlepoch-form} is defined as follows:
\begin{align}
J(w) =  T J(w) \triangleq \inf_{\tau \in  \mathfrak{M}} g(w; \tau)+\alpha \mathbb{E} \left[  J(w+W_{\tau+Y}) \right], \label{eq-bellman1}
\end{align} where $\mathfrak{M}$ is the set of stopping times on the Wiener process $W_t$ such that
 \begin{align}
 \mathfrak{M} = \left\{ \tau:  \{ \tau <t  \}\in \mathcal{F}(t)^+,   \mathbb{E} \left[ \tau^2 \right]<\infty   \right\},
 \end{align}
 where $ \mathcal{F}(t)^+ = \cap_{s>t}  \sigma(W_r, r\in [0,s])$. In \eqref{eq-bellman1}, $w+W_{\tau+Y}$ is the next state of estimation error, after a stopping time $\tau$ and a channel delay $Y$.
 
However, problem \eqref{eq-bellman1} is not an optimal stopping problem because the function $J$ exists in both sides. 
To overcome this issue and exactly solve \eqref{eq-bellman1}, our method in this paper is to use the value iteration algorithm \cite{bertsekas1995dynamic2} to convert \eqref{eq-bellman1} into multiple standard optimal stopping problems that are solvable. Specifically, we will construct a sequence of optimal stopping problems to approach the problem \eqref{eq-bellman1}, where in each optimal stopping problem, the action value functions are well-defined. 


We define the value iteration algorithm regarding to the problem \eqref{eq-bellman1} as follows:      
\begin{align}
\nonumber J_0(w) & \triangleq 0,\\
J_{n+1}(w) &  \triangleq T J_n(w) =  \inf_{\tau \in  \mathfrak{M}} g(w; \tau)+\alpha \mathbb{E} \left[  J_n(w+W_{\tau+Y}) \right], \ n=0,1,2,\ldots \label{eq-valueiterationn}
\end{align}
We also denote $\tau_1,\tau_2,\ldots$ as the optimal stopping time of the problem \eqref{eq-valueiterationn} when $n=1,2,\ldots,$ respectively. 
 \textcolor{black}{Then, $J_n(w)=T^n 0(w)$ is the discounted integrated cost from the first delivery time (the last transmission was successful) until at most the $n$th delivery time, where the $n$th transmission} implies that previous $n-1$ transmissions have failed. In addition, the waiting times for the $n$ transmissions are $\tau_1,\tau_2,\ldots, \tau_n$, respectively. Note that $J(w)$ is the discounted cost about infinite number of transmissions. Thus, our objective is to exactly solve \eqref{eq-valueiterationn} by figuring out $\tau_1,\tau_2,\ldots$ and show that $T^n 0(w) \rightarrow J(w)$ as $n\rightarrow \infty$.   

\subsubsection{Candidate Solutions to \eqref{eq-valueiterationn}}\label{section-candidate}


We speculate that each optimal stopping time $\tau_1,\tau_2,\ldots$ for \eqref{eq-valueiterationn} is a \emph{hitting time}, or in other words, \emph{threshold type}, defined as follows:
\begin{align}
\tau_n  = \inf_{t\ge 0}\{t:  |w+W_t|\ge v_n  \}, \ n=1,2,\ldots, \label{eq-taunstar}
\end{align}
where $w$, called the initial state, is the estimation error at the $n-1$th delivery time $D_{j,n-1}$ ($n=1$ implies that the last transmission was successful, and the delivery time is $D_{j-1,M_{j-1}}$). Next, we \textcolor{black}{aim to} find out the sequence of the optimal thresholds $v_1,v_2,\ldots$.  

Let us define a function $G_{n}(w)$ as follows:
\begin{align} 
G_{n}(w) = g(w;0) + \alpha \mathbb{E} \left[ J_{n-1} (w+W_Y) \right].\label{eq-root-qn2}
\end{align}
Intuitively, $G_n(w)$ is the action value function that chooses $0$ waiting time at the first stage, \textcolor{black}{incurs} the cost $g(w;0)$, and chooses the optimal waiting times at the remaining $n-1$ stages. Since the speculated optimal waiting time \eqref{eq-taunstar} is a hitting time, $J_n(w) = G_n(w)$ if $|w| \ge v_n$.  
In addition, we provide an alternative expression of $g(w;\tau)$:
\begin{lemma}\label{cost1stage}
\begin{align}
 g(w; \tau) = \mathbb{E} \left[ \int_{0}^{\tau}  (w+W_t)^2 - \text{mse}_{\text{opt}}  dt +\mathbb{E} \left[ Y \right] (w+W_\tau)^2  \right] +\frac{1}{2} \mathbb{E} \left[ Y \right]^2 - \mathbb{E} \left[ Y \right] \text{mse}_{\text{opt}}. \label{eq-cost1divide0}
 \end{align}
Moreover, if $\tau$ is a hitting time with a threshold $v$ given the initial value $w$. i.e., $\tau  = \inf_{t\ge 0}\{t:  |w+W_t|\ge v  \}$, then we have 
\begin{align}
 & g(w;\tau) =  g(w,v,\text{mse}_{\text{opt}}),
\end{align}
where $g(w,v, \text{mse}_{\text{opt}})$ is defined in \eqref{eq-gwvbeta}.
\end{lemma}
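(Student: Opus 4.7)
The plan is to establish the two identities in sequence by combining the strong Markov property of Brownian motion at $\tau$ with classical optional-stopping identities for polynomial martingales.

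For the first identity, I would decompose $\int_{0}^{\tau+Y}(w+W_t)^2\,dt = \int_{0}^{\tau}(w+W_t)^2\,dt + \int_{\tau}^{\tau+Y}(w+W_t)^2\,dt$ and analogously split $\text{mse}_{\text{opt}}(\tau+Y)$. Writing $\tilde{W}_s := W_{\tau+s}-W_\tau$, the strong Markov property guarantees $\tilde{W}$ is a Wiener process independent of $\mathcal{F}(\tau)^+$ and, by the model assumption, also of $Y$. Expanding $(w+W_{\tau+s})^2 = (w+W_\tau)^2 + 2(w+W_\tau)\tilde{W}_s + \tilde{W}_s^2$ and taking conditional expectation given $(W_\tau, Y)$, the linear cross term vanishes while $\int_{0}^{Y}\mathbb{E}[\tilde{W}_s^2]\,ds = Y^2/2$. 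Using the independence of $Y$ from $W_\tau$ to write $\mathbb{E}[Y(w+W_\tau)^2] = \mathbb{E}[Y]\,\mathbb{E}[(w+W_\tau)^2]$ then produces \eqref{eq-cost1divide0} directly.

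For the second identity, I would split into the trivial case $|w|\ge v$ (where $\tau = 0$, $(w+W_\tau)^2 = w^2$, and both $\max(\cdot,0)$ terms of \eqref{eq-gwvbeta} vanish, so the claim reduces to substituting into \eqref{eq-cost1divide0}) and the interesting case $|w|<v$. In the latter case, set $X_t := w+W_t$ and invoke two martingales of the Wiener process: $M^{(2)}_t := X_t^2 - t - w^2$ and $M^{(4)}_t := X_t^4 - 6\int_0^t X_s^2\,ds - w^4$, where the second follows from It\^{o}'s formula $d(X_t^4) = 4X_t^3\,dW_t + 6X_t^2\,dt$. Applying optional sampling at $\tau$ and using $X_\tau^2 = v^2$ yields $\mathbb{E}[\tau] = v^2 - w^2$, and then applying it to $M^{(4)}$ yields $\mathbb{E}\bigl[\int_0^\tau X_s^2\,ds\bigr] = (v^4-w^4)/6$. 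Substituting these two moments plus $(w+W_\tau)^2 = v^2$ into \eqref{eq-cost1divide0}, and rewriting $\mathbb{E}[Y]v^2 = \mathbb{E}[Y](v^2-w^2) + \mathbb{E}[Y]w^2$, matches exactly the expression $g(w,v,\text{mse}_{\text{opt}})$ given in \eqref{eq-gwvbeta}.

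The main obstacle is rigorously justifying optional sampling for the quartic martingale $M^{(4)}_t$, since a priori neither $X_t^4$ nor $\int_0^t X_s^2\,ds$ is bounded in $t$. I would handle this by stopping at $\tau \wedge n$, where the boundedness $|X_{s\wedge \tau}|\le v$ forces $X_{\tau\wedge n}^4 \le v^4$ and $\int_0^{\tau\wedge n} X_s^2\,ds \le v^2 \tau$, and then passing $n\to\infty$ by dominated convergence using the a priori bound $\mathbb{E}[\tau] = v^2 - w^2 < \infty$ obtained from the quadratic martingale step. The analogous (and simpler) truncation handles $M^{(2)}$, and the lower bound $Y \ge \epsilon$ together with $\mathbb{E}[Y^2]<\infty$ ensures all Fubini interchanges in the first identity are valid.
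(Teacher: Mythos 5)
Your proof is correct, and while the overall skeleton (split the integral at $\tau$, use the strong Markov property, then optional stopping for the hitting-time case) matches the paper, your execution differs in both halves in a way worth noting. For the first identity, the paper evaluates the post-$\tau$ contribution by invoking the martingales $\tfrac13 W_t^3-\int_0^t W_r\,dr$ and $\tfrac16 W_t^4-\int_0^t W_r^2\,dr$ together with $\mathbb{E}[W_Y^3]=0$, $\mathbb{E}[W_Y^4]=3\mathbb{E}[Y^2]$; your direct conditioning on $(W_\tau,Y)$ with Fubini, giving $\int_0^Y \mathbb{E}[\tilde W_s^2]\,ds=Y^2/2$, reaches the same expression $\mathbb{E}[Y]\,\mathbb{E}[(w+W_\tau)^2]+\tfrac12\mathbb{E}[Y^2]$ more economically (and note your $\tfrac12\mathbb{E}[Y^2]$ is the version consistent with \eqref{eq-gwvbeta}; the ``$\tfrac12\mathbb{E}[Y]^2$'' in the lemma display is a typo). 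For the second identity, the paper expands $(w+W_t)^2$, proves an auxiliary lemma $\mathbb{E}[\int_0^\tau W_t\,dt]=\tfrac13\mathbb{E}[W_\tau^3]$, cites $\mathbb{E}[\int_0^\tau W_t^2\,dt]=\tfrac16\mathbb{E}[W_\tau^4]$, and then computes $\mathbb{E}[W_\tau^3]$ and $\mathbb{E}[W_\tau^4]$ from the explicit two-point exit distribution of Brownian motion on the asymmetric interval $[-v-w,\,v-w]$; you instead apply It\^{o} to the shifted process $X_t=w+W_t$ and use the single quartic identity $X_t^4-6\int_0^t X_s^2\,ds-w^4=4\int_0^t X_s^3\,dW_s$, which exploits $|X_\tau|=v$ and bypasses the exit-distribution computation entirely. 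Your truncation-at-$\tau\wedge n$ argument (bounded integrand on $[0,\tau]$, then MCT/DCT with $\mathbb{E}[\tau]=v^2-w^2<\infty$ from the quadratic martingale) is exactly the kind of justification the paper supplies in its auxiliary lemma, so no rigor is lost; what your route buys is a shorter computation with fewer auxiliary facts, while the paper's route additionally exhibits the exit probabilities and the moments $\mathbb{E}[W_\tau^3]$, $\mathbb{E}[W_\tau^4]$ explicitly.
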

\begin{proof}
See Appendix~\ref{cost1stageapp}
\end{proof}

Then, our problem \eqref{eq-valueiterationn} is augmented as the sequence of standard optimal stopping problem \cite[Chapter 1]{peskir2006optimal}:
\begin{align}
\tilde{J}_{n}(w,q) = \inf_{\tau\in\mathfrak{M}} \mathbb{E} \left[  \tilde{G}_{n}(w+W_\tau, q+Q_\tau) \right], \ \text{for all } w,q\in \mathbb{R}, \label{eq-bellmansolve}
\end{align}
where 
\begin{align}
\tilde{G}_{n} (w+W_t,q+Q_t) & \triangleq \tilde{g}(w+W_t,q+Q_t) + \alpha \mathbb{E} \left[ J_{n-1}(w+W_t+W_{Y}) \right],\\
\tilde{g}(w+W_t,q+Q_t) & \triangleq  q+Q_t +\mathbb{E} \left[ Y \right] (w+W_t)^2   + \frac{1}{2} \mathbb{E} \left[ Y \right] ^2 - \mathbb{E} \left[ Y \right] \text{mse}_{\text{opt}},\\
 Q_t & \triangleq   \int_{0}^{t} (w+W_r)^2 - \text{mse}_{\text{opt}}  dr  .
\end{align}
By Lemma~\ref{cost1stage}, for any $\tau$, we have $g(w; \tau) = \mathbb{E} \left[  \tilde{g}(w+W_t,Q_t)  \right]  = \mathbb{E} \left[  \tilde{g}(w+W_t,q+Q_t)  \right] -q$. 

According to \cite[Chapter~10]{oksendal2013stochastic} and \cite[Section~8]{peskir2006optimal}, the free boundary method implies that the optimal objective function $\tilde{J}_{n}(w,q)$ should satisfy 
\begin{align}
& \frac{1}{2} \frac{\partial^2}{\partial w^2} \tilde{J}_{n}(w,q) + w^2 - \text{mse}_{\text{opt}}  = 0,  w\in (-v_{n},v_{n}), \label{eq-fbm1} \\
& \tilde{J}_{n}(w,q)  = \tilde{G}_{n}(w,q),  w\in (-\infty,-v_{n}]\cup [v_{n}, \infty), \label{eq-fbm2} \\
&  \frac{\partial}{\partial w}  \tilde{J}_{n}(w,q) \Big{|}_{w=\pm v_{n}}  =  \frac{\partial}{\partial w}   \tilde{G}_{n}(w,q)  \Big{|}_{w=\pm v_{n}}. \label{eq-fbm3}
\end{align}
The first equation \eqref{eq-fbm1} tells that in the continuation set $(-v_{n},v_{n})$, the infinitesimal operator of $ \tilde{J}_{n}(w,q)$ is zero. In the second equation \eqref{eq-fbm2}, at the stopping set $(-\infty,-v_{n}]\cup [v_{n}, \infty)$, the stopping time $\tau_{n}$ is zero. The third equation \eqref{eq-fbm3} implies that $\tilde{J}_{n}(w,q)$ should be continuously differentiable at the boundary points $w = \pm v_{n}$.
These three equations are then simplified to:
\begin{align}
& \frac{1}{2}  J_{n}''(w) + w^2 - \text{mse}_{\text{opt}}  = 0,  w\in (-v_{n},v_{n}),  \label{eq-fbm1simple}  \\
& J_{n}(w)  = G_{n}(w),  w\in (-\infty,-v_{n}]\cup [v_{n}, \infty),  \label{eq-fbm2simple}  \\
&    J_{n}'(w) \Big{|}_{w=\pm v_{n}}  =  G_{n}'(w)  \Big{|}_{w=\pm v_{n}}.  \label{eq-fbm3simple} 
\end{align}

By \eqref{eq-fbm1simple}---\eqref{eq-fbm3simple}, $v_n$ is the positive solution to $J'_n(v_n^-)=G_n'(v_n)$. Combined with Lemma~\ref{cost1stage}, we provide the following results for deriving the sequence $v_1,v_2,\ldots$:
\begin{lemma}\label{derivative1}
For all $n=1,2,\ldots$ we have that:

(a)  
If $|w|<v_{n}$, then
\begin{align}
J'_{n}(w)  = \frac{\partial}{\partial w} g(w,v_n, \text{mse}_{\text{opt}}) = -\frac{2}{3}w^3+2 \text{mse}_{\text{opt}} w. \label{eq-stage1wsmallv}
\end{align}
If $|w| > v_{n}$, then
\begin{align}
 J'_{n}(w) & =  G'_{n}(w) = \frac{\partial}{\partial w} g(w,0, \text{mse}_{\text{opt}}) + \alpha  \mathbb{E} \left[ J'_n(w+W_Y) \right] = 2 \mathbb{E} \left[ Y \right] w + \alpha  \mathbb{E} \left[ J'_{n-1}(w+W_Y) \right]. \label{eq-stage1vsmallw} 
\end{align}
The optimal threshold $v_n$ is the positive solution to 
\begin{align}
 G'_{n}(w) + \frac{2}{3}w^3 - 2 \text{mse}_{\text{opt}} w=0. \label{eq-gn}
\end{align}
Moreover, $G_{n}''(w)$ and $G_{n}'''(w)$ are continuous.

(b) $G'_n(0)=0,$ and $  G''_{n}(w)  +2w^2 - 2 \text{mse}_{\text{opt}} \ge 0$ for all $w\in [v_{n},\infty)$.

(c) $G'''_{n}(w) \ge 0$, and $  G'''_{n}(w) +4w \ge 0$ for all $w \ge 0$.


(d) The sequence of thresholds $v_1,v_2,\ldots$ is bounded with $v_{n}\le \sqrt{3 \text{mse}_{\text{opt}}}$ and is decreasing, thus converges. 
\end{lemma}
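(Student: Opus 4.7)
My plan is to establish Lemma~\ref{derivative1} by an interlocked induction on $n$ that combines the free boundary conditions~\eqref{eq-fbm1simple}--\eqref{eq-fbm3simple}, the symmetry of the Wiener process, and the smoothing provided by the conditional Gaussian law of $W_Y$ (which has variance at least $\epsilon$ because $Y\ge\epsilon$); this last fact justifies repeated differentiation under the expectation throughout.

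For part~(a), the problem is invariant under $w\mapsto -w$, so $J_n$ is an even function with $J_n'(0)=0$. Integrating $\tfrac{1}{2}J_n''(w)=\text{mse}_{\text{opt}}-w^2$ on $(-v_n,v_n)$ under this initial condition yields $J_n'(w)=-\tfrac{2}{3}w^3+2\,\text{mse}_{\text{opt}}\,w$, which matches $\partial_w g(w,v_n,\text{mse}_{\text{opt}})$ by direct verification against~\eqref{eq-gwvbeta}. On $|w|>v_n$, \eqref{eq-fbm2simple} forces $J_n=G_n$ and hence $J_n'=G_n'$; differentiating $G_n(w)=g(w;0)+\alpha\mathbb{E}[J_{n-1}(w+W_Y)]$ under the expectation with $g'(w;0)=2\mathbb{E}[Y]w$ delivers~\eqref{eq-stage1vsmallw}, and~\eqref{eq-gn} is precisely the smooth-pasting identity~\eqref{eq-fbm3simple}. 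Continuity of $G_n''$ and $G_n'''$ then follows by dominated convergence after pushing the $w$-derivatives onto the smooth Gaussian kernel in the representation $\mathbb{E}[J_{n-1}(w+W_Y)]=\mathbb{E}\bigl[\int J_{n-1}(u)\phi_Y(u-w)\,du\bigr]$. For part~(b), the evenness of $J_{n-1}$ makes $J_{n-1}'$ odd, so the symmetry of $W_Y$ gives $\mathbb{E}[J_{n-1}'(W_Y)]=0$ and hence $G_n'(0)=0$. The bound $G_n''(w)+2w^2-2\,\text{mse}_{\text{opt}}\ge 0$ on $[v_n,\infty)$ is the standard variational inequality for optimal stopping: on the stopping region, the quantity $\tfrac{1}{2}J_n''(w)+w^2-\text{mse}_{\text{opt}}$ must be non-negative, since otherwise postponing sampling by an infinitesimal interval would strictly reduce cost; substituting $J_n=G_n$ yields the claim.

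Part~(c) is where the main obstacle lies, and I would handle it by induction on $n$. The base case $n=1$ is immediate since $J_0\equiv 0$ gives $G_1'''\equiv 0$. For the inductive step, I would push two derivatives onto the smooth Gaussian kernel via a Bismut--Elworthy-type identity to obtain $G_n'''(w)=\alpha\,\mathbb{E}[J_{n-1}''(w+W_Y)\,W_Y/Y]$, and then use part~(a) to write $J_{n-1}''(u)=2\,\text{mse}_{\text{opt}}-2u^2$ on $(-v_{n-1},v_{n-1})$ and $J_{n-1}''(u)=G_{n-1}''(u)$ on $|u|>v_{n-1}$, with non-negative upward jumps at $\pm v_{n-1}$ by part~(b). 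The technical difficulty is that $J_{n-1}''$ is non-monotone on $\mathbb{R}_+$: it decreases on $(0,v_{n-1})$ along the parabola, jumps up at $v_{n-1}$, and is non-decreasing on $(v_{n-1},\infty)$ by the inductive hypothesis on $G_{n-1}'''$, so no direct stochastic-monotonicity argument applies. My strategy is to split the expectation into continuation-region and tail contributions, pair the $+W_Y$ and $-W_Y$ terms using the symmetry of $W_Y$ together with $w\ge 0$, and invoke the inductive hypothesis on $G_{n-1}'''$ for the tail part while doing a direct explicit calculation on the parabolic piece. The companion inequality $G_n'''(w)+4w\ge 0$ is then immediate for $w\ge 0$.

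Finally, part~(d) is bookkeeping built on (a)--(c). The upper bound $v_n\le\sqrt{3\,\text{mse}_{\text{opt}}}$ follows from~\eqref{eq-gn} combined with the geometric observation that $J_n'(w)=-\tfrac{2}{3}w^3+2\,\text{mse}_{\text{opt}}\,w$ on the continuation region vanishes at $\pm\sqrt{3\,\text{mse}_{\text{opt}}}$ and changes sign beyond; if $v_n$ exceeded this value, $J_n'$ would become negative on a non-trivial part of the continuation region, contradicting the fact that the minimum of the even function $J_n$ is attained at $w=0$. For monotonicity $v_{n+1}\le v_n$, I would compare $G_{n+1}'(v_n)$ to $G_n'(v_n)$ by tracking how the value iteration operator $T$ acts on $J_n$ versus $J_{n-1}$, and use the convexity of $G_n'(w)+\tfrac{2}{3}w^3-2\,\text{mse}_{\text{opt}}\,w$ on $w\ge 0$ (from the $G_n'''(w)+4w\ge 0$ part of (c)) to locate the root of the next iterate strictly to the left. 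Boundedness plus monotonicity of $\{v_n\}$ then yield convergence.
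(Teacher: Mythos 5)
Your overall architecture is recognizably the same as the paper's: candidate functions defined through the free-boundary conditions \eqref{eq-fbm1simple}--\eqref{eq-fbm3simple}, differentiation under the expectation justified by the Gaussian law of $W_Y$ (with $Y\ge\epsilon$), and an induction for part (c) driven by the jumps of $J_{n-1}''$ at $\pm v_{n-1}$ together with symmetrization of $W_Y$; indeed your identity $G_n'''(w)=\alpha\,\mathbb{E}[J_{n-1}''(w+W_Y)\,W_Y/Y]$, after Gaussian integration by parts, is exactly the paper's decomposition into $\alpha\,\mathbb{E}[J_{n-1}'''(w+W_Y)\cdots]$ plus signed boundary terms at $\pm v_{n-1}$. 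The genuine gap is your proof of part (b). At this stage $J_n$, $G_n$ and $v_n$ are only \emph{candidate} objects defined by the recursion \eqref{eq-root-qn2} and smooth pasting \eqref{eq-fbm3simple}; their optimality for the value-iteration problem \eqref{eq-valueiterationn} is established only later, in Theorem~\ref{thm-valueiteration}, whose proof (through the excessivity Lemma~\ref{j-is-excessive}) \emph{uses} Lemma~\ref{derivative1}(b): the inequality $G_n''(w)+2w^2-2\,\text{mse}_{\text{opt}}\ge 0$ on the stopping set is precisely what makes $-\tilde{J}_n$ excessive, i.e.\ it is the rigorous form of ``postponing an infinitesimal interval cannot reduce the cost''. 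Invoking ``the standard variational inequality for optimal stopping'' to prove (b) therefore presumes that $J_n$ is the value function and that $\{|w|\ge v_n\}$ is its stopping region --- exactly what (b) is needed to prove --- so the argument is circular; and since your inductive step for (c) uses (b) to sign the jump terms, the circularity propagates to (c). The paper avoids this by an intertwined, purely analytic induction: (b) at level $n$ feeds the boundary-term sign in (c) at level $n+1$, and (b) at level $n+1$ is then recovered from (c) via convexity of $f_{n+1}(w)=G_{n+1}'(w)+\tfrac{2}{3}w^3-2\,\text{mse}_{\text{opt}}\,w$ on $[0,\infty)$ combined with $f_{n+1}(0)=f_{n+1}(v_{n+1})=0$, which forces $f_{n+1}'(w)\ge0$ for all $w\ge v_{n+1}$.

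A second, smaller gap is in (d): your bound $v_n\le\sqrt{3\,\text{mse}_{\text{opt}}}$ rests on the unproven claim that the even candidate function $J_n$ is minimized at $w=0$; the paper instead argues directly from the root equation \eqref{eq-gn}, showing by induction (using $v_k\le\sqrt{3\,\text{mse}_{\text{opt}}}$ for $k<n$, oddness of $\partial_x g$, and the density comparison of Lemma~\ref{lemma-prerequisite}(c)) that the expectation terms are nonnegative for $w\ge0$ while $\tfrac{2}{3}w^3-2\beta w>0$ beyond $\sqrt{3\,\text{mse}_{\text{opt}}}$, so the root cannot exceed that value. Finally, be aware that the steps you describe as routine dominated convergence --- continuity of $G_n''$, $G_n'''$ and the interchange of $d/dw$ with the expectation across the moving boundaries, as well as the monotonicity $G_{n+1}'\ge G_n'$ behind $v_{n+1}\le v_n$ --- are where most of the paper's technical effort lies (Lemma~\ref{lemma-prerequisite}, Lemma~\ref{lemma-jw}, and the region-splitting computations of Appendix~D); a complete write-up would need the growth bounds of Lemma~\ref{lemma-jw}(b) and those explicit boundary-term calculations.
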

\begin{proof}
See Appendix~\ref{appderivative1}.
\end{proof}




\subsubsection{Optimality of the Candidate Solution to \eqref{eq-valueiterationn}}

We \textcolor{black}{finally validate} that the hitting time \eqref{eq-taunstar} is the optimal solution. Combined with Lemma~\ref{derivative1}, we have the following result: 

\begin{theorem}\label{thm-valueiteration}
(a) An optimal sequence of waiting times $\tau_1,\tau_2,\ldots$ for \eqref{eq-valueiterationn} satisfies \eqref{eq-taunstar}, and each threshold $v_n$ is the positive root of \eqref{eq-gn}, where $G'_0(w)=0$, $G'_n(w)$ is updated by \eqref{eq-stage1vsmallw}, $J'_0(w)=0$, and $J'_n(w)$ is updated by \eqref{eq-stage1wsmallv}\eqref{eq-stage1vsmallw}. 

(b) The function $G'_{n}(w)  + \frac{2}{3}w^3-2 \text{mse}_{\text{opt}} w$ in \eqref{eq-gn} is convex for $w \ge 0$ and strongly convex for $w > 0$. Therefore, the positive root of $v_{n}$ is unique. In addition, $v_n$ decreases and thus converges.
\end{theorem}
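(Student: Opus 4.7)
The plan is to prove the theorem by induction on $n$, leveraging the detailed calculus already assembled in Lemma~\ref{derivative1}. For $n=0$ the claims hold trivially since $J_0\equiv 0$. For the inductive step, assume $J_{n-1}$ and $v_{n-1}$ are as claimed; the $n$-th stage of \eqref{eq-valueiterationn} then reduces to the single standard optimal stopping problem $J_n(w)=\inf_{\tau\in\mathfrak{M}}\bigl(g(w;\tau)+\alpha\,\mathbb{E}[J_{n-1}(w+W_{\tau+Y})]\bigr)$. Via Lemma~\ref{cost1stage}, this is reformulated in \eqref{eq-bellmansolve} as a problem with Markovian state $(w+W_t,\,q+Q_t)$, so the free boundary method of \cite{peskir2006optimal} applies cleanly.

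For part (a), I would first argue that because both $G_n$ and the driving integrand $w^2-\text{mse}_{\text{opt}}$ are even in $w$ and increasing in $|w|$, the optimal continuation region must be a symmetric interval $(-v_n,v_n)$, forcing the optimal rule to take the hitting-time form \eqref{eq-taunstar}. The threshold $v_n$ is then pinned down by the smooth-fit condition \eqref{eq-fbm3simple}, which by the derivative computations of Lemma~\ref{derivative1}(a) is exactly \eqref{eq-gn}. Optimality is verified in the usual fashion: apply It\^o's formula to $J_n(w+W_t)$, use \eqref{eq-fbm1simple} inside the continuation region and \eqref{eq-fbm2simple} together with Lemma~\ref{derivative1}(b) in the stopping region, take expectations at any admissible $\tau$, and conclude that the candidate hitting time attains the infimum. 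The $\mathbb{E}[W_{\tilde{\tau}}^4]<\infty$ bound from the definition of $\Pi_{\text{signal-aware}}$ controls the martingale remainder terms.

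For part (b), set $f(w)=G'_n(w)+\tfrac{2}{3}w^3-2\,\text{mse}_{\text{opt}}\,w$. Differentiating twice yields $f''(w)=G'''_n(w)+4w$, which by Lemma~\ref{derivative1}(c) is nonnegative on $[0,\infty)$ and strictly positive on $(0,\infty)$; this gives convexity and strong convexity, respectively. Since $f(0)=G'_n(0)=0$ (Lemma~\ref{derivative1}(b)) and $f(w)\to+\infty$ as $w\to\infty$ (driven by the cubic term), strict convexity on $(0,\infty)$ rules out more than one positive zero, while existence is ensured by part (a) together with the bound $v_n\le\sqrt{3\,\text{mse}_{\text{opt}}}$ recorded in Lemma~\ref{derivative1}(d). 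The monotone decrease of $\{v_n\}$ is also recorded in Lemma~\ref{derivative1}(d), so convergence of $\{v_n\}$ follows from bounded monotone convergence.

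The hardest step is expected to be the global verification $J_n\le G_n$ that validates the candidate stopping region in part (a). It is not immediate from the free boundary ODE but rests on the sign inequality $G''_n+2w^2-2\,\text{mse}_{\text{opt}}\ge 0$ on $[v_n,\infty)$ (Lemma~\ref{derivative1}(b)) together with careful It\^o bookkeeping at the $C^1$-but-not-$C^2$ matching points $w=\pm v_n$. A secondary subtlety is justifying the interchange of limit, expectation, and stopping when $\tau$ is an arbitrary admissible rule; this is precisely what the fourth-moment admissibility bound in the definition of $\Pi_{\text{signal-aware}}$ is built to enable.
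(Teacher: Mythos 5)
Your outline follows essentially the paper's own route: the candidate hitting rule and threshold come from the free-boundary/smooth-fit system \eqref{eq-fbm1simple}--\eqref{eq-fbm3simple} together with Lemma~\ref{derivative1}(a), optimality is verified from the two facts ``$J_n\le G_n$'' and ``nonnegative drift in the stopping region,'' and part (b) is read off Lemma~\ref{derivative1}(b)--(d) exactly as you describe; the paper merely packages your It\^o verification as Lemma~\ref{jsmallg} and Lemma~\ref{j-is-excessive} for the augmented problem \eqref{eq-bellmansolve} and then invokes the corollary in \cite{shiryaev1978optimal}. Two corrections to your roadmap, though. First, the comparison $J_n\le G_n$ is \emph{not} obtained from $G''_n+2w^2-2\,\text{mse}_{\text{opt}}\ge 0$ on $[v_n,\infty)$ plus It\^o bookkeeping at $\pm v_n$: that inequality gives no information on the continuation region, which is where the comparison is at stake. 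The paper proves it by elementary calculus on $[0,v_n]$ using Lemma~\ref{derivative1}(c): with $f=G_n-J_n$ one has $f'''\ge 0$, $f'(0)=f'(v_n)=0$ (oddness of $G'_n$ and smooth fit), so $f'$ is convex and hence nonpositive on $[0,v_n]$, and $f(v_n)=0$ then forces $f\ge 0$ there. The inequality of Lemma~\ref{derivative1}(b) is instead precisely what makes $-\tilde{J}_n$ excessive, i.e.\ $w^2-\text{mse}_{\text{opt}}+\frac{1}{2}\partial_{xx}\tilde{J}_n\ge 0$ on the stopping set (it vanishes on the continuation set by \eqref{eq-fbm1simple}). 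Second, the term controlling the stochastic integral is not the $\mathbb{E}[W^4_{\tilde{\tau}}]<\infty$ admissibility bound of $\Pi_{\text{signal-aware}}$ (that bound serves Proposition~\ref{prop-aware}); here it is the linear growth $\partial_x\tilde{J}_n(w,q)=O(w)$ together with $\mathbb{E}[\tau^2]<\infty$ in the class $\mathfrak{M}$, which makes $\int\partial_x\tilde{J}_n\,dW$ a true martingale. With these attributions fixed, your argument for part (a) coincides with the paper's, and your part (b) argument (convexity from $G'''_n+4w\ge 0$, $f(0)=0$, hence a unique positive root, plus the monotone bounded sequence $v_n$ from Lemma~\ref{derivative1}(d)) is the same as the paper's.
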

Theorem~\ref{thm-valueiteration} (b) is directly shown by Lemma~\ref{derivative1}.
It remains to show that the exact solution provided in Theorem~\ref{thm-valueiteration} (a) is optimal to the value iteration problem \eqref{eq-valueiterationn}. 
\begin{proof}[Proof of Theorem~\ref{thm-valueiteration} (a)]
we obtain the two following results:

\begin{lemma}\label{jsmallg}
We have $ \tilde{J}_{n}(w,q) \le \tilde{G}_{n}(w,q)$ for any $(w,q)\in \mathbb{R}^2$ and the iteration number $n=1,2,\ldots$.
\end{lemma}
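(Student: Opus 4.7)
The plan is straightforward: exhibit a specific stopping time whose expected cost attains the value $\tilde{G}_n(w,q)$, since $\tilde{J}_n(w,q)$ is defined as the infimum over all admissible stopping times $\tau \in \mathfrak{M}$ of $\mathbb{E}[\tilde{G}_n(w+W_\tau, q+Q_\tau)]$. Any such exhibition immediately gives an upper bound on $\tilde{J}_n(w,q)$, and the only candidate that automatically produces the target value on the right-hand side is the ``stop immediately'' choice $\tau \equiv 0$.

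I would proceed in two short steps. First, verify admissibility: the deterministic stopping time $\tau \equiv 0$ satisfies $\{\tau < t\} \in \mathcal{F}(t)^+$ trivially for all $t > 0$ (the event is either empty or the whole sample space depending on $t$), and $\mathbb{E}[\tau^2] = 0 < \infty$. Hence $\tau \equiv 0 \in \mathfrak{M}$. Second, evaluate the expected cost at $\tau \equiv 0$: by the defining property $W_0 = 0$ of the Wiener process, and by the integral definition $Q_0 = \int_0^0 ((w+W_r)^2 - \text{mse}_{\text{opt}}) \, dr = 0$, one obtains $\mathbb{E}[\tilde{G}_n(w+W_0,\, q+Q_0)] = \tilde{G}_n(w,q)$. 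Taking the infimum over $\mathfrak{M}$ then yields $\tilde{J}_n(w,q) \le \tilde{G}_n(w,q)$ for every $(w,q) \in \mathbb{R}^2$ and every $n=1,2,\ldots$, which is the claim.

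There is no real obstacle here; the lemma is the standard ``stop-immediately'' upper bound one always has for an optimal stopping problem. Its purpose in the broader argument is presumably to serve as one half of a sandwich: combined with the matching lower bound that will be obtained from the verification / free-boundary conditions \eqref{eq-fbm1simple}--\eqref{eq-fbm3simple}, it will pin down $\tilde{J}_n = \tilde{G}_n$ on the stopping set $\{|w| \ge v_n\}$ and, via the smooth-fit relation \eqref{eq-fbm3simple}, certify that the hitting time \eqref{eq-taunstar} with threshold $v_n$ attains the infimum in \eqref{eq-valueiterationn}.
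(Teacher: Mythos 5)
Your proposal proves a different statement from the one the paper actually needs. You read $\tilde{J}_n$ as the infimum in \eqref{eq-bellmansolve}, in which case ``take $\tau\equiv 0$'' indeed gives $\tilde{J}_n\le\tilde{G}_n$ trivially --- but the $\tilde{J}_n$ that Lemma~\ref{jsmallg} (and Lemma~\ref{j-is-excessive}) must refer to is the \emph{candidate} value function produced by the free-boundary/smooth-fit construction: on the stopping set $\{|w|\ge v_n\}$ it equals $G_n$, and on the continuation set $\{|w|<v_n\}$ it is the solution of \eqref{eq-fbm1simple}--\eqref{eq-fbm3simple}, equivalently the explicit hitting-time cost in \eqref{eq-rootjn2} with $J_n'(w)=-\tfrac23 w^3+2\,\text{mse}_{\text{opt}}\,w$. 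This candidate is not defined as an infimum, so exhibiting an admissible stopping time yields no bound on it; the inequality $\tilde{J}_n\le\tilde{G}_n$ on $\{|w|<v_n\}$ is a genuine comparison between the ODE/hitting-time value and the ``act now'' value $G_n(w)$, and it is exactly what the paper proves: setting $f=G_n-J_n$, it uses $f(v_n)=0$, $f'(0)=f'(v_n)=0$ (oddness and smooth fit), and $f'''\ge 0$ on $[0,v_n]$ from Lemma~\ref{derivative1}(c) to conclude $f'\le 0$ and hence $f\ge 0$ there. Note that the paper's own proof invokes the explicit form of $J_n'$ on the continuation region, which would be unjustified for the infimum before optimality is established --- further evidence that the object is the candidate.

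The consequence of your reading is that the lemma becomes true but vacuous for its intended use. The whole point of the Shiryaev-style verification in Theorem~\ref{thm-valueiteration}(a) is: the candidate function is a minorant of $\tilde{G}_n$ (Lemma~\ref{jsmallg}), its negative is excessive (Lemma~\ref{j-is-excessive}), and it is attained by the hitting time \eqref{eq-taunstar}; together these certify that the hitting time with threshold $v_n$ is optimal and that the candidate equals the true value. The trivial bound ``infimum $\le$ value at $\tau\equiv 0$'' carries no information about $v_n$ and cannot feed into that verification, so with your argument the chain of reasoning leading to optimality of \eqref{eq-taunstar} collapses. Your closing remark that the lemma will be combined with a ``matching lower bound'' to force $\tilde{J}_n=\tilde{G}_n$ on the stopping set also misreads its role: equality on the stopping set holds by construction (\eqref{eq-fbm2simple}); the content of the lemma is the inequality on the continuation set.
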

\begin{proof}
See Appendix~\ref{appjsmallg}.
\end{proof}

\begin{definition}
A function f(w,q) is excessive if $\mathbb{E}[\tilde{f}(w+W_t,q+Q_t)]\le \tilde{f}(w,q)$ for all $t\ge0$ and $(w,q)\in \mathbb{R}^2$.
\end{definition}

\begin{lemma}\label{j-is-excessive}
The negative value function $-\tilde{J}_n(w,q)$ is excessive for any $(w,q)\in \mathbb{R}^2$ and the iteration number $n=1,2,\ldots$. 
\end{lemma}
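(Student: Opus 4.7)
The plan is to first reduce the two-dimensional excessiveness statement to a one-dimensional claim about $J_n$, and then apply It\^{o}'s formula together with the free boundary conditions that characterize the candidate value function. The first observation is that $\tilde{G}_n$ depends on $q$ only through the additive linear term, so the infimum defining $\tilde{J}_n$ separates and yields the identity $\tilde{J}_n(w,q) = q + J_n(w)$; in particular, using Lemma~\ref{cost1stage} one recognizes $\tilde{J}_n(w,0) = J_n(w)$ as the original value iteration. Consequently, the excessiveness claim $\mathbb{E}[-\tilde{J}_n(w+W_t, q+Q_t)] \le -\tilde{J}_n(w,q)$ reduces, after cancelling the $q$-terms, to showing $J_n(w) \le \mathbb{E}[Q_t] + \mathbb{E}[J_n(w+W_t)]$ for all $w \in \mathbb{R}$ and $t \ge 0$.

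Next I would apply It\^{o}'s formula to $J_n(w+W_t)$. By the smooth-fit condition \eqref{eq-fbm3simple}, $J_n$ lies in $C^1(\mathbb{R})$, while $J_n$ is $C^2$ on the open continuation region $(-v_n, v_n)$ (where it satisfies \eqref{eq-fbm1simple}) and on each of the closed stopping half-lines $|w| \ge v_n$ (where $J_n = G_n$, and $G_n''$ is continuous by Lemma~\ref{derivative1}). Invoking the Meyer-It\^{o} extension of It\^{o}'s formula valid for $C^1$ functions with absolutely continuous derivative, the local-time term at the jump points $\pm v_n$ vanishes because the jump in $J_n'$ is zero, and one obtains $\mathbb{E}[J_n(w+W_t)] - J_n(w) = \tfrac{1}{2}\int_0^t \mathbb{E}[J_n''(w+W_r)]\, dr$, with $J_n''$ interpreted almost everywhere. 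Combining this with the closed form $\mathbb{E}[Q_t] = \int_0^t \mathbb{E}[(w+W_r)^2 - \text{mse}_{\text{opt}}]\, dr$, the target inequality becomes $\int_0^t \mathbb{E}\bigl[\tfrac{1}{2} J_n''(w+W_r) + (w+W_r)^2 - \text{mse}_{\text{opt}}\bigr]\, dr \ge 0$.

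It therefore suffices to establish the pointwise bound $\tfrac{1}{2}J_n''(w) + w^2 - \text{mse}_{\text{opt}} \ge 0$ for a.e. $w \in \mathbb{R}$. On the continuation region $|w| < v_n$, the free boundary PDE \eqref{eq-fbm1simple} makes this an equality. On the stopping region $|w| > v_n$, we have $J_n = G_n$, and Lemma~\ref{derivative1}(b), together with the evenness of $G_n$ (inherited from the symmetry of the Wiener process and of $g(\cdot, 0, \text{mse}_{\text{opt}})$ in $w$), gives $G_n''(w) + 2w^2 - 2\text{mse}_{\text{opt}} \ge 0$ for all $|w| \ge v_n$. The integrand is thus non-negative, which completes the proof of excessiveness.

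The main obstacle I anticipate is the rigorous justification of It\^{o}'s formula at the boundary points $\pm v_n$, since the smooth-fit condition only supplies $C^1$ regularity and $J_n''$ can genuinely jump there. I plan to handle this either by directly invoking the Meyer-It\^{o} formula (whose local-time correction has coefficient equal to the jump of $J_n'$, which is $0$ here), or equivalently by mollifying $J_n$ to obtain a sequence of $C^2$ approximants and passing to the limit; the quadratic growth of $J_n$ combined with $\mathbb{E}[W_t^2] = t$ and the integrability of $Y$ assumed in our policy class provides the dominating bounds needed for the limit exchange.
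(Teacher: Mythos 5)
Your proposal is correct and is essentially the paper's own argument: both apply It\^{o}'s formula to the value function (the paper directly to the two-dimensional $\tilde{J}_n$, noting $\partial_x\tilde J_n=J_n'$ and that the time spent at $\pm v_n$ has Lebesgue measure zero), kill the stochastic-integral term by the martingale property, use the free boundary ODE \eqref{eq-fbm1simple} to get equality on $|w|<v_n$, and invoke Lemma~\ref{derivative1}(b), i.e. $G_n''(w)+2w^2-2\,\text{mse}_{\text{opt}}\ge 0$ on the stopping region, to conclude. Your explicit reduction $\tilde J_n(w,q)=q+J_n(w)$ and the Meyer--It\^{o}/mollification treatment of the boundary points are only cosmetic refinements of the same proof.
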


\begin{proof}
See Appendix~\ref{appj-is-excessive}.
\end{proof}

By Lemma~\ref{jsmallg} and Lemma~\ref{j-is-excessive},
using \textcolor{black}{Corollary to Theorem 1 in \cite[Section 3.3.1]{shiryaev1978optimal}}, we have that the stopping times $\tau_{1},\tau_2,\ldots$ in \eqref{eq-taunstar} are optimal to \eqref{eq-bellmansolve}, thus are optimal to \eqref{eq-valueiterationn}. This completes the proof of Theorem~\ref{thm-valueiteration} (a). 
\end{proof}

\subsection{Linear Convergence of Value Iteration to the Repeated MDP \eqref{eq-singlepoch-form}}\label{section-linearconv}

In this section, we will show that the optimal value functions $J_n$ of the value iteration algorithm \eqref{eq-valueiterationn} converge linearly to the optimal value function $J$ of our problem in \eqref{eq-singlepoch-form}. 
We have the following result:

%


\begin{lemma}\label{lemma-contraction}
(a) Suppose that the continuation set of $\tau$ is is bounded by $\bar{b}$, i.e., if $w^2 \ge \bar{b}$, then $\tau=0$. Then, 
$
\frac{\mathbb{E} \left[ u(w+W_{\tau+Y}) \right]}{u(w)} \le \frac{\rho}{\alpha}. 
$

(b) The function $J_n(w) = T^n 0(w)$ satisfies the contraction mapping property, i.e., $\| T^n 0 \| <\infty$, $\| T^\infty 0 \| <\infty$, and $\| T^{n+1}0 - T^{n} 0 \| \le \rho^n \| T0 \|$.  The Bellman operator $T$ is defined in \eqref{eq-bellman1}. 

(c) $J(w) = T^\infty 0 (w)$ is the unique solution to the Bellman equation $J = T J $ \eqref{eq-bellman1} (with $\| J \| <\infty$).
Further, $\|T^n 0 - J \| \le \rho \| T^{n-1} 0 - J\|$.

\end{lemma}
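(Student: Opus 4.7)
The plan is to prove (a) by case analysis on whether $w^2\ge\bar b$, deduce (b) from (a) via the standard infimum-substitution trick for Bellman operators, and extract (c) from (b) via the Banach fixed-point theorem. For (a), in the case $w^2\ge\bar b$ the hypothesis forces $\tau=0$, so $w+W_{\tau+Y}=w+W_Y$, and since $|w|\ge\sqrt{\bar b}$ we have $(|w|+|W_Y|)^2\ge \bar b$, giving $u(w+W_Y)\le (|w|+|W_Y|)^2$. Expanding, taking expectations, and dividing by $u(w)=w^2\ge\bar b$ yields
\begin{equation*}
\frac{\mathbb{E}[u(w+W_Y)]}{u(w)}\le 1+\frac{2\mathbb{E}[|W_Y|]}{\sqrt{\bar b}}+\frac{\mathbb{E}[W_Y^2]}{\bar b}\le\frac{\rho}{\alpha},
\end{equation*}
the last step being the defining property of $\bar b$. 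In the case $w^2<\bar b$, continuity of $W_t$ forces $|w+W_\tau|\le\sqrt{\bar b}$ at the first exit of a continuation set contained in $\{w^2<\bar b\}$, and the strong Markov property makes $\tilde W:=W_{\tau+Y}-W_\tau$ distributed as $W_Y$ and independent of $\mathcal{F}_\tau$. The triangle inequality then bounds $u(w+W_{\tau+Y})\le (\sqrt{\bar b}+|\tilde W|)^2$; dividing the expectation by $u(w)=\bar b$ reduces to the identical calculation.

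For (b), I first observe that $T0(w)\le g(w;0)=\mathbb{E}[Y]w^2+\tfrac12\mathbb{E}[Y^2]-\beta\mathbb{E}[Y]$ from \eqref{eq-gwvbeta}, which is $O(w^2)$ and hence has finite weighted norm. For the contraction itself, pick $f_1,f_2$ with $\|f_1-f_2\|<\infty$ and let $\tau^\ast$ be the optimizer in $Tf_2(w)$. By Theorem~\ref{thm-valueiteration}(b), $\tau^\ast$ is a hitting time with threshold at most $\sqrt{3\,\text{mse}_{\text{opt}}}$, so choosing $\bar b\ge 3\,\text{mse}_{\text{opt}}$ large enough that the defining inequality of $\rho$ still holds lets (a) apply to $\tau^\ast$. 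Then
\begin{equation*}
Tf_1(w)-Tf_2(w)\le \alpha\,\mathbb{E}\bigl[(f_1-f_2)(w+W_{\tau^\ast+Y})\bigr]\le \alpha\,\|f_1-f_2\|\,\mathbb{E}[u(w+W_{\tau^\ast+Y})]\le \rho\,\|f_1-f_2\|\,u(w).
\end{equation*}
By symmetry, $\|Tf_1-Tf_2\|\le\rho\|f_1-f_2\|$. Iterating with $f_1=T^n0$ and $f_2=T^{n-1}0$ gives $\|T^{n+1}0-T^n0\|\le\rho^n\|T0\|$, and the telescoping triangle inequality shows $\sup_n\|T^n0\|\le\|T0\|/(1-\rho)$ and that $\{T^n0\}$ is Cauchy in the weighted sup-norm, so its limit $T^\infty 0$ also has finite norm.

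Part (c) is then immediate from Banach's fixed-point theorem applied on the Banach space $\{f:\|f\|<\infty\}$: since $T$ is a $\rho$-contraction by (b), $J:=T^\infty 0$ is the unique fixed point with finite norm, and applying $T$ to the pair $(T^{n-1}0,J)$ yields $\|T^n0-J\|=\|T(T^{n-1}0)-TJ\|\le\rho\|T^{n-1}0-J\|$. The main obstacle lies in part (b): to invoke (a), the optimizer $\tau^\ast$ for $Tf_2$ must have continuation set bounded by $\bar b$. This requires combining the a priori estimate $v_n\le\sqrt{3\,\text{mse}_{\text{opt}}}$ from Theorem~\ref{thm-valueiteration}(b) with the freedom to enlarge $\bar b$; enlarging $\bar b$ only tightens the hypothesis of the norm's definition since $\mathbb{E}[1+2|W_Y|/\sqrt{\bar b}+W_Y^2/\bar b]\to 1<\rho/\alpha$ as $\bar b\to\infty$. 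A secondary subtlety is interpreting ``continuation set bounded by $\bar b$'' in (a) as a first-exit time from an open set contained in $\{w^2<\bar b\}$, so that $|w+W_\tau|\le\sqrt{\bar b}$ holds by path continuity.
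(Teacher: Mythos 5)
Your proposal follows essentially the same route as the paper's proof: the identical two-case bound for part (a), the same optimizer-substitution contraction estimate for part (b) applied to the value-iteration iterates (whose optimizers are hitting times with thresholds at most $\sqrt{3\,\text{mse}_{\text{opt}}}$, so $\bar{b}$ may be enlarged without violating its defining inequality), and the same Cauchy-sequence/fixed-point argument for part (c), which the paper writes out by hand rather than citing Banach. The only caveat is that invoking a contraction of $T$ on the whole finite-norm space is slightly stronger than what your (b) actually establishes (the bounded-hitting-time structure of the optimizer is only known for the iterates $T^n 0$), but the paper's own uniqueness step makes the identical implicit use, so nothing essential differs.
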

\begin{proof}
See Appendix~\ref{app-lemma-contraction}.
\end{proof}

By Lemma~\ref{lemma-contraction}(c), $J$ is the unique solution to the Bellman equation $J=TJ $. Therefore, $J$ is the optimal value function for the problem~\eqref{eq-singlepoch-form}. Due to the linear convergence of $J_n$ to $J$, Lemma~\ref{derivative1}(d) implies that the optimal stopping time for \eqref{eq-singlepoch-form} is also a hitting time, where the optimal threshold is $v = \lim_{n\rightarrow \infty} v_n$. This completes the proof of Theorem~\ref{thm1}. 
In addition, Lemma~\ref{dinklebach-lemma} implies that $\text{mse}_{\text{opt}}$ is the solution to $\mathbb{E}[\lim_{n\rightarrow \infty} J(W_Y)]=0$.
These statements combined with Theorem~\ref{thm-valueiteration} completes the solution to the problem~\eqref{avg-aware}.

\subsection{Discussion}
In this section, we compare our proof and technical contributions with some related works and discuss some interesting future directions. 

\subsubsection{Special Case $1$: Reliable Channel \cite{sun2020sampling}}
In the special case of a reliable channel ($\alpha=0$), $M_j=1$. The problem \eqref{eq-singlepoch-form} is then reduced to:
\begin{align}
J(w) & \triangleq \inf_{\tau \in  \mathfrak{M} } g(w;\tau).\label{eq-sing}
\end{align}
The problem \eqref{eq-singlepoch-form} for general $\alpha\ge 0$ is a repeated MDP, because we need to determine multiple correlated waiting times in an epoch, and each waiting time is a stopping time. However, when $\alpha=0$, the problem \eqref{eq-sing} reduces to an MDP, or in other words, an optimal stopping problem with a single waiting time. Note that solving \eqref{eq-sing} is still nontrivial. We speculate that the optimal waiting time $\tau$ is a hitting time. Using Lemma~\ref{derivative1}~(a), the optimal threshold $v$ is the positive root of 
\begin{align}
 \frac{2}{3} v^3 -  2(\text{mse}_{\text{opt}} - \mathbb{E}[Y] )v = 0,
\end{align}   
which is $v = \sqrt{3(\text{mse}_{\text{opt}} - \mathbb{E}[Y])}$. By Theorem~\ref{thm-valueiteration}, the speculated waiting time is optimal. This implies the final result Corollary~\ref{cor-1} (\cite[Theorem~1]{sun2020sampling}).

Similar studies with a reliable channel are also indicated, e.g., in \cite{sun2019sampling,ornee2019sampling,tsai2021unifying}. The key insight is to solve an optimal stopping time like \eqref{eq-sing}. Our study with an unreliable channel is different from these studies, because we need to solve a problem with multiple correlated stopping times \eqref{eq-singlepoch-form}. To solve this, we need to analytically solve a value iteration algorithm \eqref{eq-valueiterationn} that includes a sequence of optimal stopping problems. Compared to \eqref{eq-sing}, for each iteration $n$, our optimal stopping problem is more challenging to solve, because the optimal value function is a more complicated expression that contains a summation of $n$ correlated samples.    


\subsubsection{Special Case $2$: Signal-agnostic Sampling \cite{pan2022optimal}}

When the sampling time is independent of the Wiener process, each waiting time takes a nonnegative value based on the timing history information, but not the evolution of the Wiener process. The previous problem \eqref{eq-singlepoch-form} is reduced from a discounted and repeated MDP to a discounted MDP. The study in \cite{pan2022optimal} has shown that the optimal policy is a threshold policy on the age (i.e., MMSE). 

Since the optimal signal-aware sampling policy is different from the optimal signal-aware sampling policy, the proof of solving our problem \eqref{eq-singlepoch-form} is different from that of the discounted MDP in \cite{pan2022optimal}. The authors in \cite{pan2022optimal} solve their problems as follows: (i) they first propose a threshold based waiting decision $\mu(\delta) = \max(\text{age}_{\text{opt}} -\delta - \mathbb{E}[Y]/(1-\alpha),0)$, where $\delta$ is the age state, and $\text{age}_{\text{opt}}$ is the optimal average age; (ii) then they show that $\mu$ and its value function are the unique solution to the Bellman equation: $J_{\text{agnostic}}(\delta) = \inf_{z \ge 0} g_{\text{agnostic}}(\delta,z)+\mathbb{E}[J_{\text{agnostic}}(\delta+z+Y)]$, where $g_{\text{agnostic}}(\delta,z) \triangleq \mathbb{E}[\int_{\delta}^{\delta+z+Y} (t - \text{age}_{\text{opt}})dt ]$. 

However, such the proof ideas cannot be applied to our case, due to the following challenges that do not appear in  \cite{pan2022optimal}: (i) Since each waiting time is a stopping time, solving \eqref{eq-singlepoch-form} faces the curse of dimensionality. For example, when $\alpha=0$, \eqref{eq-singlepoch-form} reduces to \eqref{eq-sing}, but \eqref{eq-sing} is still an optimal stopping problem. In the signal-agnostic case, \eqref{eq-sing} is reduced to a convex optimization problem \cite[Lemma~7]{sun2019sampling}, thus is much easier to solve; (ii) In \cite{pan2022optimal}, the Bellman equation is solvable. Since $\mu$ is threshold type on the age, it is optimal to wait $(\mu>0)$ only when the last transmission was successful. Thus, the optimal value function is a closed-form expression: 
$J_{\text{agnostic}}(\delta) = \mathbb{E} \Big[    \int_{\delta}^{\delta+\mu(\delta)+Y'} \delta dt - \text{age}_{\text{opt}} ( \mu(\delta)+ Y')  \Big]$, where $Y'$ is given in Theorem~\ref{thm2}. Since the Bellman equation is a minimization over nonnegative values, solving the Bellman equation is the same as comparing a few closed-form expressions. In our case, however, it is hard to compare, because the optimal value function $J(w)$ is not closed-form. This is due to the randomness of the Wiener process, and we may wait for each sample. 


\subsubsection{Future Direction $1$: Non i.i.d. Channel Failure}
When the channel failure is extended from i.i.d. to Markovian, we still believe that the statements in Section~\ref{proof-preliminary} and Section~\eqref{section-refor} are correct. 
However, there is a key difference in Section~\ref{section-eq-solve}: the problem \eqref{eq-singlepoch-form} \eqref{eq-totalsingle} is changed to be
\begin{align}
J(w) = \inf_{\pi = Z_{j,1},Z_{j,2},\ldots}    g(w; Z_{j,1})  + (1-\alpha') \mathbb{E} \left[  g(\tilde{W}_{2}; Z_{j,2}) | \tilde{W}_{1} = w \right] + \sum_{k=3}^{\infty} \alpha^{k-1} \mathbb{E} \left[  g(\tilde{W}_{k}; Z_{j,k}) | \tilde{W}_{1} = w \right], \label{eq-totalsinglemarkov} 
\end{align} where $\alpha$ is the self transition probability from $OFF$ state to $OFF$ state, and $\alpha'$ is the self transition probability from $ON$ state to $ON$ state. In the non i.i.d. case where $1-\alpha' \ne \alpha$, Problem \eqref{eq-totalsinglemarkov} has a changing discount factor. Thus, the Bellman equation and the value iteration algorithm are not well-defined, making this new problem challenging to solve.

\subsubsection{Future Direction $2$:  Non i.i.d. Transmission Delay}
Suppose that we consider a Markovian transmission delay. Then, the waiting time not only should depend on the evolution of the Wiener process, but also should depend on the last transmission delay. This is because the last transmission delay effects the next transmission delay. Therefore, the value iteration \eqref{eq-valueiterationn} should be extended as:       
\begin{align}
J_{n+1,\text{markov}}(w,y) =  \inf_{\tau \in  \mathfrak{M}} g(w; \tau)+\alpha \mathbb{E} \left[  J_{n,\text{markov}}(w+W_{\tau+Y},Y) \right], \ n=0,1,2,\ldots,
\end{align} where $y$ is the last transmission delay, and the distribution of $Y$ is affected by $y$. Due to the space limitation, we will consider this extended problem in the future directions. 


\vspace{-0.1cm}
\section{Conclusion}\label{conclusion}
In this paper, we provide a sampling policy to minimize the mean square estimation error, where the sampler generates the sample at the source and transmits it to the remote estimator over a time-varying channel. We show that the optimal sampling policy is a threshold policy on the instantaneous estimation error, and the threshold is computed efficiently. The curse of dimensionality that originates from the randomness of the Wiener process, channel conditions, and the channel delay is circumvented. We believe that the proof of our main results provides an insight about how to solve a problem with discounted and multiple stopping times.

\section*{Acknowledgment}


This work has been supported in part by NSF grants NSF AI Institute (AI-EDGE) CNS-2112471, CNS-2106933, CNS-2106932, CNS-2312836, CNS-1955535, CNS-1901057, and CNS-2239677, by Army Research Office under Grant W911NF-21-1-0244, and was sponsored by the Army Research Laboratory and was accomplished under Cooperative Agreement Number W911NF-23-2-0225. The views and conclusions contained in this document are those of the authors and should not be interpreted as representing the official policies, either expressed or implied, of the Army Research Laboratory or the U.S. Government. The U.S. Government is authorized to reproduce and distribute reprints for Government purposes notwithstanding any copyright notation herein.

We thank Tasmeen Zaman Ornee, Md Kamran Chowdhury Shisher, and Yining Li for their valuable suggestions for this paper.



\bibliographystyle{ACM-Reference-Format}
\bibliography{wienersample}


\appendix




\section{Proof of Lemma~\ref{cost1stage}}\label{cost1stageapp}
We denote $X_t \triangleq w+W_t$ as the Wiener process starting from the initial state $X_0=w$.
Using the definition of $g(w;\tau)$ in \eqref{eq-costgtau}, we have
\begin{align}
\nonumber g(w;\tau) & \triangleq \mathbb{E} \left[   \int_{0}^{\tau+Y}X_t^2 dt - \text{mse}_{\text{opt}} (\tau+Y) \  \right] \\
& = \mathbb{E} \left[   \int_{0}^{\tau}X_t^2 dt +  \int_{\tau}^{\tau+Y}X_t^2 dt - \text{mse}_{\text{opt}} (\tau+Y) \  \right]. \label{eq-cost1divide}
\end{align}
Using the strong Markov property of the wiener process, $\{ X_{\tau+t},t\ge0 \}$ has the same distribution as $\{ X_\tau+W_t,t\ge0 \}$. The second term of \eqref{eq-cost1divide} turns to:
\begin{align}
\nonumber &  \mathbb{E} \left[   \int_{\tau}^{\tau+Y}X_t^2 dt  \  \right]  \\ \nonumber = & \mathbb{E} \left[   \int_{\tau}^{\tau+Y}(X_\tau+W_t)^2 dt  \  \right]\\  \nonumber = & \mathbb{E} \left[   Y X_\tau^2+2X_\tau \int_0^Y W_t dt+\int_0^Y W_t^2 dt \right] \\  = & \mathbb{E} \left[ Y \right] \mathbb{E} \left[   X_\tau^2 \right] +2 \mathbb{E} \left[ X_\tau \right] \mathbb{E} \left[ \int_0^Y W_t dt \right]+ \mathbb{E} \left[\int_0^Y W_t^2 dt \right],\label{eq-cost1divide2}
\end{align}
the last equality holds because the delay $Y$ is independent of $X_\tau$.
By \cite[Theorem~2.5.1]{morters2010brownian}, $1/3 W_t^3 -  \int_0^t W_r dr$ and $1/6 W_t^4 -  \int_0^t W_r^2 dr $ are martingales, respectively. So
\begin{align}
\nonumber  \mathbb{E} \left[   \int_{\tau}^{\tau+Y}X_t^2 dt  \  \right] &  = \mathbb{E} \left[ Y \right]  \mathbb{E} \left[   X_\tau^2 \right] +\frac{2}{3} \mathbb{E} \left[ X_\tau \right] \mathbb{E} \left[   \mathbb{E} \left[ W_Y^3 \ | Y \right] \right]+ \frac{1}{6} \mathbb{E} \left[   \mathbb{E} \left[ W_Y^4 \ | Y \right] \right] \\  \nonumber &
=   \mathbb{E} \left[ Y \right]  \mathbb{E} \left[   X_\tau^2 \right] +\frac{2}{3} \mathbb{E} \left[ X_\tau \right]   \mathbb{E} \left[ W_Y^3  \right]+ \frac{1}{6}   \mathbb{E} \left[ W_Y^4  \right] \\ & 
   =  \mathbb{E} \left[ Y \right] \mathbb{E} \left[   X_\tau^2 \right] + \frac{1}{2}\mathbb{E} \left[ Y^2 \right]. \label{eq-cost1divide3}
\end{align}
Combined with \eqref{eq-cost1divide2} and \eqref{eq-cost1divide3}, we finally get \eqref{eq-cost1divide0}.

Before showing that $g(w;\tau) =  g(w,v,\text{mse}_{\text{opt}})$, we need Lemma~\ref{cubicwiener}:
\begin{lemma}\label{cubicwiener}
If a finite stopping time $\tau$ satisfies that $\{ W_t, 0\le t \le \tau  \}$ is bounded, then 
\begin{align}
\mathbb{E} \left[  \int_{0}^{\tau} W_t dt \right] = \frac{1}{3} \mathbb{E} \left[   W_\tau^3 \right].
\end{align} 
\end{lemma}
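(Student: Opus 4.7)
The plan is to exploit the martingale that was already invoked earlier in the excerpt, namely $M_t := \tfrac{1}{3} W_t^3 - \int_0^t W_s \, ds$, which is a martingale by \cite[Theorem~2.5.1]{morters2010brownian} (equivalently, a one-line Itô's formula computation). Since $M_0 = 0$, the identity $\mathbb{E}\!\left[\int_0^\tau W_t \, dt\right] = \tfrac{1}{3} \mathbb{E}[W_\tau^3]$ is precisely the optional stopping identity $\mathbb{E}[M_\tau] = 0$, so the entire proof reduces to justifying optional stopping for $M$ at $\tau$.

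I would proceed by a standard localization. Let $B$ be the almost sure bound on $\{W_t : 0 \le t \le \tau\}$, and note that $\tau$ is dominated by the exit time $\tau_B := \inf\{t \ge 0 : |W_t| \ge B\}$, for which $\mathbb{E}[\tau_B] = B^2 < \infty$; hence $\mathbb{E}[\tau] < \infty$. For each integer $n$, the stopping time $\tau \wedge n$ is bounded, so the optional stopping theorem yields
\begin{equation*}
\frac{1}{3}\mathbb{E}\!\left[W_{\tau \wedge n}^3\right] = \mathbb{E}\!\left[\int_0^{\tau \wedge n} W_s \, ds\right].
\end{equation*}

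Next I would pass to the limit $n \to \infty$ via dominated convergence on both sides. On the left, $|W_{\tau \wedge n}^3| \le B^3$ and $W_{\tau \wedge n} \to W_\tau$ almost surely (by continuity of paths and finiteness of $\tau$), so $\mathbb{E}[W_{\tau \wedge n}^3] \to \mathbb{E}[W_\tau^3]$. On the right, $\bigl|\int_0^{\tau \wedge n} W_s \, ds\bigr| \le B \tau$, and $B \tau$ is integrable by the bound $\mathbb{E}[\tau] \le B^2$ derived above; since the integrand converges pointwise to $\int_0^\tau W_s \, ds$, dominated convergence gives the corresponding limit. Combining the two limits yields the claimed identity.

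The only subtle point is the finiteness of $\mathbb{E}[\tau]$, which is needed to dominate the time integral; this is where the boundedness hypothesis on $\{W_t\}_{0 \le t \le \tau}$ does real work, by forcing $\tau$ to be majorized by the exit time of a bounded symmetric interval. Everything else is the textbook martingale-plus-localization argument, so I do not expect any further obstacles.
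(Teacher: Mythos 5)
Your proposal is correct and follows essentially the same route as the paper: optional stopping for the martingale $\tfrac{1}{3}W_t^3-\int_0^t W_s\,ds$ at the bounded times $\tau\wedge n$, then passing to the limit using the boundedness of $\{W_t,0\le t\le\tau\}$ (the paper uses monotone convergence of $\mathbb{E}[\tau\wedge n]$ plus the bound $b\,\mathbb{E}[\tau-\tau\wedge n]$, while you dominate by $B\tau$). Your explicit observation that $\tau$ is majorized by the exit time of $[-B,B]$, so $\mathbb{E}[\tau]\le B^2<\infty$, is a welcome justification of a step the paper leaves implicit.
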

\begin{proof}
By \cite[Theorem~2.5.1]{morters2010brownian} and \cite[Theorem~8.5.1]{durrett2019probability}, $1/3 W_t^3 -  \int_0^t W_s ds$ is a martingale for any given positive value $t$. Note that for any $n=1,2,\ldots$, $\tau \wedge n$ is obviously bounded. Then, we have  
\begin{align}
\mathbb{E} \left[  \int_{0}^{\tau \wedge n } W_t dt \right] = \frac{1}{3} \mathbb{E} \left[   W_{\tau \wedge n}^3 \right].
\end{align} 
Since $\tau$ is finite, $\tau\wedge n \rightarrow \tau, W_{\tau \wedge n} \rightarrow W_{\tau}$ almost surely.
Since $\{ W_t, 0\le t \le \tau  \}$ is bounded, using Dominated convergence theorem \cite[Theorem~5.3.3]{resnick2019probability}, 
\begin{align}
\lim_{n\rightarrow \infty} \mathbb{E} \left[   W_{\tau \wedge n}^3 \right] =  \mathbb{E} \left[   W_\tau^3 \right].
\end{align} Using Monotone Convergence Theorem \cite[Theorem~5.3.1]{resnick2019probability}, $\mathbb{E} \left[ \tau \wedge n \right]\rightarrow \mathbb{E} \left[ \tau \right]$. This leads to
\begin{align}
\lim_{n\rightarrow \infty} \mathbb{E} \left[    \int_{0}^{\tau } W_t dt - \int_{0}^{\tau \wedge n } W_t dt \right] \le \lim_{n\rightarrow \infty} \mathbb{E} \left[   (\tau - \tau\wedge n) \right]\times b = 0,
\end{align} where $b$ is an upper bound of $\{ W_t, 0\le t \le \tau  \}$. So we have
\begin{align}
\mathbb{E} \left[    \int_{0}^{\tau } W_t dt \right] = \lim_{n\rightarrow \infty} \mathbb{E} \left[  \int_{0}^{\tau \wedge n } W_t dt \right] =  \frac{1}{3}  \lim_{n\rightarrow \infty} \mathbb{E} \left[  W^3_{\tau\wedge n}  \right] =  \frac{1}{3}  \mathbb{E} \left[  W^3_{\tau}  \right] . \
\end{align} 
\end{proof}
Now we start to prove $g(w;\tau) =  g(w,v,\text{mse}_{\text{opt}})$. 
If $|w|>v$, then $\tau=0$, and $\mathbb{E} \left[ Y \right] X^2_\tau = \mathbb{E} \left[ Y \right] w^2$. Therefore, 
\begin{align}
g(w;\tau)  = \mathbb{E} \left[ Y \right] w^2 + \frac{1}{2} \mathbb{E} \left[ Y^2 \right]  - \mathbb{E} \left[ Y \right] \text{mse}_{\text{opt}}= g(w,v,\text{mse}_{\text{opt}}).
\end{align}
If $|w|\le v$, then $\mathbb{E} \left[ Y \right] X_\tau^2 = \mathbb{E} \left[ Y \right] v^2$. By \cite[Theorem~8.5.5]{durrett2019probability}, $\mathbb{E} \left[ \tau \right] = v^2 - w^2$. So we have
\begin{align}
g(w;\tau) = \mathbb{E}^w \left[ \int_{0}^{\tau}  X^2_{t} dt  \right]  - \text{mse}_{\text{opt}} (v^2 - w^2) + \mathbb{E} \left[ Y \right]v^2 +\frac{1}{2} \mathbb{E} \left[ Y^2 \right] - \mathbb{E} \left[ Y \right] \text{mse}_{\text{opt}}.
\end{align}
Since $X_0=w$ and the Wiener process has strong Markov property, the first term becomes
\begin{align}
\nonumber \mathbb{E}^w \left[ \int_{0}^{\tau}  X^2_{t} dt  \right] & =  \mathbb{E} \left[ \int_{0}^{\tau}  (w+W_t)^2 dt  \right] \\
\nonumber & =  \mathbb{E} \left[ \int_{0}^{\tau}  w^2+2wW_t+W_t^2 dt  \right]\\
 & = w^2(v^2-w^2) + \frac{2}{3}w  \mathbb{E} \left[ W_\tau^3  \right] + \frac{1}{6}  \mathbb{E} \left[ W_\tau^4  \right].
\end{align} The last equality holds due to Lemma~\ref{cubicwiener} and \cite[Lemma~3]{sun2020sampling}. From \cite[Theorem~2.49]{morters2010brownian}, we have
\begin{align}
W_\tau = \left\{
\begin{array}{lll}
v-w & \ \text{with probability } \frac{v+w}{2v}, \vspace{1mm} \\
-v-w & \ \text{with probability } \frac{v-w}{2v}.
\end{array}
\right.
\end{align}
Then, we have
\begin{align}
\mathbb{E} \left[ W_\tau^3  \right] & = \frac{v+w}{2v}(v-w)^3 - \frac{v-w}{2v}(v+w)^3  = -2w(v^2 - w^2),\\
\mathbb{E} \left[ W_\tau^4  \right] & = \frac{v+w}{2v}(v-w)^4 + \frac{v-w}{2v}(v+w)^4 = \frac{v^2 - w^2}{2v}(2v^3+6vw^2) = (v^2-w^2) (v^2+3w^2).
\end{align}
This gives
\begin{align}
\nonumber \mathbb{E}^w \left[ \int_{0}^{\tau}  X^2_{t} dt  \right] & = w^2(v^2-w^2)-\frac{4}{3} w^2(v^2 - w^2) + \frac{1}{6} (v^2-w^2) (v^2+3w^2) \\
 & =  \frac{1}{6}(v^4 - w^4).
\end{align}
Therefore, if $|w|<v$,
\begin{align}
 g(w;\tau) & = \frac{1}{6}(v^4 - w^4)  - \text{mse}_{\text{opt}} (v^2 - w^2) + \mathbb{E} \left[ Y \right]v^2 +\frac{1}{2} \mathbb{E} \left[ Y^2 \right] - \mathbb{E} \left[ Y \right]\text{mse}_{\text{opt}} = g(w,v,\text{mse}_{\text{opt}}).
\end{align} This ends our proof.


\section{Proof of Proposition~\ref{prop-aware}}\label{appprop-aware}
The proof is modified from \cite{arafa2021timely}, but we strictly extends \cite{arafa2021timely} in two-folds: (i) we consider the square estimation error $(W_t - \hat{W}_t)^2$, which is a more complicated metric than the age $\Delta_t$ considered in \cite{arafa2021timely}. Note that in the special case where the sampling time is independent of the Wiener process, $ \mathbb{E} [ (W_t - \hat{W}_t)^2] = \Delta_t$; (ii) The process $(W_t - \hat{W}_t)^2$ of two consecutive epochs are correlated, while in \cite{arafa2021timely}, the process $\Delta_t$ of that are independent.

We denote $l^e_j \triangleq S_{j,M_j} - S_{j-1,M_{j-1}}$ as the inter sampling time of the $j$th epoch. 
We also denote $\mathcal{H}_{j}$ as the history information of sampling times, transmission times and the Wiener process until $S_{j,M_j}$. Then, by the definition of $\Pi_{\text{signal-aware}}$ in Section~\ref{subsection-sampling}, $l^e_j$ is bounded by a stopping time, denoted by $\tilde{l}^e_j$, and we have $\mathbb{E} \left[ W^4_{\tilde{l}^e_j} \right]<\infty$. 


For simplicity, let us denote $r(T) = \int_{0}^{T} (W_t - \hat{W}_t)^2 dt $.
We denote $R_j = \int_{D_{j-1,M_{j-1}}}^{D_{j,M_j}} (W_t - \hat{W}_t)^2 dt =  \int_{D_{j-1,M_{j-1}}}^{D_{j,M_j}} (W_t - W_{S_{j-1,M_{j-1}}})^2 dt $, and $N(T)$ as the largest epoch number $j$ such that $S_{j,M_j}<T$, i.e., the number of successful samples attempted until $T$. Then, we have 
\begin{align}
& \sum_{j=1}^{\infty} R_j \mathds{1}_{j\le n-1} \le \frac{r(T)}{T} \le \sum_{j=1}^{\infty} R_j \mathds{1}_{j \le n},  \ \text{if } T\in [S_{n,M_n},D_{n,M_n}],\\
& \sum_{j=1}^{\infty} R_j \mathds{1}_{j\le n} \le \frac{r(T)}{T} \le \sum_{j=1}^{\infty} R_j \mathds{1}_{j\le n+1},  \ \text{if } T\in [D_{n,M_n},S_{n+1,M_{n+1}}].
\end{align}
This tells that 
\begin{align}
\frac{\sum_{j=1}^{\infty} R_j \mathds{1}_{j \le N(T)-1}}{T} \le \frac{r(T)}{T} \le \frac{ \sum_{j=1}^{\infty} R_j \mathds{1}_{j \le N(T)+1}}{T}. 
\end{align}
Then, we have the following lemma:  
\begin{lemma}\label{lemma-tail0}
\begin{align}
\lim_{T\rightarrow \infty} \frac{\mathbb{E} \left[ R_{N(T)}+R_{N(T)+1} \right]}{T} = 0.\label{eq-tail0}
\end{align}
\end{lemma}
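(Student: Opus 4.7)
The plan is to exploit (i) a uniform conditional second-moment bound $\mathbb{E}[R_j^2 \,|\, \mathcal{G}_j] \le C_2$, where $\mathcal{G}_j$ denotes the $\sigma$-algebra of system history up to $D_{j-1,M_{j-1}}$, together with (ii) the a.s.\ linear-in-$T$ bound $N(T) \le T/\epsilon$ coming from $Y_i \ge \epsilon$, and then to conclude via Cauchy--Schwarz that $\mathbb{E}[R_{N(T)} + R_{N(T)+1}] = O(\sqrt{T}) = o(T)$. By symmetry between $R_{N(T)}$ and $R_{N(T)+1}$, I only sketch the argument for the latter.

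First, I would establish the uniform moment bound. Let $L_j = D_{j,M_j} - D_{j-1,M_{j-1}}$. The strong Markov property of the Wiener process at $D_{j-1,M_{j-1}}$ writes $R_j = \int_{0}^{L_j} (a + B_s)^2 ds$, where $\{B_s\}$ is a standard Wiener process independent of $\mathcal{G}_j$ and $a = W_{D_{j-1,M_{j-1}}} - W_{S_{j-1,M_{j-1}}}$ is $\mathcal{G}_j$-measurable. The policy-space condition (ii) guarantees that each waiting time within epoch $j$ is dominated by a stopping time $\tilde{\tau}$ with $\mathbb{E}[W^4_{\tilde{\tau}}] < \infty$; the Wiener-martingale identity $\mathbb{E}[W^4_{\tilde{\tau}}] = 3 \mathbb{E}[\tilde{\tau}^2]$ thus yields $\mathbb{E}[\tilde{\tau}^2] < \infty$. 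Combining this with $\mathbb{E}[Y^2] < \infty$ and the geometric distribution of $M_j$ delivers a uniform bound $\mathbb{E}[L_j^2 \,|\, \mathcal{G}_j] \le C_1$ almost surely. Expanding $(a + B_s)^2$, applying Doob's $L^4$-inequality to $B$, and using that $a$ has finite fourth moment (it is distributed as $W_{Y_{j-1,M_{j-1}}}$) then deliver the desired $\mathbb{E}[R_j^2 \,|\, \mathcal{G}_j] \le C_2 < \infty$.

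Next, I would apply a tower-property argument. Since $\{N(T)+1 = j\} \subset \{S_{j-1,M_{j-1}} < T\} \in \mathcal{G}_j$ and the events $\{N(T)+1 = j\}$ are disjoint,
\begin{align*}
\mathbb{E}[R_{N(T)+1}^2] & = \sum_{j \ge 1} \mathbb{E}[R_j^2 \mathds{1}_{N(T)+1 = j}] \le \sum_{j \ge 1} \mathbb{E}[\mathds{1}_{S_{j-1,M_{j-1}} < T} \, \mathbb{E}[R_j^2 \,|\, \mathcal{G}_j]] \\
& \le C_2 \sum_{j \ge 1} \mathbb{P}(S_{j-1,M_{j-1}} < T).
\end{align*}
Because $Y_i \ge \epsilon$ implies $S_{j-1,M_{j-1}} \ge (j-2)\epsilon$ almost surely, only $O(T/\epsilon)$ summands are nonzero, so $\mathbb{E}[R_{N(T)+1}^2] = O(T)$. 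Cauchy--Schwarz then gives $\mathbb{E}[R_{N(T)+1}] \le \sqrt{\mathbb{E}[R_{N(T)+1}^2]} = O(\sqrt{T})$, i.e., $\mathbb{E}[R_{N(T)+1}]/T \to 0$. The inclusion $\{N(T) = j\} \subset \{D_{j-1,M_{j-1}} < T\} \in \mathcal{G}_j$ handles $R_{N(T)}$ in exactly the same way.

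The main obstacle I anticipate is carefully establishing the uniform moment bound in the first step. Translating the policy-space assumption into a uniform a.s.\ estimate on $\mathbb{E}[L_j^2 \,|\, \mathcal{G}_j]$ requires chaining together the conditional-independence structure of the per-sample bounding stopping times across the random (geometrically distributed) number $M_j$ of transmissions per epoch; the subsequent estimate on $\mathbb{E}[R_j^2 \,|\, \mathcal{G}_j]$ further needs Doob's inequality to control $\sup_{s \le L_j} B_s^4$ together with the uniform integrability of $a^4$.
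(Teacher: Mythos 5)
Your overall skeleton (dominate the index by the event $\{S_{j-1,M_{j-1}}<T\}$, count at most $O(T/\epsilon)$ live summands, and finish with Cauchy--Schwarz) would work \emph{if} you had the uniform conditional bound $\mathbb{E}[R_j^2\,|\,\mathcal{G}_j]\le C_2$, but that bound is exactly where the argument breaks: it does not follow from the paper's assumptions. First, the identity you invoke, $\mathbb{E}[W^4_{\tilde{\tau}}]=3\mathbb{E}[\tilde{\tau}^2]$, holds only when the time is \emph{independent} of the Wiener process; for a genuine stopping time it is false, and the implication $\mathbb{E}[W^4_{\tilde{\tau}}]<\infty\Rightarrow\mathbb{E}[\tilde{\tau}^2]<\infty$ can fail outright (e.g.\ the one-sided hitting time $\tilde{\tau}=\inf\{t:W_t=-1\}$ has $\mathbb{E}[W^4_{\tilde{\tau}}]=1$ but $\mathbb{E}[\tilde{\tau}]=\infty$), so your claimed $\mathbb{E}[\tilde{\tau}^2]<\infty$ is unsupported. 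Second, and more fundamentally, controlling the \emph{second} moment of $R_j=\int_0^{L_j}(a+B_s)^2ds$ requires moments of order roughly $\mathbb{E}[L_j^4]$ (equivalently eighth-moment information on $W$ at the bounding stopping time, and $\mathbb{E}[Y^4]<\infty$): already for deterministic $L_j=t$ one has $\mathbb{E}[(\int_0^t B_s^2ds)^2]\asymp t^4$. The model only assumes $\mathbb{E}[Y^2]<\infty$ and $\mathbb{E}[W^4_{\tilde{\tau}}]<\infty$, so the constant $C_2$ you need simply is not available without strengthening the hypotheses; the obstacle you flagged at the end is not a technicality but the missing ingredient.

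The paper's proof avoids second moments entirely. It dominates $R_n\le\tilde{R}_n=\int_0^{\tilde{l}^e_n+Y_{n,M_n}}W_t^2dt$, where $\tilde{l}^e_n$ is the bounding stopping time that is independent of the pre-epoch history, and shows only $\mathbb{E}[\tilde{R}_n]<\infty$ via the martingale identity $\mathbb{E}[\int_0^{\tau}W_t^2dt]=\tfrac16\mathbb{E}[W_{\tau}^4]$ together with the strong Markov property. It then sets $F(\delta)=\mathbb{E}[(\tilde{R}_n+\tilde{R}_{n+1})\mathds{1}_{\tilde{l}^e_n+\tilde{l}^e_{n+1}>\delta}]$, which is finite at $\delta=0$, decreasing, and vanishing as $\delta\to\infty$, and uses the independence of $(\tilde{R}_n,\tilde{R}_{n+1},\tilde{l}^e_n,\tilde{l}^e_{n+1})$ from $\mathcal{H}_{n-1}$ to obtain the renewal-type bound $\mathbb{E}[R_{N(T)}+R_{N(T)+1}]\le\int_0^TF(T-t)\,d\mathbb{E}[N(t)]+2F(T)$, whose right side is $o(T)$ by the standard convolution argument (as in the cited appendix of Wu et al.). If you want to salvage your route, you would have to either add the stronger moment assumptions explicitly, or replace your uniform-$L^2$ step by this first-moment, tail-function argument.
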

\begin{proof} See Appendix~\ref{apptail0}. \end{proof}
Lemma~\ref{lemma-tail0} tells that the "residual terms" $R_{N(T)}/T$ and $R_{N(T)+1}/T$ vanishes as time $T$ goes to infinity. Therefore, instead of $r(T)$, we can analyze $\{ R_j\}_j$. 
We have 
\begin{align}
\limsup_{T\rightarrow \infty} \frac{\mathbb{E} \left[ r(T) \right]}{T} = \limsup_{T\rightarrow \infty}\frac{1}{T} \mathbb{E} \left[ \sum_{j=1}^{\infty} R_j \mathds{1}_{j\le N_T} \right]. 
\end{align} Here we denote $N_T = N(T)+1$ for simplicity.
We denote $R_{j,m} $, $l_{j,m}$ as the integral of $(W_t - \hat{W}_t)^2$ and $t-D_{j-1}$ between $D_{j-1}$ and $m$th delivery time of $j$th epoch, respectively, given that there are $m$ transmissions at $j$th epoch. Then, 
\begin{align}
\nonumber & \mathbb{E} \left[ R_{j,m} \mathds{1}_{j\le N_T}  \right] \\ \nonumber
= &  \mathbb{E}_{\mathcal{H}_{j-1}} \left[   \mathbb{E} \left[ R_{j,m} \mathds{1}_{j\le N_T}  \right] \Big{|} \mathcal{H}_{j-1} \right]\\
\overset{(i)}{=} &  \mathbb{E}_{\mathcal{H}_{j-1}} \left[   \mathbb{E} \left[ R_{j,m}  \big{|} \mathcal{H}_{j-1} \right] \mathds{1}_{j\le N_T}  \Big{|} \mathcal{H}_{j-1} \right]. \label{eq-rim}
\end{align}
Condition $(i)$ is because $j\le N_T$ (i.e., $j-1 \le N(T)$) is fixed given $\mathcal{H}_{j-1}$. Similarly, 
\begin{align}
\nonumber & \mathbb{E} \left[ l_{j,m} \mathds{1}_{j\le N_T}  \right] \\
= &  \mathbb{E}_{\mathcal{H}_{j-1}} \left[   \mathbb{E} \left[ l_{j,m}  | \mathcal{H}_{j-1} \right] \mathds{1}_{j\le N_T}  \big{|} \mathcal{H}_{j-1} \right].
\end{align}
We then find out the lower bound of $ \sum_{j=1}^{\infty} R_j \mathds{1}_{j\le N_T}$, equals to $T R_{\text{min}}$, in the following equations \eqref{eq-lowbound}. Here, condition (i) is due to monotone convergence theorem, and condition (ii) is due to \eqref{eq-rim}. The value $R^*(\mathcal{H}_{j-1})$ is the minimum of the fraction $\frac{ \sum_{m=1}^{\infty} \alpha^{m-1} (1-\alpha)   \mathbb{E} \left[   R_{j,m} \big{|} \mathcal{H}_{j-1} \right]}{ \sum_{j=m}^{\infty} \alpha^{m-1} (1-\alpha)   \mathbb{E} \left[   l_{j,m} \big{|} \mathcal{H}_{j-1} \right]}$, and $R_{min}$ is the minimum of $R^*(\mathcal{H}_{j-1})$ over all $\mathcal{H}_{j-1}$. Note that any policy that achieves $R^*(\mathcal{H}_{j-1})$ is not related to $\mathcal{H}_{j-1}$. Thus, the inequalities hold if we can find out such a policy that is not related to $\mathcal{H}_{j-1}$. In addition, $R^*(\mathcal{H}_{j-1}) = R_{\text{min}}$.
\begin{align}
\nonumber & \mathbb{E} \left[  \sum_{j=1}^{\infty} R_j \mathds{1}_{j\le N_T} \right]\\
\nonumber = & \mathbb{E} \left[   \sum_{j=1}^{\infty}   \sum_{m=1}^{\infty} R_{j,m} \Pi_{k=1}^{m-1} \mathds{1}_{e_{j,k}} \mathds{1}_{e^c_{j,m}} \mathds{1}_{j\le N_T} \right]\\
\nonumber  \overset{(j)}{=} &  \sum_{j=1}^{\infty}   \sum_{m=1}^{\infty} \mathbb{E} \left[   R_{j,m} \Pi_{k=1}^{m-1} \mathds{1}_{e_{j,k}} \mathds{1}_{e^c_{j,m}} \mathds{1}_{j\le N_T} \right]\\
\nonumber = &   \sum_{j=1}^{\infty}   \sum_{m=1}^{\infty} \alpha^{m-1} (1-\alpha) \mathbb{E} \left[   R_{j,m}  \mathds{1}_{j\le N_T} \right] \\
\nonumber \overset{(ii)}{=} &  \sum_{j=1}^{\infty}   \sum_{m=1}^{\infty} \alpha^{m-1} (1-\alpha)   \mathbb{E}_{\mathcal{H}_{j-1}} \left[   \mathbb{E} \left[   R_{j,m} \big{|} \mathcal{H}_{j-1} \right]  \mathds{1}_{j\le N_T}  \Big{|} \mathcal{H}_{j-1} \right]\\
\nonumber = &  \sum_{j=1}^{\infty}    \mathbb{E}_{\mathcal{H}_{j-1}} \left[  \sum_{m=1}^{\infty} \alpha^{m-1} (1-\alpha)   \mathbb{E} \left[   R_{j,m} \big{|} \mathcal{H}_{j-1} \right]  \mathds{1}_{i\le N_T}  \Big{|} \mathcal{H}_{j-1} \right]\\
\nonumber = &   \sum_{j=1}^{\infty}    \mathbb{E}_{\mathcal{H}_{j-1}} \left[  \sum_{j=m}^{\infty} \alpha^{m-1} (1-\alpha)   \mathbb{E} \left[   l_{j,m} \big{|} \mathcal{H}_{j-1} \right]  \mathds{1}_{j\le N_T}  \frac{ \sum_{m=1}^{\infty} \alpha^{m-1} (1-\alpha)   \mathbb{E} \left[   R_{j,m} \big{|} \mathcal{H}_{j-1} \right]}{ \sum_{j=m}^{\infty} \alpha^{m-1} (1-\alpha)   \mathbb{E} \left[   l_{j,m} \big{|} \mathcal{H}_{j-1} \right]} \Big{|} \mathcal{H}_{j-1} \right]\\
\nonumber \ge &   \sum_{j=1}^{\infty}    \mathbb{E}_{\mathcal{H}_{j-1}} \left[  \sum_{m=1}^{\infty} \alpha^{m-1} (1-\alpha)   \mathbb{E} \left[   l_{j,m} \big{|} \mathcal{H}_{j-1} \right]  \mathds{1}_{j\le N_T} \cdot R^*(\mathcal{H}_{j-1}) \right] \\
\nonumber  \ge &   \sum_{j=1}^{\infty}    \mathbb{E}_{\mathcal{H}_{j-1}} \left[  \sum_{m=1}^{\infty} \alpha^{m-1} (1-\alpha)   \mathbb{E} \left[   l_{j,m} \big{|} \mathcal{H}_{j-1} \right]  \mathds{1}_{j\le N_T}  \right] R_{\text{min}}\\
\nonumber  =&  \mathbb{E} \left[  \sum_{j=1}^{\infty} l^e_j  \mathds{1}_{j\le N_T} \right]R_{\text{min}}\\
  \ge & T R_{\text{min}}. \label{eq-lowbound}
\end{align}
Divide $T$ on both sides and take the limit of $T$, then we can get
$
\lim_{T\rightarrow \infty}\frac{1}{T} \mathbb{E} \left[  \sum_{j=1}^{\infty} R_j \mathds{1}_{j\le N_T} \right] = R_{\text{min}},
$ i.e., all of the inequalities will hold if $l^e_j$ is independent of $\mathcal{H}_{j-1}$, and we can find out an optimal policy that solves \eqref{eq-singlepoch-mse}. This ends our proof of Proposition~\ref{prop-aware}.

\section{Proof of Lemma \ref{lemma-tail0}}\label{apptail0} 

Using \cite[Lemma~3]{sun2020sampling}, for any finite stopping time $\tau$, we have 
\begin{align}
\mathbb{E} \left[ \int_0^{\tau} W_t^2 dt \right] = \frac{1}{6} \mathbb{E} \left[ W_{\tau}^4 \right]. \label{eq-four}
\end{align}

Denote $\tilde{R}_n =  \int_0^{\tilde{l}^e_n+Y_{n,M_n}} W_t^2 dt $, where $\tilde{l}^e_n$ is a stopping time upper bound denoted in Appendix~\ref{appprop-aware}, independent of $Y_{n,M_n}$. Then, 
\begin{align}
\nonumber \mathbb{E} \left[ \tilde{R}_n \right] = & \mathbb{E} \left[  \mathbb{E} \left[  \int_0^{\tilde{l}^e_n+Y_{n,M_n}} W_t^2 dt \ \bigg{|} M_n \right] \right]\\
 \nonumber = & \frac{1}{6}  \mathbb{E} \left[  \mathbb{E} \left[ W_{\tilde{l}^e_n+Y_{n,M_n}}^4   \ \bigg{|} M_n  \right] \right] \\
 \nonumber  = & \frac{1}{6}  \mathbb{E} \left[  W_{\tilde{l}^e_n+Y_{n,M_n}}^4   \right]\\
 \nonumber  = & \frac{1}{6}  \mathbb{E} \left[  \left(W_{\tilde{l}^e_n+Y_{n,M_n}} - W_{\tilde{l}_n} + W_{\tilde{l}^e_n}\right)^4   \right] \\
 \nonumber  = &  \frac{1}{6}  \mathbb{E} \left[  W_{\tilde{l}^e_n}^4   \right] + \frac{1}{6}  \mathbb{E} \left[  \left(W_{\tilde{l}^e_n+Y_{n,M_n}} - W_{\tilde{l}^e_n}\right) ^4   \right]+
\mathbb{E} \left[  W_{\tilde{l}^e_n}^2     \right]\mathbb{E} \left[  \left(W_{\tilde{l}^e_n+Y_{n,M_n}} - W_{\tilde{l}^e_n}\right) ^2 \right] \\  & +
 \frac{2}{3}\mathbb{E} \left[  W_{\tilde{l}^e_n}^3     \right]\mathbb{E} \left[  W_{\tilde{l}^e_n+Y_{n,M_n}} - W_{\tilde{l}^e_n} \right] + \frac{2}{3} \mathbb{E} \left[  W_{\tilde{l}^e_n}     \right]\mathbb{E} \left[  \left(W_{\tilde{l}^e_n+Y_{n,M_n}} - W_{\tilde{l}^e_n}\right) ^3 \right]. 
\end{align}
According to strong Markov property of the Wiener process, $W_{\tilde{l}^e_n+Y_{n,M_n}} - W_{\tilde{l}^e_n}$ is independent of $W_{\tilde{l}^e_n}$. Using \cite[Theorem 2.44 and Theorem 2.48]{morters2010brownian}, for any finite stopping time $\tau$, $\mathbb{E} \left[ W^2_\tau \right] = \mathbb{E} \left[ \tau \right]$ and $\mathbb{E} \left[ W_\tau \right]=0$. 
Both $\tilde{l}^e_n$ and $Y_{n,M_n}$ are finite and $ \mathbb{E} \left[  W_{\tilde{l}^e_n}^4   \right]<\infty$. So 
\begin{align}
\mathbb{E} \left[ \tilde{R}_n \right] & =  \frac{1}{6}  \mathbb{E} \left[  W_{\tilde{l}^e_n}^4   \right] + \frac{1}{6}  \mathbb{E} \left[  W_{Y_{n,M_n}}  ^4   \right]+
\mathbb{E} \left[  \tilde{l}^e_n     \right]\mathbb{E} \left[  Y_{n,M_n} \right]  <\infty. \label{eq-finite}
\end{align}
Also, since mse $(W_t - \hat{W}_t)^2$ is nonnegative and $l^e_n\le \tilde{l}^e_n$,
\begin{align}
R_n =  \int_{D_{n-1,M_{n-1}}}^{D_{n,M_n}} (W_t - W_{S_{n-1,M_{n-1}}})^2 dt =  \int_{Y_{n-1,M_{n-1}}}^{l^e_n+Y_{n,M_{n}}} W_t^2 dt \le \tilde{R}_n. \label{eq-rnsmall}
\end{align}
Using \eqref{eq-rnsmall}, we have that 
\begin{align}
\nonumber & \mathbb{E} \left[ (R_n+R_{n+1}) \mathds{1}_{l^e_n+l^e_{n+1}>T-t} \Big{|} S_{n-1,M_{n-1}}=t \right]\\
\nonumber \le & \mathbb{E} \left[ (\tilde{R}_n+\tilde{R}_{n+1}) \mathds{1}_{\tilde{l}^e_n+\tilde{l}^e_{n+1}>T-t} \Big{|} S_{n-1,M_{n-1}}=t \right]\\
\nonumber = &  \mathbb{E} \left[ (\tilde{R}_n+\tilde{R}_{n+1}) \mathds{1}_{\tilde{l}^e_n+\tilde{l}^e_{n+1}>T-t} \right]\\
\triangleq & F(T-t).
\end{align}
The first equality holds because $\tilde{R}_n,\tilde{R}_{n+1},\tilde{l}^e_n,\tilde{l}^e_{n+1}$ are independent of $\mathcal{H}_{n-1}$. 
By \eqref{eq-finite}, $F(0)<\infty$, $F(\delta)$ is monotone decreasing, and $F(\delta)\rightarrow 0$ as $\delta\rightarrow \infty$. We trivially set $S_{-1,M_{-1}}=S_{0,M_0}=0$, so we have $N(T)\ge 0$ and $R_0=0$. We have 
\begin{align}
\nonumber R_{N(T)}+R_{N(T)+1} &= \sum_{n=0}^{\infty} (R_n+R_{n+1}) \mathds{1}_{N(T)=n} =  \sum_{n=0}^{\infty} (R_n+R_{n+1}) \mathds{1}_{S_{n,M_n}\le T, S_{n+1,M_{n+1}} >T} \\
& \le \sum_{n=0}^{\infty} (R_n+R_{n+1}) \mathds{1}_{S_{n-1,M_{n-1}}\le T, S_{n+1,M_{n+1}} >T}.
\end{align}
Therefore, 
\begin{align}
\nonumber \mathbb{E} \left[ R_{N(T)}+R_{N(T)+1}  \right] & \le  \sum_{n=0}^{\infty} \mathbb{E} \left[ (R_n+R_{n+1}) \mathds{1}_{S_{n-1,M_{n-1}}\le T, S_{n+1,M_{n+1}} >T} \right] \\
\nonumber & =  \sum_{n=0}^{\infty} \int_0^T \mathbb{E} \left[ (R_{n}+R_{n+1}) \mathds{1}_{S_{n+1,M_{n+1}}>T} \Big{|}S_{n-1,M_{n-1}}=t \right] dP_{S_{n-1,M_{n-1}}}(t)\\
\nonumber & \le  \sum_{n=0}^{\infty} \int_0^T F(T-t) dP_{S_{n-1,M_{n-1}}}(t) \\
& =  \sum_{n=2}^{\infty} \int_0^T F(T-t) dP_{S_{n-1,M_{n-1}}}(t) + 2F(T),
\end{align}
where $P_{S_{n-1,M_{n-1}}}(t) = P(S_{n-1,M_{n-1}}\le t)$. Note that $S_{n-1,M_{n-1}} \le t$ is equivalent to $N(T)\ge n-1$. So 
\begin{align}
\mathbb{E} \left[ N(t)+1 \right] = \sum_{n=1}^{\infty} P(N(t) +1 \ge n) =  \sum_{n=2}^{\infty} P(N(t) \ge n-1)+1 =  \sum_{n=2}^{\infty} P_{S_{n-1,M_{n-1}}}(t)+1.
\end{align}
So we have $\mathbb{E} \left[ N(t) \right]= \sum_{n=2}^{\infty} P_{S_{n-1,M_{n-1}}}(t) $ and 
\begin{align}
 \mathbb{E} \left[ R_{N(T)}+R_{N(T)+1}  \right]  
\le  \int_0^T F(T-t) d\mathbb{E} \left[ N(t) \right] + 2F(T).
\end{align}
Note that $F(T)$ vanishes to $0$ as $T\rightarrow \infty$. Following the same steps as \cite[Appendix C1]{wu2017optimal}, we have $ \int_0^T F(T-t) d\mathbb{E} \left[ N(t) \right]/T\rightarrow 0$ as $T\rightarrow \infty$. This ends our proof.


\section{Proof of Lemma~\ref{derivative1}}\label{appderivative1}
In this appendix, for simplicity, we will replace the per stage cost $g(w,v,\text{mse}_{\text{opt}})$ by $g(w,v)$.
\subsection{Preliminary}

\begin{definition}\label{def-5}
Let $Y_1,\cdots,Y_n$ as an i.i.d. sequence with the same distribution as the channel delay $Y$, and $a_1\cdots a_n$ as any nonnegative sequence. For any real value $w$, we denote the event $c_n(w) = \{ |w+W_{Y_1}|\ge a_1, |w+W_{Y_1}+W_{Y_2}| \ge a_2,\cdots, |w+W_{Y_1}+\cdots + W_{Y_n}|\ge a_n \}$. If $n=0$, we denote $c_0(w)$ as simply the whole set. 
Denote $f_{w+W_{Y_1} + \cdots + W_{Y_{n+1}} } (x)$ as the conditional probability density function (pdf) of $w+W_{Y_1} + \cdots + W_{Y_{n+1}}$ with the condition $c1_n(w)$, multiplied by a probability $\mathbb{P}(c_n(w))$. In other words, 
\begin{align}
f_{w+W_{Y_1} + \cdots + W_{Y_{n+1}} } (x) =  \frac{d}{dx} \mathbb{P} \left( \{ w+W_{Y_1} + \cdots + W_{Y_{n+1}} \le x \} \cap c_n(w) \right). \label{eq-jointpdf}
\end{align}
\end{definition}
Note that $f_{w+W_Y}(x)$ is equal to the pdf of $w+W_Y$ at $x$, since $c_0(w)$ is the whole set.

\begin{lemma} \label{lemma-prerequisite}
Suppose that $\mathbb{E} \left[ Y \right]<\infty$, and there exists $\epsilon>0$ (which can be arbitrary small), such that $Y\ge \epsilon$. Then, the following conditions hold.

(a) For any $x\in \mathbb{R}$, $f_{W_Y}(x)$ is continuously differentiable in $x$. In addition, $f_{W_Y}(x)$, $f'_{W_Y}(x)$, $f''_{W_Y}(x)$ are bounded, thus $f_{W_Y}(x)$, $f'_{W_Y}(x)$ are both uniformly continuous.


(b) Almost surely, ${\text{\large $\mathds{1}$}}_{c_n(w),c_n(w+\Delta w)} \rightarrow {\text{\large $\mathds{1}$}}_{c_n(w)}$.  

(c) 
For all $w,x \ge 0$, we have $f_{w+W_{Y_1} + \cdots + W_{Y_{n+1}} } (x) \ge f_{w+W_{Y_1} + \cdots + W_{Y_{n+1}} } (-x)$ for all $n=0,1,\ldots$. In addition, $\frac{d}{dw}f_{w+W_{Y_1} + \cdots + W_{Y_{n+1}} } (x)$ is continuous and bounded in $w$.

\end{lemma}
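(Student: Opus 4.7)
The plan is to handle the three parts separately; all rely on the bound $Y \ge \epsilon > 0$, which provides uniform Gaussian envelopes. \textbf{For part (a)}, I will write $f_{W_Y}(x) = \mathbb{E}[\phi_Y(x)]$ with $\phi_Y(x) = (2\pi Y)^{-1/2} e^{-x^2/(2Y)}$ and differentiate under the expectation. The derivatives $\phi'_Y(x) = -(x/Y)\phi_Y(x)$ and $\phi''_Y(x) = (x^2/Y^2 - 1/Y)\phi_Y(x)$ are uniformly bounded in $(x, Y)$ for $Y \ge \epsilon$, since Gaussian decay dominates any polynomial prefactor. Dominated convergence then justifies two passes of differentiation under the expectation, yielding continuity and boundedness of $f_{W_Y}, f'_{W_Y}, f''_{W_Y}$; uniform continuity of $f_{W_Y}$ and $f'_{W_Y}$ follows from the mean value theorem applied to a bounded derivative. \textbf{For part (b)}, conditionally on $(Y_1, \ldots, Y_k)$, $w + W_{Y_1} + \cdots + W_{Y_k}$ is Gaussian with variance at least $k \epsilon$, so the boundary event $\{|w + W_{Y_1} + \cdots + W_{Y_k}| = a_k\}$ has probability zero. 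A union bound over $k = 1, \ldots, n$ leaves a full-measure set on which every constraint is strict, and on that set the strict inequalities persist under sufficiently small perturbations of $w$, so ${\text{\large $\mathds{1}$}}_{c_n(w) \cap c_n(w+\Delta w)} = {\text{\large $\mathds{1}$}}_{c_n(w)}$ eventually.

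\textbf{For the symmetry claim in part (c)}, I will induct on $n$. Writing
\begin{align*}
f_{w + S_{n+1}}(x) = \int_{|y_1| \ge a_1} f_{W_{Y_1}}(y_1 - w)\, g_n(y_1, x)\, dy_1,
\end{align*}
where $g_n(y_1, x)$ denotes the corresponding density for the walk starting at $y_1$ with shifted thresholds $a_2, \ldots, a_n$, I will split at $\pm a_1$ and substitute $y_1 \mapsto -y_1$ on the negative branch, invoking the symmetry identity $g_n(-y_1, x) = g_n(y_1, -x)$ (obtained by flipping every intermediate variable $y_k \mapsto -y_k$ and using evenness of each Gaussian increment density). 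The difference then collapses to
\begin{align*}
f_{w + S_{n+1}}(x) - f_{w + S_{n+1}}(-x) = \int_{y_1 \ge a_1} \bigl[f_{W_{Y_1}}(y_1 - w) - f_{W_{Y_1}}(y_1 + w)\bigr]\bigl[g_n(y_1, x) - g_n(y_1, -x)\bigr] dy_1.
\end{align*}
The first bracket is nonnegative because $f_{W_{Y_1}}$ is a mixture of centered Gaussians, hence symmetric and non-increasing in $|y|$, and $|y_1 - w| \le y_1 + w$ for $y_1, w \ge 0$; the second bracket is nonnegative by the inductive hypothesis. The base case $n = 0$ reduces to the unimodality inequality $f_{W_{Y_1}}(x - w) \ge f_{W_{Y_1}}(x + w)$.

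\textbf{For the bounded-derivative claim in (c)}, I differentiate under the integral to obtain
\begin{align*}
\partial_w f_{w + S_{n+1}}(x) = -\!\int\! \cdots \!\int f'_{W_{Y_1}}(y_1 - w) \prod_{k=2}^{n+1} f_{W_{Y_k}}(y_k - y_{k-1}) \prod_{k=1}^{n} \mathds{1}_{|y_k| \ge a_k}\, dy_1 \cdots dy_n;
\end{align*}
dropping the indicators and bounding $|f'_{W_{Y_1}}(y_1 - w)| \le \|f'_{W_{Y_1}}\|_\infty$ (from part (a)) leaves a product of Gaussian transition densities whose full integral in $(y_1,\ldots,y_n)$ equals $1$, giving a bound uniform in $(w, x)$; continuity in $w$ follows from dominated convergence using this envelope. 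The main obstacle is the induction in (c): setting up the decomposition so the asymmetry concentrates on the first increment, and verifying the sign-flip identity so that both brackets in the integrand have matching signs. Parts (a) and (b) are routine once the $Y \ge \epsilon$ lower bound and the absolute continuity of Gaussian mixtures are invoked.
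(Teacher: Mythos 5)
Your proposal is correct and, in substance, follows the same route as the paper's proof: part (a) by differentiating the Gaussian-mixture representation under the expectation using the uniform envelopes supplied by $Y\ge\epsilon$; part (b) by observing that the boundary events $\{|w+W_{Y_1}+\cdots+W_{Y_k}|=a_k\}$ are null, so the indicators agree for all sufficiently small $\Delta w$ almost surely; and the derivative bound in (c) by differentiating the nested integral, bounding $f'_{W_{Y_1}}$ by its sup-norm, and integrating out the remaining transition densities. The only real deviation is in the symmetry induction of part (c): you peel off the \emph{first} increment, place the Gaussian unimodality inequality $f_{W_{Y_1}}(y_1-w)\ge f_{W_{Y_1}}(y_1+w)$ on that increment, and invoke the inductive hypothesis on the remaining constrained walk started at $y_1\ge a_1$ (which is legitimate because Definition~\ref{def-5} allows arbitrary nonnegative starting points and threshold sequences, and your flip identity $g_n(-y_1,x)=g_n(y_1,-x)$ follows from evenness of each increment density and symmetry of the constraints); the paper instead peels off the \emph{last} increment, applying the hypothesis to the first-$n$-step density and unimodality to $f_{W_{Y_{n+1}}}$. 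These are mirror-image versions of the same correlation-type argument, and both are valid; neither buys anything substantial over the other, though your version makes explicit the need for the hypothesis to hold uniformly over starting points, a point the paper leaves implicit.
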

\begin{proof}
We first show Lemma~\ref{lemma-prerequisite}(a). Note that $Y\ge \epsilon$, and $W_y$ is normally distributed with variance $y$. Thus, $f_{W_y} (x) = 1/\sqrt{y}  e^{-0.5 x^2/y}$, and $f_{W_y} (x)$ is bounded. Also, $f'_{W_y} (x) = -x/y^{1.5} e^{-0.5 x^2/y}, f''_{W_y} (x) = -1/y^{1.5} e^{-0.5 x^2/y} + x^2/y^{2.5} e^{-0.5 x^2/y} $ are still bounded and continuous. So
\begin{align}
& f_{W_Y}(x) = \frac{dP(W_Y \le x)}{dx} = \lim_{\Delta x \rightarrow 0} \frac{ \mathbb{E}_Y \mathbb{E} \left[ {\text{\large $\mathds{1}$}}_{x< W_{Y} \le x+\Delta x} \right]  }{\Delta x} =  \lim_{\Delta x \rightarrow 0} \frac{ \mathbb{E}_Y f_{W_Y} (x) \Delta x   }{\Delta x} =  \mathbb{E}_Y f_{W_Y} (x), \\
& f'_{W_Y}(x)  =   \lim_{\Delta x \rightarrow 0} \frac{ \mathbb{E}_Y [ f_{W_Y} (x+\Delta x) - f_{W_Y} (x) ] } {\Delta x} =  \mathbb{E}_Y f'_{W_Y} (x), \  \text{similarly}, \  f''_{W_Y}(x) = \mathbb{E}_Y f''_{W_Y} (x). 
\end{align}
Thus, both $f'_{W_Y}(x)$ and $f_{W_Y}(x)$ are bounded and continuous, and $f''_{W_Y}(x)$ is bounded. 

Then we show Lemma~\ref{lemma-prerequisite}(b). It suffices to show that $c_n(w) \cap c_n^c(w+\Delta w)\rightarrow \phi$ and $c_n(w+\Delta w) \cap c_n^c(w)\rightarrow \phi$ almost surely. Due to symmetry, without loss of generality, we will assume $\Delta w \ge 0$ and show that $c_n(w) \cap c_n^c(w+\Delta w)\rightarrow \phi$. Note that ${Y_1}+\cdots + {Y_k} \ge \epsilon$ and is finite as well, so by Lemma~\ref{lemma-prerequisite}(a), the pdf of $W_{Y_1+\cdots+Y_k} (= W_{Y_1}+\cdots + W_{Y_k})$ is bounded. So we have
\begin{align}
\nonumber & \{ c_n(w) \cap c_n^c(w+\Delta w) \}  \subseteq \bigcup\limits_{k=1}^n \left\{  |w+W_{Y_1}+\cdots + W_{Y_k}| \ge v \cap  |w+\Delta w+W_{Y_1}+\cdots + W_{Y_k}| < v \right\} \\
& = \bigcup_{k=1}^n \left( W_{Y_1}+\cdots + W_{Y_k} \in (-v - w - \Delta w,-v-w] \right) \rightarrow \phi, \ \text{almost surely}.
\end{align} 

We finally show Lemma~\ref{lemma-prerequisite}(c) by induction. Note that the initial condition holds because $f_{W_Y} (x - w) \ge f_{W_Y} (-x-w)$. Suppose that the hypothesis holds. Then, by \eqref{eq-jointpdf}, for any $x,w \ge 0$,
\begin{align}
\nonumber & f_{w+W_{Y_1} + \cdots + W_{Y_{n}} + W_{Y_{n+1}} } (x) = \int_{|a| \ge a_{k}} f_{w+W_{Y_1} + \cdots + W_{Y_{n}}  } (x) \times f_{W_{Y_{n+1}}}(x - a) da, \\ \nonumber &
 f_{w+W_{Y_1} + \cdots + W_{Y_{n}} + W_{Y_{n+1}} } (x) - f_{w+W_{Y_1} + \cdots + W_{Y_{n}} + W_{Y_{n+1}} } (-x) \\ \nonumber =&  \int_{|a| \ge a_{k}} f_{w+W_{Y_1} + \cdots + W_{Y_{n}}  } (a) \times ( f_{W_{Y_{n+1}}}(x - a) - f_{W_{Y_{n+1}}}(-x - a)) da \\ 
 = & \int_{a \ge a_{k}} ( f_{w+W_{Y_1} + \cdots + W_{Y_{n}}  } (a) -  f_{w+W_{Y_1} + \cdots + W_{Y_{n}}  } (-a) ) \times ( f_{W_{Y_{n+1}}}(x - a) - f_{W_{Y_{n+1}}}(-x - a)) da \ge 0, 
\end{align} which ends the proof of the first claim. The last equation holds because
\begin{align}
\nonumber & \int_{a \le - a_{k}} f_{w+W_{Y_1} + \cdots + W_{Y_{n}}  } (a) \times ( f_{W_{Y_{n+1}}}(x - a) - f_{W_{Y_{n+1}}}(-x - a)) da \\ \nonumber
 = &  \int_{a \ge a_{k}} f_{w+W_{Y_1} + \cdots + W_{Y_{n}}  } (-a) \times ( f_{W_{Y_{n+1}}}(x + a) - f_{W_{Y_{n+1}}}(-x + a)) da \\
 = &  \int_{a \ge a_{k}}  f_{w+W_{Y_1} + \cdots + W_{Y_{n}}  } (-a) \times ( f_{W_{Y_{n+1}}}(-x - a) - f_{W_{Y_{n+1}}}(x - a)) da.
\end{align}

To show that $\frac{d}{dw}f_{w+W_{Y_1} + \cdots W_{Y_k}}(x)$ is continuous and bounded in $w$, note that      
\begin{align}
\nonumber & f_{w+W_{Y_1} + \cdots W_{Y_k}}( x ) = \int_{s_1: |x - s_1| \ge v_1} f_{W_{Y_k}} (s_1) \int_{s_2: |x - s_1 - s_2 |\ge a_2} f_{W_{Y_{k-1}}}(s_2) \\ & \cdots \int_{s_{k-1}: |x - s_1 - \cdots - s_{k-1}| \ge v_{k-1}} f_{W_{Y_2}} (s_{k-1}) f_{W_{Y_1}} (x - w - s_1 - \cdots - s_{k-1})   d s_{k-1} \cdots ds_2 ds_1.
\end{align}
In the above expression, only the final term $f_{W_{Y_1}} (x - w - s_1 - \cdots - s_{k-1})$ is related to $w$ and this term is continuously differentiable in $w$. Also, $f'_{W_Y}(w)$ is bounded, and the above expression is bounded. Therefore,
\begin{align}
\nonumber & \frac{d}{dw}f_{w+W_{Y_1} + \cdots W_{Y_k}}( x ) = - \int_{s_1: |x - s_1| \ge a_1} f_{W_{Y_k}} (s_1) \int_{s_2: |x - s_1 - s_2 |\ge a_2} f_{Y_{k-1}}(s_2) \\ & \cdots \int_{s_{k-1}: |x - s_1 - \cdots - s_{k-1}| \ge a_{k-1}} f_{W_{Y_2}} (s_{k-1}) f'_{W_{Y_1}} (x - w - s_1 - \cdots - s_{k-1})   d s_{k-1} \cdots ds_2 ds_1.
\end{align} Since $f''_{W_Y}(w)$ is bounded, $\frac{d}{dw}f_{w+W_{Y_1} + \cdots W_{Y_k}}( x)$ is bounded and continuous in $w$.
This ends the proof of Lemma~\ref{lemma-prerequisite}.
\end{proof}
For the property of $J'_{n}(w)$, we need the following lemma: 

\begin{lemma}\label{lemma-jw}
(a) $J'_n(w)$ is continuous for all $w$.

(b) The functions $J_{n}(w), J'_{n}(w)$ are upper bounded by some functions $\bar{J}_n (w), \bar{J'}_n(w)$, respectively, such that for any given $w$, we have $\mathbb{E}[\bar{J}_{n}(w+W_Y)]<\infty$ and $\mathbb{E}[\bar{J'}_{n}(w+W_Y)]<\infty$.

(c) If $|w| < v_{n+1}$ $J'_{n+1}(w) = \partial_x g(w,v_{n+1})$. If $|w| > v_{n+1}$, we have\footnote{Note that the event $|w+W_{Y_1}+\ldots W_{Y_n}|=a$ has zero probability for all index $n$ and real value $a$.} 
\begin{align}
\nonumber J'_{n+1}(w) 
 = & \partial_x g(w,v_{n+1}) + \alpha \mathbb{E} \left[  J'_n(w+W_{Y_1})   \right] \\ 
\nonumber   = & \partial_x g(w,v_{n+1})  + \alpha \mathbb{E} \left[ \partial_x g(w+W_{Y_1},v_{n})  \right]   + \alpha^2 \mathbb{E} \left[ J'(w+W_{Y_1}+W_{Y_2}) {\text{\large $\mathds{1}$}}_{|w+W_{Y_1}| \ge v_n }   \right]  \\
\nonumber & \cdots \\
 = &  \partial_x g(w,v_{n+1})  +
\sum_{k=1}^{n} \alpha^k \mathbb{E} \left[ \partial_x g(w+W_{Y_1+\cdots+Y_k},v_{n+1-k}) {\text{\large $\mathds{1}$}}_{c_{n,k}(w) }   \right]  \label{eq-eachterm}
\end{align} 
the event $c_{n,1}(w)$ is the whole set, and the events $c_{n,k}(w)$ for $k=2,\cdots,n$ are defined as
\begin{align}
c_{n,k}(w) = \{  |w+W_{Y_1}| \ge v_{n}, \cdots, |w+W_{Y_1}+\cdots+W_{Y_{k-1}}| \ge v_{n+2-k} \}, \ n=1,2,3,\cdots.
\end{align}
\end{lemma}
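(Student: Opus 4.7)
The plan is to prove (a), (b), and (c) simultaneously by induction on $n$, because the three parts are interlocking: (b) supplies the integrability needed to interchange differentiation with expectation in deriving (c), while (c) supplies the explicit recursive formula needed to verify continuity in (a). The base case $n=0$ is trivial since $J_0\equiv 0$.

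For the inductive step I would start with (c). Combining the value iteration \eqref{eq-valueiterationn} with the fact (established in Theorem~\ref{thm-valueiteration}) that the optimal stopping time is the hitting time at threshold $v_{n+1}$, for $|w|>v_{n+1}$ the optimal policy stops immediately, giving $J_{n+1}(w)=g(w;0)+\alpha\mathbb{E}[J_n(w+W_Y)]$. Differentiating under the expectation --- justified by the quadratic growth bound from the inductive (b) combined with the boundedness and continuity of $f_{W_Y}$ from Lemma~\ref{lemma-prerequisite}(a) --- and using that $\partial_w g(w;0)=2\mathbb{E}[Y]w=\partial_x g(w,v_{n+1})$ in the stopping region of \eqref{eq-gwvbeta}, yields the one-step identity $J'_{n+1}(w)=\partial_x g(w,v_{n+1})+\alpha\mathbb{E}[J'_n(w+W_{Y_1})]$. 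In the continuation region $|w|<v_{n+1}$, solving the free-boundary ODE \eqref{eq-fbm1simple} with the boundary condition $J'_{n+1}(0)=0$, which follows from the symmetry $J_{n+1}(-w)=J_{n+1}(w)$ (inherited inductively from the symmetry of $W_t$ and the symmetric threshold rule), reproduces $\partial_x g(w,v_{n+1})=-\tfrac{2}{3}w^3+2\,\text{mse}_{\text{opt}}\,w$. Iterating the one-step identity stage by stage, tracking the nested indicators $c_{n,k}$ (which encode the event that the chain had still not terminated at each intermediate stage), and truncating at the trivial term $J'_0\equiv 0$, produces the telescoping expansion \eqref{eq-eachterm}.

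For part (a), inside each open region $|w|<v_{n+1}$ and $|w|>v_{n+1}$ the formulas from (c) are manifestly continuous, once one knows that $w\mapsto\mathbb{E}[J'_n(w+W_Y)]$ is continuous; the latter is obtained by writing the expectation as a convolution $\int J'_n(y)\,f_{W_Y}(y-w)\,dy$ and invoking dominated convergence via the affine bound from (b) together with Lemma~\ref{lemma-prerequisite}(a). Continuity across the free boundary $w=\pm v_{n+1}$ is precisely the smooth-fit condition \eqref{eq-fbm3simple}, which is guaranteed by the defining equation of $v_{n+1}$ in Theorem~\ref{thm-valueiteration}. For part (b), I would show by induction that $J_n(w)\le \bar J_n(w)\triangleq c_0^{(n)}+c_2^{(n)}w^2$ and $|J'_n(w)|\le \bar J'_n(w)\triangleq d_0^{(n)}+d_1^{(n)}|w|$, with constants depending only on $n,\alpha,\mathbb{E}[Y],\mathbb{E}[Y^2],\text{mse}_{\text{opt}}$. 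The key observation is that $g(w;0)=\tfrac{1}{2}\mathbb{E}[Y^2]+\mathbb{E}[Y]w^2-\mathbb{E}[Y]\text{mse}_{\text{opt}}$ is already quadratic and the shift $w\mapsto\mathbb{E}[(w+W_Y)^2]=w^2+\mathbb{E}[Y]$ preserves quadratic growth; finiteness of $\mathbb{E}[\bar J_n(w+W_Y)]$ and $\mathbb{E}[\bar J'_n(w+W_Y)]$ then follows immediately from $\mathbb{E}[Y^2]<\infty$.

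The main obstacle is justifying the interchange of differentiation and expectation in $\tfrac{d}{dw}\mathbb{E}[J_n(w+W_Y)]$, because one needs both the integrability afforded by (b) and the smoothness and boundedness of $f_{W_Y}$ from Lemma~\ref{lemma-prerequisite}(a), so the three parts genuinely have to be proved in tandem rather than sequentially. A subsidiary technical difficulty is book-keeping the nested indicators $c_{n,k}$ correctly when iterating the one-step identity; an indexing slip here would yield a formula that looks right but is wrong by one stage, so I would verify the cases $k=1,2$ explicitly against the direct computation before asserting the general telescoping pattern.
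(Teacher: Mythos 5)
Your proposal is essentially the paper's own proof: a simultaneous induction on (a)--(c), with the interchange of $d/dw$ and $\mathbb{E}[\,\cdot\,]$ justified by the inductive growth bounds plus the properties of $f_{W_Y}$ from Lemma~\ref{lemma-prerequisite}(a) and dominated convergence, continuity at $w=\pm v_{n+1}$ supplied by smooth fit \eqref{eq-fbm3simple}, the telescoping of the one-step identity giving \eqref{eq-eachterm}, and affine/quadratic envelopes for $J_n'$ and $J_n$ giving (b). The one point to repair is logical ordering: you invoke Theorem~\ref{thm-valueiteration} to assert that stopping is optimal for $|w|>v_{n+1}$, but in the paper Theorem~\ref{thm-valueiteration}(a) is proved \emph{after} this lemma (via Lemmas~\ref{jsmallg} and~\ref{j-is-excessive}, which rest on Lemma~\ref{derivative1}), so that citation is circular. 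The fix is cosmetic: take $J_{n+1}$ here to be the \emph{candidate} value function of the speculated hitting-time policy, i.e.\ defined by $J_{n+1}(w)=g(w,v_{n+1})+\alpha\mathbb{E}[J_n(\max\{|w|,v_{n+1}\}+W_Y)]$ with $v_{n+1}$ fixed by smooth fit (as in \eqref{eq-rootjn2}), so the stopping-region identity holds by definition rather than by optimality; with that reading your argument coincides with the paper's.
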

\begin{proof}
Note that 
\begin{align}
\nonumber & g(w,v_{n+1}) \\ 
\nonumber & = \left\{ 
\begin{array}{lll}
 \frac{1}{2} \mathbb{E} \left[ Y^2 \right]+\mathbb{E} \left[ Y \right]w^2-\mathbb{E} \left[ Y \right] \text{mse}_{\text{opt}} & |w| \ge v_{n+1}, \\
  \frac{1}{6}(v^4 - w^4) +  \frac{1}{2} \mathbb{E} \left[ Y^2 \right]+\mathbb{E} \left[ Y \right]w^2 - \mathbb{E} \left[ Y \right]\text{mse}_{\text{opt}} - (\text{mse}_{\text{opt}} - \mathbb{E} \left[ Y \right])(v^2 - w^2) & |w| < v_{n+1},
\end{array}\right. \\
& \partial_x g(w,v_{n+1}) = \left\{ 
\begin{array}{lll}
2 \mathbb{E} \left[ Y \right] w & |w| > v_{n+1}, \\
-2/3  w^3 +2 \text{mse}_{\text{opt}} w & |w| < v_{n+1}.
\end{array}\right.
\end{align}
When $n=1$, the free boundary method \eqref{eq-fbm1simple}---\eqref{eq-fbm3simple} implies that $v_1$ is the positive root of $-2/3  w^3 +2 \text{mse}_{\text{opt}} w=2 \mathbb{E} \left[ Y \right] w$, which is $\sqrt{3(\text{mse}_{\text{opt}} - \mathbb{E} \left[ Y \right])}$. Then, $J_1(w) = g(w,v_1)$. By \eqref{eq-fbm3simple}, $J_1'(w)$ is continuous at $w= \pm v_1$, thus continuous at $w\in \mathbb{R}$. For any given $w$, $J_1(w)$ is bounded by $ \mathbb{E} \left[ Y \right] w^2$ plus a constant, and $J_1'(w)$ is bounded by $2 \mathbb{E} \left[ Y \right] |w|$ plus a constant. By this statement and $\mathbb{E}[W^2_Y]<\infty$, condition~(b) holds. Condition (c) trivially holds because we have already set $J_0(w)=0$.

Now we suppose that the hypothesis holds at $n$. We will show condition~(a)---(c) for the case $n+1$. 
Since function $J_{n}$ is even, and $W_{Y}$ has a symmetric pdf, we have 
\begin{align}
J_{n+1}(w) = \left\{ \begin{array}{lll}
 g(w,v_{n+1}) + \alpha \mathbb{E} \left[  J_n(w+W_{Y}) \right] & |w| \ge v_{n+1},\\
 g(w,v_{n+1}) + \alpha \mathbb{E} \left[  J_n(v_{n+1}+W_{Y}) \right] & |w| < v_{n+1}.
 \end{array}\right.
\end{align}
Utilizing the hypothesis that $J_n(w)$ is continuous, condition (b), $g(w,v_{n+1})$ is continuous, we have that $J_{n+1}(w)$ is continuous. 
When $|w|<v_{n+1}$, it is easy to find that 
\begin{align}
J'_{n+1}(w) = \partial_x g(w,v_{n+1}) = -2/3  w^3 +2 \text{mse}_{\text{opt}} w.\label{eq-prooflemma4a2}
\end{align}
Further, when $|w| > v_{n+1}$, by the definition in \eqref{eq-root-qn2}, $J_{n+1}(w) = G_{n+1}(w)$ and
\begin{align}
\nonumber J'_{n+1}(w) = G'_{n+1}(w) & \triangleq  \partial_x g(w,v_{n+1}) + \alpha \frac{d}{d w}  \mathbb{E} \left[  J_n(w+W_{Y}) \right]  \\
\nonumber & =  \partial_x g(w,v_{n+1}) + \alpha \lim_{\Delta w \rightarrow 0} \frac{1}{\Delta w}  \mathbb{E} \left[  J_n(w+W_{Y}+\Delta w)  - J_n(w+W_{Y}) \right]  \\
\nonumber & =  \partial_x g(w,v_{n+1}) + \alpha \lim_{\Delta w \rightarrow 0} \mathbb{E} \left[  J'_n(w+W_Y+\epsilon) \right] \\
& =  \partial_x g(w,v_{n+1}) + \alpha  \mathbb{E} \left[  J'_n(w+W_Y) \right]. \label{eq-prooflemma4a1}
\end{align}
Here, $\epsilon$ is a number that is between $0$ and $\Delta w$. The third equation holds because $J'_n(w)$ is well-defined. The last equation holds due to dominated convergence theorem and the hypothesis conditions~(a),(b). When $|w|<v_{n+1}$, we have $J'_{n+1}(w)=  \partial_x g(w,v_{n+1}) $. Thus, we directly get condition~(c). By the free boundary method \eqref{eq-fbm3simple}, $J'_{n+1}(w)$ is continuous at $|w|=v_{n+1}$, thus condition~(a) holds. In addition, note that
\begin{align}
\nonumber  J'_{n+1}(w) & \le 
| \partial_x g(w,v_{n+1}) | +
\sum_{k=1}^{n} \alpha^k \mathbb{E} \left[ | \partial_x g(w+W_{Y_1+\cdots+Y_k},v_{n+1-k}) | \right] \\
\nonumber & \le  2 \mathbb{E} \left[ Y \right] w +
\sum_{k=1}^{n} \alpha^k \mathbb{E} \left[  2 \mathbb{E} \left[ Y \right] (w+|W_{Y_1}|+\cdots+|W_{Y_k}|)    \right] +b_1,\\
\nonumber  J_{n+1}(w) & \le 
   \mathbb{E} \left[ Y \right] w^2 +
\sum_{k=1}^{n} \alpha^k \mathbb{E} \left[   \mathbb{E} \left[ Y \right] \left(w+|W_{Y_1}|+\cdots+|W_{Y_k}| + \Sigma_{i=1}^{k}v_{n+1-i} \right)^2    \right] +b_2,
\end{align} where $b_1,b_2$ are bounded values irrelevant to $w$. Thus, combined with $\mathbb{E}[W^2_Y]<\infty$, condition~(b) holds.
This ends the proof of lemma~\ref{lemma-jw}.
\end{proof}
Lemma~\ref{lemma-jw} implies that $G'_{n+1}(w),J'_{n+1}(w)$ are well-defined. Also, \eqref{eq-prooflemma4a1} implies that we can interchange the derivative and expectation of $J_n(w+W_Y)$, i.e.,
\begin{align}
 \frac{d}{dw} \mathbb{E} \left[ J_n(w+W_Y) \right] =  \mathbb{E} \left[ J'_n(w+W_Y) \right]. \label{eq-prooflemma4a}
\end{align}

\subsection{Proof of Lemma~\ref{derivative1}(a)}\label{app-lemma4a}

Equations \eqref{eq-stage1wsmallv} and \eqref{eq-stage1vsmallw} in Lemma~\ref{derivative1}(a) are easily shown by \eqref{eq-prooflemma4a2} and \eqref{eq-prooflemma4a1} in the proof of Lemma~\ref{lemma-jw}. 

According to \eqref{eq-prooflemma4a1}, to show that $G''_{n+1}(w),G'''_{n+1}(w)$ are continuous, it is equivalent to show that $ \frac{d^2}{dw^2} \mathbb{E} \left[   J_n(w+W_Y) \right]$ and $\frac{d^3}{dw^3} \mathbb{E} \left[   J_n(w+W_Y) \right]$ are continuous. To show this, we should analyze the derivative of each term $ \mathbb{E} \left[ \partial_x g(w+W_{Y_1+\cdots +Y_j},v_{n+1-k}) {\text{\large $\mathds{1}$}}_{ c_{n,k}(w)  } \right]$ in \eqref{eq-eachterm}. 
We look at any odd polynomial function $f (w+W_{Y_1} + \cdots W_{Y_n})$ with $f(w) = O(w^3)$. 

We are interested in analyzing the derivative
\begin{align}
\lim_{\Delta w \rightarrow 0} \frac{1}{\Delta w}  \mathbb{E} \left[    f(w+\Delta w+W_{Y_1} + \cdots W_{Y_n})  {\text{\large $\mathds{1}$}}_{ c_{n-1}(w+\Delta w)} -  f(w+W_{Y_1} + \cdots W_{Y_n})  {\text{\large $\mathds{1}$}}_{ c_{n-1}(w)} \right]. \label{eq-100}
\end{align}
For simplicity, we utilize the event $c_{n-1}(w)$ from Definition~\ref{def-5}.
We partition the whole set to $n$ sets:
\begin{align}
\nonumber & c^1(w) \triangleq |w+W_{Y_1}| < a_1,\\
\nonumber & c^k(w) \triangleq |w+W_{Y_1}| \ge a_1, \ldots, |w+W_{Y_1}+\ldots+W_{Y_{k-1}}| \ge a_{k-1},  |w+W_{Y_1}+\ldots+W_{Y_{k}}| < a_{k}, k=2,3,\ldots.
\end{align}
First, Lemma~\ref{lemma-prerequisite}(b), Lemma~\ref{lemma-jw} and dominated convergence theorem give\footnote{we use $ \Delta f (w+W_{Y_1} + \cdots W_{Y_n}) $ to replace $  f(w+\Delta w+W_{Y_1} + \cdots W_{Y_n})  {\text{\large $\mathds{1}$}}_{ c_{n-1}(w+\Delta w)} -  f(w+W_{Y_1} + \cdots W_{Y_n})  {\text{\large $\mathds{1}$}}_{ c_{n-1}(w)}$ for simplicity. $a_1,\ldots,a_k$ are arbitrary finite numbers.}  
\begin{align}
&\lim_{\Delta w \rightarrow 0}\frac{1}{\Delta w} \mathbb{E} \left[   \Delta f (w+W_{Y_1} + \cdots W_{Y_n})  {\text{\large $\mathds{1}$}}_{ c_{n-1}(w),c_{n-1}(w+\Delta w)} \right] =   \mathbb{E} \left[    f' (w+W_{Y_1} + \cdots W_{Y_n})  {\text{\large $\mathds{1}$}}_{ c_{n-1}(w)} \right], \label{eq-der2forgmain} \\
& \lim_{\Delta w \rightarrow 0}\frac{1}{\Delta w} \mathbb{E} \left[   \Delta f (w+W_{Y_1} + \cdots W_{Y_n})  {\text{\large $\mathds{1}$}}_{ c^{k1}(w),c^{k2}(w+\Delta w)} \right] = 0, \ k1,k2 \in \{1,2,\cdots,n-1\}. \label{eq-der2forgmain2}
\end{align}
Similarly, for any $k\in \{1,2,\ldots,n-1\}$, 
\begin{align}
\nonumber  & \lim_{\Delta w \rightarrow 0} \frac{1}{\Delta w} \mathbb{E} \left[   \Delta f (w+W_{Y_1} + \cdots W_{Y_n})  {\text{\large $\mathds{1}$}}_{ c_{n-1}(w),c^k(w+\Delta w)} \right] \\
\nonumber  = &  \lim_{\Delta w \rightarrow 0}  \frac{1}{\Delta w} \mathbb{E} \left[ - f (w+W_{Y_1} + \cdots W_{Y_n})  {\text{\large $\mathds{1}$}}_{ c_{n-1}(w),c_{k-1}(w+\Delta w), |w+\Delta w +W_{Y_1}+\cdots+W_{Y_k}| <a_k}  \right] \\
\nonumber &  \Delta w \rightarrow 0+: \\ \nonumber = &  \lim_{\Delta w \rightarrow 0}  \frac{1}{\Delta w}\mathbb{E} \Big[  - f (-a_k+W_{Y_{k+1}} + \cdots W_{Y_n}) {\text{\large $\mathds{1}$}}_{ |-a_k+W_{Y_{k+1}}| \ge a_{k+1}, \cdots,|-a_k+W_{Y_{k+1}} + \cdots W_{Y_{n-1}} |\ge a_{n-1} } \\ \nonumber & \times  {\text{\large $\mathds{1}$}}_{c_{k-1}(w), c_{k-1}(w+\Delta w), -a_k - \Delta w < w +W_{Y_1}+\cdots+W_{Y_k} < -a_k  } \Big]  \\
\nonumber = &  \mathbb{E} \left[  - f (-a_k+W_{Y_{k+1}} + \cdots W_{Y_{n}}) {\text{\large $\mathds{1}$}}_{ |-a_k+W_{Y_{k+1}}| \ge a_{k+1}, \cdots,|-a_k+W_{Y_{k+1}} + \cdots W_{Y_{n-1}} |\ge a_{n-1} } \right] \\  & \times  \lim_{\Delta w \rightarrow 0} \frac{1}{\Delta w} \mathbb{E} \left[  {\text{\large $\mathds{1}$}}_{c_{k-1}(w), -a_k - \Delta w < w +W_{Y_1}+\cdots+W_{Y_k} < -a_k  } \right]; \\
 \nonumber &   \Delta w \rightarrow 0-: \\ = & \mathbb{E} \left[   - f (a_k+W_{Y_{k+1}} + \cdots W_{Y_{n}}) {\text{\large $\mathds{1}$}}_{ \cdots,|a_k+W_{Y_{k+1}} + \cdots W_{Y_{n-1}} |\ge a_{n-1} }  \right]   \lim_{\Delta w \rightarrow 0}  \frac{1}{\Delta w}  \mathbb{E} \left[  {\text{\large $\mathds{1}$}}_{ c_{k-1}(w), a_k  < w +W_{Y_1}+\cdots+W_{Y_k} < a_k - \Delta w } \right].   \end{align} 
 In other case, 
 \begin{align} 
 \nonumber  & \lim_{\Delta w \rightarrow 0} \frac{1}{\Delta w} \mathbb{E} \left[   \Delta f (w+W_{Y_1} + \cdots W_{Y_n})  {\text{\large $\mathds{1}$}}_{ c_{n-1}(w+\Delta w),c^k(w)} \right] \\
\nonumber  = & \lim_{\Delta w \rightarrow 0}  \frac{1}{\Delta w} \mathbb{E} \left[  f (w+W_{Y_1} + \cdots W_{Y_n})  {\text{\large $\mathds{1}$}}_{ c_{n-1}(w+\Delta w), c^{k-1}(w), |w +W_{Y_1}+\cdots+W_{Y_k}| <a_k}  \right] \\
\nonumber & \Delta w \rightarrow 0+: \\ = &  \mathbb{E} \left[  f (a_k+W_{Y_{k+1}} + \cdots W_{Y_n}) {\text{\large $\mathds{1}$}}_{ \cdots,|a_k+W_{Y_{k+1}} + \cdots W_{Y_{n-1}} |\ge a_{n-1} } \right]  \lim_{\Delta w \rightarrow 0}  \frac{1}{\Delta w} \mathbb{E} \left[  {\text{\large $\mathds{1}$}}_{ c_{k-1}(w), a_k - \Delta w < w +W_{Y_1}+\cdots+W_{Y_k} < a_k  } \right]; \\
\nonumber &  \Delta w \rightarrow 0-: \\ \nonumber = & \mathbb{E} \left[  f (-a_k+W_{Y_{k+1}} + \cdots W_{Y_n}) {\text{\large $\mathds{1}$}}_{  |-a_k+W_{Y_{k+1}}| \ge a_{k+1}, \cdots,|-a_k+W_{Y_{k+1}} + \cdots W_{Y_{n-1}} |\ge a_{n-1} } \right] \\ & \times  \lim_{\Delta w \rightarrow 0}  \frac{1}{\Delta w} \mathbb{E} \left[  {\text{\large $\mathds{1}$}}_{c_{k-1}(w), -a_k  < w +W_{Y_1}+\cdots+W_{Y_k} < -a_k - \Delta w }  \right].
\end{align}
Therefore,
\begin{align}
\nonumber & \lim_{\Delta w \rightarrow 0}  \frac{1}{\Delta w} \mathbb{E} \left[   \Delta f (w+W_{Y_1} + \cdots W_{Y_n})  {\text{\large $\mathds{1}$}}_{ c_{n-1}(w+\Delta w),c^k(w)} +  {\text{\large $\mathds{1}$}}_{ c_{n-1}(w),c^k (w+\Delta w)} \right] \\
\nonumber = &  f_{w+W_{Y_1} + \cdots W_{Y_k}}(a_k )  \mathbb{E} \left[  f (a_k+W_{Y_{k+1}} + \cdots W_{Y_n}) {\text{\large $\mathds{1}$}}_{|a_k+W_{Y_{k+1}}  |\ge a_{k+1}, \cdots,|a_k+W_{Y_{k+1}} + \cdots W_{Y_{n-1}} |\ge a_{n-1} }  \right] \\
\nonumber & -  f_{w+W_{Y_1} + \cdots W_{Y_k}}(-a_k ) \mathbb{E} \left[ f (-a_k+W_{Y_{k+1}} + \cdots W_{Y_n})  {\text{\large $\mathds{1}$}}_{ |-a_k+W_{Y_{k+1}}  |\ge a_{k+1}, \cdots,|-a_k+W_{Y_{k+1}} + \cdots W_{Y_{n-1}} |\ge a_{n-1} } \right] \\
\nonumber = & (  f_{w+W_{Y_1} + \cdots W_{Y_k}}(a_k ) + f_{w+W_{Y_1} + \cdots W_{Y_k}}(-a_k ) )  \\ & \times \mathbb{E} \left[    f (a_k+W_{Y_{k+1}} + \cdots W_{Y_n}) {\text{\large $\mathds{1}$}}_{ |a_k+W_{Y_{k+1}}  |\ge a_{k+1}, \cdots,|a_k+W_{Y_{k+1}} + \cdots W_{Y_{n-1}} |\ge a_{n-1} } \right], \label{eq-der2forg}
\end{align}
which is a constant term multiplied by $ f_{w+W_{Y_1} + \cdots W_{Y_k}}(a_k ) + f_{w+W_{Y_1} + \cdots W_{Y_k}}(-a_k )$, where $ f_{w+W_{Y_1} + \cdots W_{Y_k}}(a_k)$ is defined in Definition~\ref{def-5}. The last equation in \eqref{eq-der2forg} holds because $W_Y$ is symmetric and $f(\cdot)$ is odd\footnote{Note that if $f(\cdot)$ is even, then the first term of the last equation in \eqref{eq-der2forg} becomes $ f_{w+W_{Y_1} + \cdots W_{Y_k}}(a_k ) - f_{w+W_{Y_1} + \cdots W_{Y_k}}(-a_k )$.}.
 Recall that $\partial_x g(w,a)$ contains two odd polynomial terms $2 \mathbb{E} \left[ Y \right] w  $ and $2 \text{mse}_{\text{opt}} w - 2/3 w^3 $ related to $w$. Therefore,
the derivative of $ \mathbb{E} \left[ \partial_x g(w+W_{Y_1+\cdots +Y_k},v_{n+1-k}) {\text{\large $\mathds{1}$}}_{ c_{n,k}(w)  } \right]$ that appears in \eqref{eq-eachterm} is expressed as the sum of forms \eqref{eq-der2forgmain}, \eqref{eq-der2forgmain2}, and \eqref{eq-der2forg}. The value of \eqref{eq-der2forgmain2} is $0$. According to Lemma~\ref{lemma-prerequisite}(c), the value of \eqref{eq-der2forg} is a constant multiplied by $f_{w+W_{Y_1} + \cdots W_{Y_k}}(x )$, a continuously differentiable function in $w$ for some parameter $x$. 
For \eqref{eq-der2forgmain}, note that the term of \eqref{eq-der2forgmain} is continuous in $w$. We can take the derivative and apply the previous calculations \eqref{eq-100}---\eqref{eq-der2forg} again\footnote{Despite that $f'(w)$ becomes an even polynomial function with $O(w^2)$, except the minor sign change of the last equality of \eqref{eq-der2forg} as described in the previous footnote, the calculations \eqref{eq-100}---\eqref{eq-der2forg} remain the same.}. Then, the term in \eqref{eq-der2forgmain} is still continuously differentiable. This shows that $ \mathbb{E} \left[ \partial_x g(w+W_{Y_1+\cdots +Y_k},v_{n+1-k}) {\text{\large $\mathds{1}$}}_{ c_{n,k}(w)  } \right]$ is continuously differentiable. Thus, $ \frac{d^2}{dw^2} \mathbb{E} \left[   J_n(w+W_Y) \right],  \frac{d^3}{dw^3} \mathbb{E} \left[   J_n(w+W_Y) \right]$ are both continuous. This ends the proof of Lemma~\ref{derivative1}(a).

\subsection{Proof of Lemma~\ref{derivative1}(b),(c)}

We then use induction to prove Lemma~\ref{derivative1}(b),(c).
Let us denote $\beta=  \text{mse}_{\text{opt}} -  \mathbb{E} \left[ Y \right]$, Note that the free boundary method implies that $v_1=\sqrt{3\beta}$, and $g(w,v_1)$ is continuously differentiable. In addition, 
\begin{align}
\partial_{xx}g(w,v_1) & = \left\{ \begin{array}{lll}
2 \mathbb{E} \left[ Y \right] & \ w > v_1, \vspace{1mm} \\
-2 w^2 + 2 \text{mse}_{\text{opt}}  & \ 0 \le w<v_1.
\end{array}
\right. \\
\partial_{xxx}g(w,v_1) & = \left\{ \begin{array}{lll}
0 & \ w > v_1, \vspace{1mm} \\
-4 w & \ 0 \le w<v_1,
\end{array}
\right.
\end{align}
Then, we have $G''_1(w) = 2 \mathbb{E} \left[ Y \right] $ and for all $w\ge v_1$, 
\begin{align}
G''_1(w) - (-2 w^2 + 2 \text{mse}_{\text{opt}}) = -2 \beta + 2w^2 \ge -2\beta + 6\beta \ge 0.
\end{align}
For all $w \ge 0$, 
\begin{align}
G'''_1(w) - (-4w) = 0+ 4w  \ge 0.
\end{align}
This satisfies the initial condition of Lemma~\ref{derivative1}(b),(c). 
By Lemma~\ref{lemma-jw},
\begin{align}
\nonumber  G''_{n+1}(w) & = 2 \mathbb{E} \left[ Y \right]+  \frac{d^2}{dw^2} \mathbb{E} \left[ J_n (w+W_{Y}) \right] \\
\nonumber & =   \frac{d}{dw} \mathbb{E} \left[  J'_n (w+W_{Y}) \right] \\
 & =  \lim_{\Delta w \rightarrow 0} \frac{1}{\Delta w} \mathbb{E} \left[   J'_n (w+\Delta w+W_{Y}) -  J'_n (w+W_{Y}) \right]. 
\end{align}
Since $J'_n(\cdot)$ is continuous, there exists $0\le \epsilon\le \Delta w$ or $\Delta w \le \epsilon\le 0$, such that
\begin{align}
 \nonumber & \frac{1}{\Delta w} \left(  J'_n (w+\Delta w+W_{Y}) -   J'_n (w+W_{Y}) \right) {\text{\large $\mathds{1}$}}_{|w+\Delta w + W_{Y}|>v_n,|w+W_{Y}|>v_n} \\  \nonumber = &   J''_n (w+\epsilon+W_{Y}) {\text{\large $\mathds{1}$}}_{|w+\Delta w  +W_{Y}|>v_n,|w+W_{Y}|>v_n}, \\
 \nonumber  & \frac{1}{\Delta w}  \left(  J'_n (w+\Delta w+W_{Y}) -  J'_n (w+W_{Y}) \right) {\text{\large $\mathds{1}$}}_{|w+\Delta w + W_{Y}|<v_n,|w+W_{Y}|<v_n} \\ = &  J''_n(w+\epsilon+W_{Y})  {\text{\large $\mathds{1}$}}_{|w+\Delta w + W_{Y}|<v_n,|w+W_{Y}|<v_n}
\end{align}
We have shown that $  \frac{d^2}{dw^2} \mathbb{E} \left[ J_n(w+W_{Y})  \right] $ is continuous, so $ J''_{n+1}(w) $ is continuous for $|w|\ne v_{n+1}$. Applying the same analysis for $ \frac{d^2}{dw^2} \mathbb{E}_{W_{Y_2}} \left[ J_n(w+W_{Y}+ W_{Y_2})  \right] $ (to replace $\frac{d^2}{dw^2} \mathbb{E} \left[ J_n(w+W_{Y})  \right] $) into the proof of Lemma~\ref{derivative1}(a) described in Appendix~\ref{app-lemma4a}, both $ G''_n(w+W_Y+\Delta w)$ and $J''_n(w+W_Y+\Delta w)$ are bounded by a finite random variable. Therefore, using dominated convergence theorem,

\begin{align}
 \nonumber  &  \lim_{\Delta w \rightarrow 0}\mathbb{E} \Big[  \frac{1}{\Delta w} \left(  J'_n (w+\Delta w+W_{Y}) -  J'_n (w+W_{Y}) \right) \times \\ \nonumber & \left( {\text{\large $\mathds{1}$}}_{|w+\Delta w + W_{Y}|>v_n,|w+W_{Y}|>v_n} + {\text{\large $\mathds{1}$}}_{|w+\Delta w + W_{Y}|<v_n,|w+W_{Y}|<v_n} \right) \Big] \\
 = & \mathbb{E} \left[  J''_n (w+W_{Y}) \times \left( {\text{\large $\mathds{1}$}}_{|w+\Delta w  +W_{Y}|>v_n,|w+W_{Y}|>v_n} + {\text{\large $\mathds{1}$}}_{|w+\Delta w + W_{Y}|<v_n,|w+W_{Y}|<v_n} \right) \right]. \label{eq-der2a}
\end{align}
By Lemma~\ref{lemma-prerequisite}, the two remaining events vanishes as $\Delta w \rightarrow 0$. 
Thus, for small $\Delta w$, we have 
\begin{align}
\nonumber & \Big| \ \mathbb{E} \Big{[} \frac{1}{\Delta w} \left(   J'_n (w+\Delta w+W_{Y}) -  J'_n (w+W_{Y}) \right) \\  \nonumber &  \left( {\text{\large $\mathds{1}$}}_{|w+\Delta w + W_{Y}|>v_n,|w+W_{Y}| < v_n} +  {\text{\large $\mathds{1}$}}_{|w+\Delta w + W_{Y}|< v_n,|w+W_{Y}| > v_n} \right)   \Big{]}  \ \Big| \\
\le & \max_{|v_n-x|\le |\Delta w| , x\ne v_n} |J''(x)| \ \mathbb{E}
 \left[ {\text{\large $\mathds{1}$}}_{|w+\Delta w + W_{Y}|>v_n,|w+W_{Y}| < v_n} +  {\text{\large $\mathds{1}$}}_{|w+\Delta w + W_{Y}|< v_n,|w+W_{Y}| > v_n} \right] 
\rightarrow 0. \label{eq-der2b}
\end{align}
By \eqref{eq-der2a},\eqref{eq-der2b}, we have an interesting result: 
\begin{align}
 \frac{d^2}{dw^2}  \mathbb{E} \left[ J_n (w+W_{Y}) \right] & = \mathbb{E} \left[  J''_n (w+W_{Y}) \right], \\ 
  G''_{n+1}(w) & = 2 \mathbb{E} \left[ Y \right]+  \mathbb{E} \left[  J''_n (w+W_{Y}) \right]. 
\end{align}
Then, we consider the third derivative:
\begin{align}
\nonumber G_{n+1}'''(w) & = 0+  \frac{d^3}{dw^3} \mathbb{E} \left[ J_n (w+W_{Y}) \right] 
 =   \frac{d}{dw} \mathbb{E} \left[ J''_n (w+W_{Y}) \right] \\
 & =  \lim_{\Delta w \rightarrow 0} \frac{1}{\Delta w} \mathbb{E} \left[  J''_n (w+\Delta w+W_{Y}) -   J''_n (w+W_{Y}) \right]. 
\end{align}
For this derivation, there exists $0\le \epsilon\le \Delta w$ or $\Delta w \le \epsilon\le 0$, such that 
\begin{align}
\nonumber & \frac{1}{\Delta w} \left(   J''_n (w+\Delta w+W_{Y}) -   J''_n (w+W_{Y}) \right) {\text{\large $\mathds{1}$}}_{|w+\Delta w + W_{Y}|>v_n,|w+W_{Y}|>v_n} \\ \nonumber = &   G'''_n (w+\epsilon+W_{Y}) {\text{\large $\mathds{1}$}}_{|w+\Delta w  +W_{Y}|>v_n,|w+W_{Y}|>v_n}, \\
\nonumber & \frac{1}{\Delta w}  \left(  J''_n (w+\Delta w+W_{Y}) -   J''_n (w+W_{Y}) \right) {\text{\large $\mathds{1}$}}_{|w+\Delta w + W_{Y}|<v_n,|w+W_{Y}|<v_n} \\ = & -4(w+\epsilon+W_Y)  {\text{\large $\mathds{1}$}}_{|w+\Delta w + W_{Y}|<v_n,|w+W_{Y}|<v_n}.
\end{align}
Recall that $G'''_n(\cdot)$ is continuous.
Applying the same analysis for $ \frac{d^3}{dw^3} \mathbb{E}_{W_{Y_2}} \left[ J_n(w+W_{Y}+ W_{Y_2})  \right] $ (to replace $\frac{d^3}{dw^3} \mathbb{E} \left[ J_n(w+W_{Y})  \right] $) into the proof of Lemma~\ref{derivative1}(a) described in Appendix~\ref{app-lemma4a}, both $ G'''_n(w+W_Y+\Delta w)$ and $J'''_n(w+W_Y+\Delta w)$ are bounded by a finite random variable. Therefore, using dominated convergence theorem (similar to \eqref{eq-der2a}),
\begin{align}
\nonumber &  \lim_{\Delta w \rightarrow 0}\mathbb{E} \left[  \frac{1}{\Delta w} \left(   J''_n (w+\Delta w+W_{Y}) -  J''_n (w+W_{Y}) \right) {\text{\large $\mathds{1}$}}_{|w+\Delta w + W_{Y}|>v_n,|w+W_{Y}|>v_n} \right] \\
\nonumber = & \mathbb{E} \left[  \lim_{\Delta w \rightarrow 0} G'''_n (w+\epsilon+W_{Y}) \times {\text{\large $\mathds{1}$}}_{|w+\Delta w  +W_{Y}|>v_n,|w+W_{Y}|>v_n}  \right] \\
\nonumber = &  \mathbb{E} \left[  G'''_n (w+W_{Y}) \times {\text{\large $\mathds{1}$}}_{|w+W_{Y}|>v_n}  \right],
 \\ \nonumber
&  \lim_{\Delta w \rightarrow 0}\mathbb{E} \left[  \frac{1}{\Delta w} \left(  J''_n (w+\Delta w+W_{Y}) -  J''_n (w+W_{Y}) \right) {\text{\large $\mathds{1}$}}_{|w+\Delta w + W_{Y}|<v_n,|w+W_{Y}|<v_n} \right] \\ = & \mathbb{E} \left[  -4(w+W_Y) {\text{\large $\mathds{1}$}}_{|w+W_{Y}|<v_n} \right]. \label{eq-g3a}
\end{align}
We then discuss the two remaining events.
If $\Delta w > 0$,
\begin{align}
\nonumber &  \frac{1}{\Delta w} \left(  J''_n (w+\Delta w+W_{Y}) -   J''_n (w+W_{Y}) \right) {\text{\large $\mathds{1}$}}_{|w+\Delta w + W_{Y}|>v_n,|w+W_{Y}| < v_n} \\ = & \left(   G''_n (v_n) + 2v^2_{n} - 2 \text{mse}_{\text{opt}} + o(\Delta w) \right)  \frac{1}{\Delta w}  {\text{\large $\mathds{1}$}}_{ v_n - w - \Delta w < W_Y < v_n - w }, \\
\nonumber &  \frac{1}{\Delta w} \left(   J''_n (w+\Delta w+W_{Y}) -   J''_n (w+W_{Y}) \right) {\text{\large $\mathds{1}$}}_{|w+\Delta w + W_{Y}|<v_n,|w+W_{Y}| > v_n} \\ = & - \left(  G''_n (-v_n) + 2v^2_{n} - 2 \text{mse}_{\text{opt}} + o(\Delta w) \right)  \frac{1}{\Delta w}  {\text{\large $\mathds{1}$}}_{ -v_n - w- \Delta w < W_Y < -v_n - w },
\end{align}
If $\Delta w < 0$,
\begin{align}
\nonumber &  \frac{1}{\Delta w} \left( J''_n (w+\Delta w+W_{Y}) -  J''_n (w+W_{Y}) \right) {\text{\large $\mathds{1}$}}_{|w+\Delta w + W_{Y}|>v_n,|w+W_{Y}| < v_n} \\ = & \left(   G''_n (- v_n) + 2v^2_{n} - 2 \text{mse}_{\text{opt}} + o(\Delta w) \right)  \frac{1}{\Delta w}  {\text{\large $\mathds{1}$}}_{ -v_n - w  < W_Y < v_n - w - \Delta w }, \\
\nonumber &  \frac{1}{\Delta w} \left(  J''_n (w+\Delta w+W_{Y}) -  J''_n (w+W_{Y}) \right) {\text{\large $\mathds{1}$}}_{|w+\Delta w + W_{Y}|<v_n,|w+W_{Y}| > v_n} \\ = & - \left(   G''_n (v_n) + 2v^2_{n} - 2 \text{mse}_{\text{opt}} + o(\Delta w) \right)  \frac{1}{\Delta w}  {\text{\large $\mathds{1}$}}_{ v_n - w < W_Y < v_n - w - \Delta w }. 
\end{align}
Therefore,
\begin{align}
\nonumber &  \lim_{\Delta w \rightarrow 0} \mathbb{E} \Big[ \frac{1}{\Delta w} \left(   J''_n (w+\Delta w+W_{Y}) -  J''_n (w+W_{Y}) \right) \\ \nonumber & \times \left( {\text{\large $\mathds{1}$}}_{|w+\Delta w + W_{Y}|>v_n,|w+W_{Y}| < v_n} +  {\text{\large $\mathds{1}$}}_{|w+\Delta w + W_{Y}|< v_n,|w+W_{Y}| > v_n} \right)   \Big]  \\
 & = - \left( G''_n (v_n) + 2v^2_{n} - 2 \text{mse}_{\text{opt}} \right) (f_{W_Y}(-v_n - w)  - f_{W_Y}(v_n - w)) \ge 0. \label{eq-remain}
\end{align} The last inequality of \eqref{eq-remain} holds due to the induction hypothesis of $G''_n(\cdot)$ and Lemma~\ref{lemma-prerequisite}.
Combining \eqref{eq-g3a} and \eqref{eq-remain},
for all $w \ge 0$, 
\begin{align}
\nonumber  G'''_{n+1}(w) = & 0+\alpha \mathbb{E} \left[ J'''_n(w+W_{Y}) {\text{\large $\mathds{1}$}}_{|w+W_{Y}|>v_n} +   J'''_n(w+W_{Y}) {\text{\large $\mathds{1}$}}_{|w+W_{Y}|<v_n} \right] \\ 
\nonumber & - \left( G''_n (v_n) + 2v^2_{n} - 2 \text{mse}_{\text{opt}} \right) (f_{W_Y}(-v_n - w)  - f_{W_Y}(v_n - w))\\
\nonumber \ge & \alpha \mathbb{E} \left[ J'''_n(w+W_{Y}) {\text{\large $\mathds{1}$}}_{|w+W_{Y}|>v_n} +   J'''_n(w+W_{Y}) {\text{\large $\mathds{1}$}}_{|w+W_{Y}|<v_n} \right] \\
\nonumber = & \alpha \mathbb{E} \left[   G'''_n(w+W_{Y}) {\text{\large $\mathds{1}$}}_{|w+W_{Y}|>v_n} -4(w+W_{Y}) {\text{\large $\mathds{1}$}}_{|w+W_{Y}|<v_n}  \right] \\
 = & \alpha \mathbb{E} \left[  \left( G'''_n(w+W_{Y})+4(w+W_Y) \right) {\text{\large $\mathds{1}$}}_{|w+W_{Y}|>v_n} -4(w+W_{Y}) {\text{\large $\mathds{1}$}}_{|w+W_{Y}|\ne v_n}  \right]. \label{eq-g3ninequality}
 \end{align}
 Note that $G'''_n(w)+4w$ is an odd function, and by hypothesis, $G'''_n(w)+4w\ge0$ for all $w \ge 0$. By Lemma~\ref{lemma-prerequisite}, $f_{w+W_Y}(x) \ge f_{w+W_Y}(-x)$ for all $w\ge 0$ and $x\ge 0$. Therefore,
 \begin{align}
\nonumber &  \mathbb{E} \left[  \left( G'''_n(w+W_{Y})+4(w+W_Y) \right) {\text{\large $\mathds{1}$}}_{|w+W_{Y}|>v_n} \right] \\ \nonumber  = & ( \int_{x>v_n} + \int_{x<- v_n} ) (G'''_n(x)+4x) f_{w+W_Y}(x) dx \\ \nonumber = & \int_{x>v_n} (G'''_n(x)+4x) f_{w+W_Y}(x) dx + \int_{x>v_n} (G'''_n(-x)-4x) f_{w+W_Y}(-x) dx \\ = &  \int_{x>v_n} (G'''_n(x)+4x) (f_{w+W_Y}(x) - f_{w+W_Y}(-x)) dx \ge 0. \label{eq-g3ninequality2}
 \end{align}
 Finally, \eqref{eq-g3ninequality} and \eqref{eq-g3ninequality2} give
 \begin{align}
\nonumber G'''_n(w) \ge & \alpha  \mathbb{E} \left[ -4(w+W_{Y}) {\text{\large $\mathds{1}$}}_{|w+W_{Y}| \ne v_n} \right] \\
\nonumber  = & \alpha  \mathbb{E} \left[ -4(w+W_{Y})  \right] \\
\nonumber = & -4\alpha w\\
\ge & -4w.
\end{align} The last inequality is strict if $w>0$. This ends the proof of Lemma~\ref{derivative1}(c).

Let us define $ f_{n+1}(w) = G'_{n+1}(w) + 2/3w^3-\text{mse}_{\text{opt}} w$ for simplicity. Then, $f_{n+1}''(w) = G'''_{n+1}(w)+4w\ge 0$ for $w\in[0,\infty)$. This implies that $f_{n+1}(w)$ is convex in $w\in[0,\infty)$ and strictly convex in $w>0$. By Lemma~\ref{derivative1}(a), $G'_{n+1}(w)$ is continuous and odd. Thus, $f_{n+1}(0)=0$. 
By the definition of free boundary method \eqref{eq-fbm3simple}, $J'_{n+1}(v_{n+1}-)=J'_{n+1}(v_{n+1}+)=G'_{n+1}(v_{n+1})$. Thus, $f_{n+1}(v_{n+1})=0$. Therefore, we have $f'(v_{n+1}) = G''_{n+1}(v_{n+1}) + 2v_{n+1}^2-\text{mse}_{\text{opt}}  \ge 0$ and $f'(w) > 0$ for all $w > v_{n+1}$, and $f'(w) < 0$ for $w\in (0,v_{n+1})$. This ends the proof of Lemma~\ref{derivative1}(b).

\subsection{Proof of Lemma~\ref{derivative1}(d)}\label{app-lemmaderivative1d}

Now we show Lemma~\ref{derivative1}(d). 
We now use induction to show that $v_n \le \sqrt{3 \text{mse}_{\text{opt}} } $ for all $n=1,2,\cdots$.   
 Note that $v_1 = \sqrt{3\beta} = \sqrt{3(\text{mse}_{\text{opt}} - \mathbb{E} \left[ Y \right])}\le \sqrt{3\text{mse}_{\text{opt}}}$.  
The second threshold $v_2$ is the root of  
 \begin{align}
 0 = & \frac{2}{3} w^3 - 2\beta w + \alpha  \mathbb{E} \left[  \partial_x g( w+W_{Y_1},v_1  )    \right]. \label{eq-v2}
 \end{align}
Note that $2 \text{mse}_{\text{opt}} w - \frac{2}{3} w^3$ is positive at $0\le w \le \sqrt{3 \text{mse}_{\text{opt}} }$. Therefore, if $v \le \sqrt{3 \text{mse}_{\text{opt}} }$, $\partial_x g(w,v)$ is always positive at $w \ge 0$.
Since $v_1 \le \sqrt{3\text{mse}_{\text{opt}}}$, $\partial_x g(w,v_1) \ge 0$ for all $w \ge 0$. Recall that $\partial_x g(w,v)$ is an odd function on $w$ for any $v \ge 0$. 
Therefore, utilizing the same analysis as \eqref{eq-g3ninequality2}, for all $w \ge 0$,
\begin{align}
  \mathbb{E} \left[  \partial_x g( w+W_{Y_1},v_1  )    \right] \ge 0.
\end{align}
The first term $ \frac{2}{3} w^3 - 2\beta w >0$ for all $w \ge \sqrt{3\text{mse}_{\text{opt}}}$. To keep the equation \eqref{eq-v2} holds, we have $v_2 \le \sqrt{3\text{mse}_{\text{opt}}} $.

Suppose that $v_2,\cdots, v_n \le \sqrt{3\text{mse}_{\text{opt}}}$.
Now, we will show that $v_{n+1} \le \sqrt{3\text{mse}_{\text{opt}}}$. Note that $v_{n+1}$ is the root of  
 \begin{align}
\nonumber
0 = & \frac{2}{3} w^3 - 2\beta w + \alpha  \mathbb{E} \left[  \partial_x g( w+W_{Y_1},v_n  )    \right] \\ & +
\sum_{k=2}^{n} \alpha^k \mathbb{E} \left[  \partial_x g( w+W_{Y_1}+\cdots W_{Y_{k}}, v_{n+1-k} ) {\text{\large $\mathds{1}$}}_{ \left\{ |w+ Y_1| \ge v_n, \cdots, |w+W_{Y_1}+\cdots+W_{Y_{k-1}} | \ge v_{n+2-k}\right\} }    \right].
\end{align}
Since the hypothesis tells that $v_n \le \sqrt{3\text{mse}_{\text{opt}}}$, $ \mathbb{E} \left[  \partial_x g( w+W_{Y_1},v_n  )    \right] \ge 0$.
To show that $v_{n+1} \le \sqrt{3\text{mse}_{\text{opt}}} $, it is sufficient to show that for $k=2,\cdots,n$, 
\begin{align}
& \mathbb{E} \left[  \partial_x g( w+W_{Y_1}+\cdots W_{Y_{k}}, v_{n+1-k} ) {\text{\large $\mathds{1}$}}_{ \left\{ |w+ Y_1| \ge v_n, \cdots, |w+W_{Y_1}+\cdots+W_{Y_{k-1}} | \ge v_{n+2-k}\right\} }    \right] \ge 0. \label{eq-der1single>0}
\end{align}
Since $v_{n+1-k} \le   \sqrt{3\text{mse}_{\text{opt}}}$, $\partial_x g(w,v_{n+1-k})\ge 0$ for all $w \ge 0$. Therefore, the inequality \eqref{eq-der1single>0} is shown by Lemma~\ref{lemma-prerequisite}(c) and that $\partial_x g(w,v) $ is an odd function for any $v\ge0$.

Now, we will jointly show that $G_{n+1}'(w) \ge G_{n}'(w)$, and $v_{n+1} \le v_n$. First, $v_1 = \sqrt{3\beta}, G'_1(w) = 2  \mathbb{E} \left[ Y  \right] w, G'_2 (w) = G'_1(w) + \alpha  \mathbb{E} \left[ \partial_x g(w+W_Y, v_1)   \right]$. Since $v_1 \le \sqrt{3\text{mse}_{\text{opt}}}$, $ \mathbb{E} \left[ \partial_x g(w+W_Y, v_1)   \right] \ge 0$, and we directly have $G'_2(w) \ge G'_1(w)$. For simplicity, let us define $f_n(w)$ as
\begin{align}
f_{n}(w)=G'_{n}(w) -( -\frac{2}{3}w^3+2 \text{mse}_{\text{opt}} w).
\end{align}
If $v_{2} > v_1$, then we have $f_1(v_{2}) \le f_{2}(v_{2})=0$, which contradicts to $f_1(w) >0$ for $w > v_1$. Therefore, $v_{2} \le v_1$. 

Then, suppose that $G'_n(w) \ge G'_{n-1}(w)$ for $w \ge 0$, and $v_n \le v_{n-1}$. We have 
\begin{align}
\nonumber G'_{n+1}(w) - G'_n(w) & = \alpha  \mathbb{E} \left[ J'_n(w+W_Y) - J_{n-1}'(w+W_Y)  \right]\\
\nonumber & =  \alpha  \mathbb{E} \big{[} G'_n(w+W_Y) {\text{\large $\mathds{1}$}}_{|w+W_Y |\ge v_n} - G'_{n-1}(w+W_Y) {\text{\large $\mathds{1}$}}_{|w+W_Y |\ge v_{n-1}} \\ & + J'_n(w+W_Y) {\text{\large $\mathds{1}$}}_{|w+W_Y | < v_n} - J'_{n-1}(w+W_Y) {\text{\large $\mathds{1}$}}_{|w+W_Y | < v_{n-1}} \big]. \label{eq-179}
\end{align} Note that $G'_n (w)$ is odd.
If $|w+W_Y| \ge v_{n-1}$, utilizing the same analysis as \eqref{eq-g3ninequality2}, we have 
\begin{align}
\nonumber  & \mathbb{E} \left[ (J'_n(w+W_Y) - J_{n-1}'(w+W_Y)) {\text{\large $\mathds{1}$}}_{|w+W_Y | \ge v_{n-1}}   \right]  \\ = &
 \mathbb{E} \left[  (G'_n(w+W_Y) - G'_{n-1}(w+W_Y)) {\text{\large $\mathds{1}$}}_{|w+W_Y | \ge v_{n-1}} \right] \ge 0. \label{eq-180-1}
\end{align}
If $|w+W_Y| < v_{n}$, we have $J'_n(w+W_Y) = 2\text{mse}_{\text{opt}} (w+W_Y) - 2/3 (w+W_Y)^3 $, irrelevant to $n$. Thus
\begin{align}
\nonumber & \mathbb{E} \left[ (J'_n(w+W_Y) - J_{n-1}'(w+W_Y)) {\text{\large $\mathds{1}$}}_{|w+W_Y | < v_{n}}   \right] \\ = & \mathbb{E} \left[
J'_n(w+W_Y) {\text{\large $\mathds{1}$}}_{|w+W_Y | < v_n} - J'_{n-1}(w+W_Y) {\text{\large $\mathds{1}$}}_{|w+W_Y | < v_{n}} \right]= 0. \label{eq-180-2}
\end{align}
If $v_n \le |w+W_Y| < v_{n-1}$, 
\begin{align}
\nonumber & \mathbb{E} \left[ (J'_n(w+W_Y) - J_{n-1}'(w+W_Y)) {\text{\large $\mathds{1}$}}_{v_{n} \le |w+W_Y | < v_{n-1}}   \right] \\ \nonumber = &  \mathbb{E} \left[ (G'_n(w+W_Y) - J'_{n-1}(w+W_Y)) {\text{\large $\mathds{1}$}}_{v_{n} \le |w+W_Y | < v_{n-1}}   \right] \\ = &  \mathbb{E} \left[ (G'_n(w+W_Y) -  2\text{mse}_{\text{opt}} (w+W_Y) + \frac{2}{3} (w+W_Y)^3 ) {\text{\large $\mathds{1}$}}_{v_{n} \le |w+W_Y | < v_{n-1}}   \right] \ge 0. \label{eq-180-3}
\end{align}
The last inequality holds because $f_n(x)$ is an odd function and non-negative for $x \ge v_n$. Inserting \eqref{eq-180-1},\eqref{eq-180-2},\eqref{eq-180-3} into \eqref{eq-179}, we finally have $G'_{n+1}(w) \ge G'_n(w)$ for $w\ge 0$. 

Recall that the free boundary method \eqref{eq-fbm3simple} implies that $v_{n+1}$ is the root of $f_{n+1}'(w)=G'_{n+1}(w) -( -\frac{2}{3}w^3+2 \text{mse}_{\text{opt}} w)=0$. Note that $f_{n}'(w) = G'_{n}(w) -( -\frac{2}{3}w^3+2 \text{mse}_{\text{opt}} w) \le f_{n+1}'(w)  $. If $v_{n+1} > v_n$, then we have $f_n'(v_{n+1}) \le f_{n+1}'(v_{n+1})=0$, which contradicts to $f_n'(w) >0$ for $w > v_n$. Therefore, $v_{n+1} \le v_n$, and we have that $\{v_n\}_n$ is decreasing. 

\section{Proof of Lemma~\ref{jsmallg}}\label{appjsmallg}

By \eqref{eq-fbm2}, $ \tilde{J}_n(w,q) = \tilde{G}_n(w,q)$ if $|w|> v_n$. It remains to show that $ \tilde{J}_n(w,q) \le \tilde{G}_n(w,q)$ for $|w| \le v_n$ (by symmetry, we will assume $w\ge0$).

Define $f(w)\triangleq \tilde{G}_n(w,q) - \tilde{J}_n(w,q) = G_n(w) - J_n(w)$. It is easy to see that $f(v_n)=0$, and $f(w)$ is not a function of $y$. 

By Lemma~\ref{derivative1}, we have shown that $J'_n(w) =  -2/3w^3+2 \text{mse}_{\text{opt}} w, J''_n(w) =  -2w^2+2 \text{mse}_{\text{opt}}, J'''_n(w) =  -4w $ if $|w|\le v$.
Therefore, by Lemma~\ref{derivative1}(c), $f'''(w)\ge 0$ for $w\in[0,v]$. This implies that $f'(w)$ is convex in $w\in[0,v_n]$. Since $f'(0)=f'(v_n)=0$, we have $f'(w)\le 0$ for $w\in[0,v_n]$. Note that $f(v_n)=0$. So $f(w)$ is non-increasing in $w\in[0,v_n]$, and thus $f(w) \ge 0$ for $w\in[0,v_n]$. This implies that $\tilde{G}(w,q) - \tilde{J}(w,q) \ge 0$, which completes our proof.

\section{Proof of Lemma~\ref{j-is-excessive}}\label{appj-is-excessive}

$\tilde{J}_n(w,q)$ is continuously differentiable, and twice condinuously differentiable except at $(\pm v_n,q)$. However, since the Lebesgue measure of reaching $(\pm v_n,q)$ is zero, the values $ \frac{\partial^2}{\partial w^2}  \tilde{J}_n(\pm w,q)$ can be chosen in the sequel arbitrary \cite[Section~10]{peskir2006optimal}. 

In Lemma~\ref{derivative1}, it is easy to see that $ \partial_x \tilde{J}_n(w,q) = J' (w)$, not a function of $q$, and $ \partial_x  \tilde{J}_n(w,q) = O(w)$. Therefore, for any given time $t$, 
\begin{align}
\mathbb{E} \left[  \int_0^t \left[ \partial_x  \tilde{J}_n(w+W_r, q+Q_r) \right]^2  dt \right] <\infty.
\end{align}
The integral $\int_0^t W_r^2 dr$ is increasing in $t$. Using It\^o's formula \cite[Theorem~7.14]{morters2010brownian}, almost surely, 
\begin{align}
\nonumber  & \tilde{J}_n(w+W_t,q+Q_t) -  \tilde{J}_n(w,q) \\ = &  \int_0^t  (w+W_r)^2 - \text{mse}_{\text{opt}}+ \frac{1}{2} \partial_{xx} \tilde{J}_{n}(w+W_r, q+Q_r) dr +  \int_0^t  \partial_x  \tilde{J}_n(w+W_r,q+Q_r) dW_r.
\end{align}
By \cite[Theorem~7.11]{morters2010brownian}, the process $ \int_0^t \partial_x  \tilde{J}_n(w+W_r,q+Q_r)  dW_r$ is a martingale and thus
\begin{align}
\mathbb{E} \left[  \int_0^t \partial_x \tilde{J}_n(w+W_r,q+Q_r)  dW_r  \right] = 0.
\end{align}
Therefore, 
\begin{align}
 \mathbb{E} \left[  \tilde{J}_n(w+W_t,q+Q_t)  \right]  -  \tilde{J}_n(w,q)  &  =  \mathbb{E} \left[  \int_0^t  (w+W_r)^2 - \text{mse}_{\text{opt}}+\frac{1}{2}\partial_{xx}  \tilde{J}_{n}(w+W_r,q+Q_r) dr \right]. \label{eq-ito}
\end{align}
If $|w+W_r|<v_n$, according to Lemma~\ref{derivative1}(a) we have $\partial_x  \tilde{J}_{n}(w+W_r, q+Q_r) = -2/3 (w+W_r)^3 + 2\text{mse}_{\text{opt}} (w+W_r)$, and $\partial_{xx}  \tilde{J}_{n}(w+W_r, q+Q_r) = -2 (w+W_r)^2 + 2\text{mse}_{\text{opt}}$ (correspond to the first equation of free boundary method \eqref{eq-fbm1}). Therefore,  
\begin{align}
 (w+W_r)^2 - \text{mse}_{\text{opt}}+\partial_{xx} \tilde{J}_{n}(w+W_r, q+Q_r) =0.
\end{align}
If $|w+W_r| \ge v_n$, according to Lemma~\ref{derivative1}(a),(b), we get 
\begin{align}
\nonumber & 2((w+W_r)^2 - \text{mse}_{\text{opt}})+\partial_{xx}  \tilde{J}_{n}(w+W_r, q+Q_r) \\
= &  2((w+W_r)^2 - \text{mse}_{\text{opt}})+ G_n''(w+W_r) \ge 0.
\end{align} 
Applying to \eqref{eq-ito}, we get $\mathbb{E}^{(w,y)}[\tilde{J}_n(w+W_t,q+Q_t)]\ge \tilde{J}_n(w,q)$. This ends our proof.

\section{Proof of Lemma~\ref{lemma-contraction}}\label{app-lemma-contraction}
Note that for simplicity, we have set $X_\tau = w+ W_\tau $ as a Wiener process that starts from $X_0=w$. When $w^2 < \bar{b}$, 
\begin{align}
\nonumber \frac{v(X_\tau+W_Y)}{v(w)} = & \frac{1}{\bar{b}} \max\{(X_\tau+W_Y)^2,\bar{b}\} = \max\{ \frac{X_\tau^2}{\bar{b}} + \frac{2X_\tau W_Y}{\bar{b}} + \frac{W_Y^2}{\bar{b}},1 \} \\ \nonumber \le & \max\{ 1+\frac{2\sqrt{\bar{b}}}{\bar{b}} |W_Y| + \frac{W_Y^2}{\bar{b}},1\} \\ = & 1+\frac{2 |W_Y|}{\sqrt{\bar{b}}}  + \frac{W_Y^2}{\bar{b}}.
\end{align}
When $w^2 < \bar{b}$, $X_\tau=w$, and we have 
\begin{align}
\nonumber  \frac{v(X_\tau+W_Y)}{v(w)} = & \frac{1}{w^2} \max\{(w+W_Y)^2,\bar{b}\} = \max\{ 1 + \frac{2 W_Y}{w} + \frac{W_Y^2}{w^2},\frac{\bar{b}}{w^2} \} \\ \nonumber \le & \max\{ 1+\frac{2}{\sqrt{\bar{b}}} |W_Y| + \frac{W_Y^2}{\bar{b}},1\} \\ = & 1+\frac{2 |W_Y|}{\sqrt{\bar{b}}}  + \frac{W_Y^2}{\bar{b}}.
\end{align}
This ends the proof of Lemma~\ref{lemma-contraction}(a). 

Note that $g(w; \tau)$ is bounded in $w\in [-\bar{b},\bar{b}]$, and $g(w; \tau) = 2\mathbb{E} \left[ Y \right] w^2 + const$ for $|w|^2 \ge \bar{b}$. Therefore, there exists $k>0$ such that $\|g(w;\tau)\| \le k$. Recall that we denote $\tilde{W}_{n+1}$ as the state value at $n+1$th stage with $\tilde{W}_1 = w$, and $\tilde{W}_{n+1} = \tilde{W}_{n}+W_{\tau'} + W_Y$ for a stopping time $\tau'$. Then,
\begin{align}
\nonumber \mathbb{E} \left[ g(W_{n+1};\tau) \right] = & \mathbb{E} \left[ g(\tilde{W}_{n}+W_{\tau'} + W_Y;\tau) \right]
\le  k \mathbb{E} \left[ v(\tilde{W}_{n}+W_{\tau'} + W_Y) \right] \\
\nonumber \le & k \frac{\rho}{\alpha} \mathbb{E} \left[ v(\tilde{W}_n) \right]  
 \le  k \frac{\rho^2}{\alpha^2}  \mathbb{E} \left[ v(\tilde{W}_{n-1}) \right] \cdots \\
\le & k \frac{\rho^n}{\alpha^n} v(w). 
\end{align}

We have shown that each optimal stopping times for solving $T(T^n 0(w))$ are some hitting times with bounded and convergent thresholds, so each stopping time belongs to the assumption of Lemma~\ref{lemma-contraction}(a). We have 
\begin{align}
\nonumber T^n 0(w) = & \sum_{k=1}^{n} \alpha^{k-1}  \mathbb{E} \left[ g(\tilde{W}_k;\tau_k) \right]
\le   \sum_{j=1}^{n} \rho^k v(w) \le \frac{1}{1-\rho} v(w),\\
T^\infty 0(w) \le &  \sum_{k=1}^{\infty} \rho^k v(w) \le \frac{1}{1-\rho} v(w).
\end{align} 
Thus, both $\|T^n 0\|$ and $\| T^\infty 0\|$ are finite. For any stopping time $\tau$ within the assumption of Lemma~\ref{lemma-contraction}(a), 
\begin{align}
\nonumber T_\tau T^n 0(w) - T_\tau T^{n-1} 0(w) = & \alpha \mathbb{E} \left[ T^n 0(X_\tau + W_Y) - T^{n-1} (X_\tau + W_Y) \right]  \\
\nonumber = &  \mathbb{E} \left[ \frac{ T^n 0(X_\tau + W_Y) - T^{n-1} (X_\tau + W_Y)}{v(X_\tau + W_Y)} v(X_\tau + W_Y) \right] \\
\nonumber \le & \alpha \| T^n0 - T^{n-1} 0 \| \times \mathbb{E} \left[  v(X_\tau + W_Y)  \right] \\
\le & \rho v(w) \| T^n0 - T^{n-1} 0 \|.
\end{align}
This gives that 
\begin{align}
\frac{T_\tau T^n 0(w)}{v(w)} \le \rho \| T^n 0 - T^{n-1} 0 \| + \frac{T_\tau T^{n-1} 0(w)}{v(w)}.\label{eq-contraction1a}
\end{align}
Take the minimum for left and right side of \eqref{eq-contraction1a} over all the stopping times $\tau$ with bounded threshold $\bar{b}$, then
\begin{align}
\frac{T^{n+1} 0(w)}{v(w)} \le \rho \| T^n 0 - T^{n-1} 0 \| + \frac{T^{n} 0(w)}{v(w)}.\label{eq-contraction1}
\end{align} By symmetry, 
\begin{align}
\frac{T^{n} 0(w)}{v(w)} \le \rho \| T^n 0 - T^{n-1} 0 \| + \frac{T^{n+1} 0(w)}{v(w)}.\label{eq-contraction1-5}
\end{align}
Therefore,
\begin{align}
 \|T^{n+1} 0 - T^n 0\| \le \rho \|T^n 0 - T^{n-1} 0\|\cdots \le \rho^n \|T 0\|.\label{eq-contraction2}
 \end{align}
This completes the proof of Lemma~\ref{lemma-contraction}(b).  
Due to \eqref{eq-contraction2}, the sequence $\{ T^n 0(w)\}$ is a Cauchy sequence, and thus $T^n 0(w)$ converges pointwise to $T^\infty 0(w)$, which is also measurable, and we have shown that $\| T^\infty 0(w) \| <\infty$. Therefore, using \cite[pp. 47]{bertsekas1995dynamic2}, $\| T^n 0 - T^\infty 0\| \rightarrow 0$. We replace $T^n 0(w)$ by $T^\infty 0(w)$ in \eqref{eq-contraction1a} and use symmetry, we then find that\footnote{Here, we do not require $T T^\infty 0(w)$ to be measurable.} 
\begin{align}
\| T T^\infty 0 - T^n 0 \| \le \rho \| T^\infty 0 - T^{n-1} 0\| \rightarrow 0. 
\end{align}
Therefore, $J^* = T^\infty 0$ is the solution to the Bellman equation $T J^* = J^*$, and the $\rho-$convergence rate is immediately given. The solution is unique: If there exists any other measurable function $\tilde{J}(w)$ that satisfies the Bellman equation (with $\| \tilde{J}(w) \|<\infty$), we replace $T^n 0(w)$ by $T^\infty 0(w)$ and replace $T^{n-1}0$ by $\tilde{J}(w)$ in \eqref{eq-contraction1a}, and we have
\begin{align}
\| T^\infty 0 - \tilde{J} \| = \| T T^\infty 0 - T \tilde{J} \| \le \rho \| T^\infty 0 - \tilde{J} \|,
\end{align} which implies that $T^\infty 0 = \tilde{J}$. These completes the proof of Lemma~\ref{lemma-contraction}(c).

\section{Proof of Theorem~\ref{thm2}}\label{appprop-thm2}

We denote $\Pi_{j,\text{signal-agnostic}} \subset \Pi_{\text{signal-agnostic}}$ as a collection of sampling times $S_{j,1},S_{j,2},\cdots$ at $j$th epoch such that the inter-sampling times $S_{j,1} - S_{j-1,M_{j-1}},S_{j,2}  - S_{j-1,M_{j-1}},\ldots$ are independent of the history information before $S_{j-1,M_{j-1}}$. Note that the subscripts $(j,1),(j,2),\ldots$ are illustrated in Section~\ref{proof-preliminary}.

Similar to Proposition~\ref{prop-aware}, we have the following result:

\begin{proposition}\label{prop-agnostic}
There exists an optimal policy in $\Pi_{j,\text{signal-agnostic}}$ such that $\{ S_{j,M_{j}} - S_{j-1,M_{j-1}} \}_j$ are i.i.d. Moreover, problem~\eqref{avg-aware} when $\Pi = \Pi_{\text{signal-agnostic}}$ is equivalent to the following problem:
\begin{align}
\text{mse}_{\text{opt}} = & \inf_{\pi \in \Pi_{j,\text{signal-agnostic}}} \frac{ \mathbb{E} \left[  \int_{D_{j-1,M_{j-1}}}^{D_{j,M_j}} \Delta_t dt \right]}{ \mathbb{E} \left[   D_{j,M_j} - D_{j-1,M_{j-1}} \right] }. \label{eq-singlepoch-age}
\end{align} 
\end{proposition}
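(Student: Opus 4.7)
The plan is to mimic the argument used for Proposition~\ref{prop-aware} in Appendix~\ref{appprop-aware}, exploiting the extra simplification provided by signal-agnostic sampling. First, since $\pi\in\Pi_{\text{signal-agnostic}}$ is independent of $\{W_t\}$, Eq.~\eqref{eq-age-def} gives $\mathbb{E}[(W_t-\hat W_t)^2]=\Delta_t$ for every $t$. Therefore the long-term average MSE in \eqref{avg-aware} can be replaced, term by term in expectation, by the long-term average age. This reduces \eqref{avg-aware} to an age-minimization problem having exactly the same combinatorial structure (epochs delimited by successful deliveries, geometric number of attempts $M_j$) as the one treated in Appendix~\ref{appprop-aware}, so the renewal-style lower-bound argument carries over.

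Next I would set $R_j\triangleq\int_{D_{j-1,M_{j-1}}}^{D_{j,M_j}}\Delta_t\,dt$ and $l^e_j\triangleq S_{j,M_j}-S_{j-1,M_{j-1}}$, and sandwich the cumulative integral by $\sum_{j\le N(T)-1}R_j$ and $\sum_{j\le N(T)+1}R_j$ exactly as in Appendix~\ref{appprop-aware}. The key intermediate step is an analogue of Lemma~\ref{lemma-tail0}, namely $\mathbb{E}[R_{N(T)}+R_{N(T)+1}]/T\to 0$. For this, I would use the upper-bounding stopping time $\tilde l^e_j$ guaranteed by the definition of $\Pi_{j,\text{signal-agnostic}}$: on epoch $j$, $\Delta_t\le(t-D_{j-2,M_{j-2}})\le \tilde l^e_{j-1}+\tilde l^e_j+Y_{j-1,M_{j-1}}+Y_{j,M_j}$, so $R_j$ admits a dominating random variable $\tilde R_j$ whose mean is finite (thanks to $\mathbb{E}[(l^e_j)^2]<\infty$ and $\mathbb{E}[Y^2]<\infty$). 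The same renewal-theoretic estimate used at the end of Appendix~\ref{apptail0} (citing \cite{wu2017optimal}) then gives the vanishing-residual property.

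Having disposed of the residual terms, I would apply the nested-expectation and fractional-programming step of Appendix~\ref{appprop-aware} verbatim. Denoting $\mathcal{H}_{j-1}$ the history up to $S_{j-1,M_{j-1}}$, the indicator $\{j\le N_T\}$ is measurable with respect to $\mathcal{H}_{j-1}$, so
\begin{align*}
\mathbb{E}\!\left[R_j\,\mathds{1}_{j\le N_T}\right]
=\sum_{m=1}^\infty\alpha^{m-1}(1-\alpha)\,\mathbb{E}_{\mathcal{H}_{j-1}}\!\left[\mathbb{E}[R_{j,m}\mid\mathcal{H}_{j-1}]\,\mathds{1}_{j\le N_T}\right],
\end{align*}
and an identical identity holds for $\mathbb{E}[l^e_j\,\mathds{1}_{j\le N_T}]$. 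Pulling out the conditional ratio $R^\ast(\mathcal{H}_{j-1})=\inf_\pi \sum_m\alpha^{m-1}(1-\alpha)\mathbb{E}[R_{j,m}\mid\mathcal{H}_{j-1}]/\sum_m\alpha^{m-1}(1-\alpha)\mathbb{E}[l_{j,m}\mid\mathcal{H}_{j-1}]$ yields $\mathbb{E}[\sum_j R_j\mathds{1}_{j\le N_T}]\ge T\,R_{\min}$, and $R_{\min}$ is attained by any policy that depends only on data after $S_{j-1,M_{j-1}}$, i.e., a policy in $\Pi_{j,\text{signal-agnostic}}$ producing i.i.d. epoch lengths. Combined with the sandwich bound and the vanishing residual, this gives both the existence of an i.i.d.-epoch optimal policy and the ratio reformulation \eqref{eq-singlepoch-age}.

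The main obstacle I anticipate is the residual estimate $\mathbb{E}[R_{N(T)}+R_{N(T)+1}]/T\to 0$. Unlike the signal-aware case, where the martingale identity $\mathbb{E}[\int_0^\tau W_t^2\,dt]=\mathbb{E}[W_\tau^4]/6$ and the hypothesis $\mathbb{E}[W^4_{\tilde\tau}]<\infty$ gave finite bounds effortlessly, here the age grows linearly in $t$ on an epoch whose length is only assumed to have a finite second moment; consequently $R_j$ is roughly quadratic in the epoch length, and one must verify $\mathbb{E}[\tilde R_j]<\infty$ carefully from the bound on $\tilde l^e_j$. Once that integrability is secured, the rest of the argument follows the template of Appendix~\ref{appprop-aware} with only cosmetic changes.
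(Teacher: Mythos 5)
Your proposal is correct and follows essentially the same route as the paper: the paper simply observes that $\Pi_{j,\text{signal-agnostic}}$ is contained in the signal-aware class, so the proof of Proposition~\ref{prop-aware} applies verbatim and \eqref{eq-age-def} converts the squared-error numerator into the age integral, omitting the details you reconstruct. Your extra care with the residual term is fine but not needed, since for signal-agnostic policies $\mathbb{E}[W_A^4]=3\mathbb{E}[A^2]<\infty$ already puts you in the setting of Lemma~\ref{lemma-tail0} as proved in Appendix~\ref{apptail0}.
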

The proof of Proposition~\ref{prop-agnostic} is a special case of (thus included in) the proof of Proposision~\eqref{prop-aware} and is omitted. Problem~\eqref{eq-singlepoch-age} has a much simpler form to Problem~\eqref{eq-singlepoch-mse} because (i) the sampling times are independent of the Wiener process, and (ii) we replace the square estimation error $(W_t - \hat{W}_t)^2$ by the linear age $\Delta_t$, the time period between $t$ and the sampling time $S_{j-1,M_{j-1}}$. By \eqref{eq-singlepoch-age} and \cite[Section~V.B]{pan2022optimal}, we complete the proof of Theorem~\ref{thm2}.

\end{document}
\endinput